\theoremstyle{plain}
\newtheorem{theorem}{Theorem}[section] 
\newtheorem{lemma}[theorem]{Lemma} 
\newtheorem{corollary}[theorem]{Corollary}
\newtheorem{remark}[theorem]{Remark}
\newtheorem{example}[theorem]{Example}
\newtheorem{proposition}[theorem]{Proposition}
\numberwithin{equation}{section} 
\newcommand{\Cov}{\mathbb{C}\mathrm{ov}}
\newcommand{\e}{\mathrm{e}}
\newcommand{\dif}{\mathrm{d}}
\newcommand{\leb}{\mathrm{L}}
\title[\hfill\protect\parbox{0.95\linewidth}{Approximation of the probability density function of the ran- \\domized heat equation with non-homogeneous boundary conditions}]{Approximation of the probability density function of the randomized heat equation with non-homogeneous boundary conditions}
\author{J. Calatayud, J.-C. Cort\'{e}s, M. Jornet} 
\begin{document}

\maketitle

\begin{center}
\noindent
\address{Instituto Universitario de Matem\'{a}tica Multidisciplinar,\\
Universitat Polit\`{e}cnica de Val\`{e}ncia,\\
Camino de Vera s/n, 46022, Valencia, Spain\\
email: jucagre@alumni.uv.es; jccortes@imm.upv.es; marjorsa@doctor.upv.es
}
\end{center}

\begin{abstract}

This paper deals with the randomized heat equation defined on a general bounded interval $[L_1,L_2]$ and with non-homogeneous boundary conditions. The solution is a stochastic process that can be related, via changes of variable, with the solution stochastic process of the random heat equation defined on $[0,1]$ with homogeneous boundary conditions. Results in the extant literature establish conditions under which the probability density function of the solution process to the random heat equation on $[0,1]$ with homogeneous boundary conditions can be approximated. Via the changes of variable and the Random Variable Transformation technique, we set mild conditions under which the probability density function of the solution process to the random heat equation on a general bounded interval $[L_1,L_2]$ and with non-homogeneous boundary conditions can be approximated uniformly or pointwise. Furthermore, we provide sufficient conditions in order that the expectation and the variance of the solution stochastic process can be computed from the proposed approximations of the probability density function. Numerical examples are performed in the case that the initial condition process has a certain Karhunen-Loève expansion, being Gaussian and non-Gaussian. \\
\\
\textit{Keywords:} Stochastic calculus, Random heat equation, Non-homogeneous boundary conditions, Random Variable Transformation technique, Karhunen-Lo\`{e}ve expansion, Probability density function, Numerical simulations. \\
\\
\textit{AMS 2010:} 34F05, 60H35, 65Z05, 60H15, 93E03.
\end{abstract}

\section{Introduction and motivation}\label{sec_introduccion_motivacion}

It is well-known that the heat equation plays a key role to describe mathematically diffusion processes. Due to  heterogeneity and impurities in materials and  errors in the temperature measurements, many authors have proposed to treat the diffusion coefficient, initial and/or boundary conditions in the heat equation  as random variables and/or stochastic processes rather than deterministic constants and functions, respectively. This approach leads to stochastic and random heat equation formulation \cite[pp.~96--97]{llibre_smith}. In the former case, the stochastic heat differential equation is forced by an irregular  stochastic process such as a Brownian White Noise process \cite[Ch.~4]{Oksendal_otros}. These kind of equations are typically written in terms of stochastic differentials and interpreted as It\^{o} or Stratonovich integrals. Special stochastic calculus is usually applied to obtain exact or approximate solutions to this class of differential equations \cite{Oksendal_solo, Oksendal_otros, Kloeden_Platen}.    

In \cite{Xu_APM_2014}, a new stochastic analysis for steady and transient one-dimensional heat conduction problem based on the homogenization approach is proposed. Thermal conductivity is assumed to be a random field depending on a finite number random variables.  Both mean and variance of stochastic solutions are obtained analytically for the field consisting of independent identically distributed random variables.  In \cite{Chiba_APM_2009}, the stochastic temperature field is analyzed by considering the annular disc to be multi-layered with spatially constant material properties and spatially constant but random heat transfer coefficients in each layer. A type of integral transform method together with a perturbation technique are employed in order to obtain the analytical solutions for the statistics (mean and variance) of the temperature. Another fruitful approach to deal with different formulations of random heat equations is the Mean Square Calculus \cite[Ch.~4]{Soong}. In \cite{CCCJ_APM_2014}, an analytic-numerical mean square solution of the random diffusion model in an infinite medium is constructed by applying the random exponential Fourier integral transform. A complementary analysis, based on random trigonometric Fourier integral transforms, to solve random partial differential heat problems with non-homogeneous boundary value conditions has been  presented in \cite{CCJ_JCAM_2016}. In these two latter contributions, reliable approximations for the mean and the variance of the solution stochastic process are provided. Likewise asymptotic-preserving methods for random hyperbolic, transport equations and  radiative heat transfer equations with random inputs and diffusive scalings have been recently studied using generalized polynomial chaos based stochastic Galerkin method \cite{Jin_Xiu_Zhu_JCP_2015, Jin_Lu_JCP_2017}. The probabilistic information to the solution stochastic process of the random heat equation in all the aforementioned contributions focus on first statistical moments like the mean and the variance functions. Nevertheless, a more ambitious target is to determine exact or reliable approximations of the first probability density function of the solution stochastic process, since from it all one-dimensional statistical  moments can be obtained, if they exist. In particular, the mean and the variance can be straightforwardly derived via integration of the first probability density function \cite[Ch.~3]{Soong}. In the context of random partial differential equations, some recent contributions addressing this significant problem include, for example, \cite{Dorini_JCP, Selim_AMC_2011,  Selim_EPJP_2015, Xu_AMM_2016} (see also references therein). From a general standpoint, this paper is aimed to contribute further the study of methods to  determine rigorous approximations of the first probability density function of random partial differential equations focusing on the random heat equation. It is important to point out that the subsequent analysis is  based upon our previous contribution  \cite{jjcm}.
 
In \cite{jjcm}, we have studied the randomized heat equation on the spatial domain $[0,1]$  with homogeneous boundary conditions and assuming that the diffusion coefficient is a positive random variable and that the initial condition is a stochastic process. In a first step, the solution stochastic process of that stochastic problem  was rigorously constructed using two different approaches, namely, the Sample  Calculus and the Mean Square Calculus. The second, and main step, consisted of constructing approximations of the probability density function of the solution by combining the Random Variable Transformation technique and the Karhunen-Lo\`{e}ve expansion. Several results providing sufficient conditions to guarantee the pointwise and uniform convergence of these approximations were established. The aim of this contribution is to extend the study to the case where boundary conditions are random variables and assuming that the problem is stated on an arbitrary interval, say $[L_1,L_2]$. Since this extension depends heavily on some results established in \cite{jjcm}, for the sake of completeness down below we summarize them.       

\section{Preliminaries}

\subsection{Notation} \ \\

For the sake of completeness, this section is devoted to summarize the notation
that will be used throughout this paper.   If $(S,\mathcal{A},\mu)$ is a measure space, we denote by $\leb^p(S)$ or $\leb^p(S,\dif\mu)$ ($1\leq p<\infty$) the set of measurable functions $f:S\rightarrow\mathbb{R}$ such that $\|f\|_{\leb^p(S)}:=(\int_S |f|^p\,\dif \mu)^{1/p}<\infty$. We denote by $\leb^\infty(S)$ or $\leb^\infty(S,\dif\mu)$ the set of measurable functions such that $\|f\|_{\leb^\infty(S)}:=\inf\{\sup\{|f(x)|:\,x\in S\backslash N\}:\,\mu(N)=0\}<\infty$. This norm is usually referred to as ``essential supremum". Hereinafter, we will write a.e. as a short notation for ``almost every'', which means that some property holds except for a set of measure zero. In this paper, we will deal with the following particular cases: $S=\mathcal{T}\subseteq\mathbb{R}$ and $\dif\mu=\dif x$ the Lebesgue measure, $S=\Omega$ and $\mu=\mathbb{P}$ a probability measure, and $S=\mathcal{T}\times\Omega$ and $\dif\mu=\dif x\times \dif\mathbb{P}$. Notice that $f\in \leb^p(\mathcal{T}\times\Omega)$ if and only if $\|f\|_{\leb^p(\mathcal{T}\times \Omega)}=(\mathbb{E}[\int_{\mathcal{T}} |f(x)|^p\,\dif x])^{1/p}<\infty$, where $\mathbb{E}[\cdot]$ denotes the expectation operator. In the particular case of $S=\Omega$ and $\mu=\mathbb{P}$, the short notation a.s. stands for ``almost surely''.

\subsection{Preliminaries on the randomized heat equation on $[0,1]$ with homogeneous boundary conditions} \ \\

Reference \cite{jjcm} provides the necessary results on the approximation of the probability density function of the randomized heat equation on  the spatial domain $[0,1]$ with homogeneous boundary conditions. The main goal of this contribution is to extend these results to the randomized heat equation on a general interval $[L_1,L_2]$ with random boundary conditions.

For the sake of completeness in the presentation, we first recall some deterministic
results that will be useful to achieve our goal. In a deterministic setting, the heat equation on the spatial domain $[0,1]$ with homogeneous boundary conditions has the form
\begin{equation} 
\left\{
\begin{array}{cclcl} 
v_t &=&\beta^2 v_{xx},& & 0<x<1,\;t>0, \\ 
v(0,t)&=&v(1,t)=0,& &t\geq0, \\ v(x,0)&=&\psi(x),& & 0\leq x\leq 1, 
\end{array} 
\right.
\label{edp_determinista_homo}
\end{equation}
where the diffusion coefficient is $\beta^2>0$ and the initial condition is given by the function $\psi(x)$. The classical method of separation of variables provides the formal solution 
\begin{equation}
v(x,t)=\sum_{n=1}^\infty A_n\,\e^{-n^2\pi^2\beta^2 t}\sin(n\pi x),
\label{sol_homo}
\end{equation}
where the Fourier coefficient
\begin{equation}
A_n=2\int_0^1 \psi(y) \sin(n\pi y)\,\dif y 
\label{Fourier_coeff}
\end{equation}
is understood as a Lebesgue integral. As it is proved in \cite[Th. 1.1]{jjcm}, under some simple hypotheses the formal solution (\ref{sol_homo})--(\ref{Fourier_coeff}) becomes a rigorous solution to the deterministic heat equation (\ref{edp_determinista_homo}).

\begin{theorem} \label{c1}
If $\psi$ is continuous on $[0,1]$, piecewise $C^1$ on $[0,1]$ and $\psi(0)=\psi(1)=0$, then (\ref{sol_homo})--(\ref{Fourier_coeff}) is continuous on $[0,1]\times [0,\infty)$, is of class $C^{2,1}$ on $(0,1)\times(0,\infty)$ and is a classical solution of (\ref{edp_determinista_homo}).
\end{theorem}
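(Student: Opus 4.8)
The plan is to regard (\ref{sol_homo}) as a series of smooth functions, to show it converges uniformly together with all of its term-by-term derivatives away from $t=0$, and then to differentiate term by term. The first step is to control the Fourier coefficients. Integrating by parts in (\ref{Fourier_coeff}) and using that $\psi$ is continuous, piecewise $C^1$ and $\psi(0)=\psi(1)=0$ (so the boundary terms vanish), one obtains
\[
A_n=\frac{2}{n\pi}\int_0^1\psi'(y)\cos(n\pi y)\,\dif y=:\frac{2}{n\pi}\,b_n .
\]
Since $\psi'\in\leb^2([0,1])$, Bessel's inequality gives $\sum_{n\geq1}b_n^2<\infty$, and the Cauchy--Schwarz inequality applied to $\sum_n |b_n|/n$ against $\sum_n 1/n^2$ yields $\sum_{n\geq1}|A_n|<\infty$. (The cruder bound $|A_n|\leq 2\|\psi\|_{\leb^\infty([0,1])}$ already suffices for the interior estimates below.)

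For continuity on $[0,1]\times[0,\infty)$, note that $|A_n\,\e^{-n^2\pi^2\beta^2 t}\sin(n\pi x)|\leq|A_n|$ for every $(x,t)\in[0,1]\times[0,\infty)$, so the Weierstrass $M$-test shows that the series (\ref{sol_homo}) converges uniformly there; being a uniform limit of continuous functions, $v$ is continuous on $[0,1]\times[0,\infty)$. For the interior regularity and the equation, fix $t_0>0$. For any $j,k\geq0$ the term-by-term derivative $\partial_t^{\,j}\partial_x^{\,k}$ of the general summand is bounded on $[0,1]\times[t_0,\infty)$ by $C\,n^{2j+k}|A_n|\,\e^{-n^2\pi^2\beta^2 t_0}$, and $\sum_n n^{2j+k}|A_n|\,\e^{-n^2\pi^2\beta^2 t_0}<\infty$ because exponential decay dominates any polynomial factor. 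Hence each differentiated series converges uniformly on $[0,1]\times[t_0,\infty)$, which legitimizes differentiating (\ref{sol_homo}) term by term; since $t_0>0$ is arbitrary, $v\in C^\infty((0,1)\times(0,\infty))$, in particular $v$ is of class $C^{2,1}$ there. As each summand $\e^{-n^2\pi^2\beta^2 t}\sin(n\pi x)$ satisfies $\partial_t(\cdot)=\beta^2\partial_{xx}(\cdot)$, term-by-term differentiation gives $v_t=\beta^2 v_{xx}$ on $(0,1)\times(0,\infty)$. The boundary conditions $v(0,t)=v(1,t)=0$ for $t\geq0$ are immediate from $\sin(0)=\sin(n\pi)=0$ together with the continuity just established.

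It remains to verify the initial condition $v(x,0)=\sum_{n\geq1}A_n\sin(n\pi x)=\psi(x)$ for $x\in[0,1]$, and this is the only delicate point. Consider the odd, $2$-periodic extension $\widetilde{\psi}$ of $\psi$ to $\mathbb{R}$: because $\psi(0)=\psi(1)=0$ and $\psi$ is continuous and piecewise $C^1$ on $[0,1]$, the extension $\widetilde{\psi}$ is continuous and piecewise $C^1$ on all of $\mathbb{R}$, and its Fourier series is precisely $\sum_n A_n\sin(n\pi x)$. By the classical convergence theorem for Fourier series of continuous, piecewise-$C^1$ periodic functions, this series converges uniformly to $\widetilde{\psi}$, hence to $\psi$ on $[0,1]$; comparing with the uniform limit furnished by the $M$-test identifies $v(\cdot,0)$ with $\psi$. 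The main obstacle is thus not the interior smoothness — that is a routine ``exponential beats polynomial'' argument — but this last identification: one must use the endpoint hypotheses $\psi(0)=\psi(1)=0$ to guarantee that the odd periodic extension is genuinely continuous, so that the Fourier sine expansion converges uniformly to $\psi$ rather than merely in $\leb^2$.
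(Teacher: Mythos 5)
Your proposal is correct, and it follows the classical Fourier-series route that the paper relies on (the paper itself defers the proof of this theorem to \cite{jjcm}): integration by parts plus Bessel and Cauchy--Schwarz to get $\sum_n|A_n|<\infty$, the Weierstrass $M$-test for continuity up to $t=0$, exponential-decay domination to justify term-by-term differentiation for $t\geq t_0>0$, and uniform convergence of the sine expansion of the odd periodic extension (using $\psi(0)=\psi(1)=0$) to identify $v(\cdot,0)$ with $\psi$. No gaps worth flagging.
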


Now we consider the deterministic PDE problem (\ref{edp_determinista_homo}) in a random setting. Let $(\Omega,\mathcal{F},\mathbb{P})$ be a complete probability space, where $\Omega$ is the sample space, which consists of outcomes that will be denoted by $\omega$, $\mathcal{F}$ is a $\sigma$-algebra of events and $\mathbb{P}$ is the probability measure. The diffusion coefficient is considered as a random variable $\beta^2(\omega)$ and the initial condition is a stochastic process
\[ \psi=\{\psi(x)(\omega):\,x\in [0,1],\,\omega\in\Omega\} \]
in the underlying probability space $(\Omega,\mathcal{F},\mathbb{P})$. The solution (\ref{sol_homo})--(\ref{Fourier_coeff}) becomes a stochastic process in this random scenario, expressed as the formal random series
\begin{equation}
v(x,t)(\omega)=\sum_{n=1}^\infty A_n(\omega)\,\e^{-n^2\pi^2\beta^2(\omega) t}\sin(n\pi x),
\label{sol_homo_random}
\end{equation}
where the random Fourier coefficient
\begin{equation}
 A_n(\omega)=2\int_0^1 \psi(y)(\omega) \sin(n\pi y)\,\dif y 
 \label{An}
\end{equation}
is understood as a Lebesgue integral for each $\omega\in\Omega$ (this is sometimes referred to as SP integral, see \cite[Def.A--1]{strand}). The following result establishes in which sense and under which assumptions the stochastic process (\ref{sol_homo_random})--(\ref{An}) is a rigorous solution to the randomized PDE problem (\ref{edp_determinista_homo}) \cite[Th. 1.3]{jjcm}.

\begin{theorem} \label{st_sol}
The following statements hold:\\
\vspace{-0.5cm}
\begin{itemize}
\item[i)] a.s. solution: Suppose that $\psi\in \leb^2([0,1]\times\Omega)$. Then the random series that defines (\ref{sol_homo_random})--(\ref{An}) converges a.s. for all $x\in [0,1]$ and $t>0$. Moreover, 
\[ v_t(x,t)(\omega)=\beta^2(\omega)\,v_{xx}(x,t)(\omega) \]
a.s. for $x\in (0,1)$ and $t>0$, where the derivatives are understood in the classical sense; $v(0,t)(\omega)=v(1,t)(\omega)=0$ a.s. for $t\geq0$; and $v(x,0)(\omega)=\psi(x)(\omega)$ a.s. for a.e. $x\in [0,1]$.

\item[ii)] $\leb^2$ solution: Suppose that $\psi\in \leb^2([0,1]\times\Omega)$ and $0<a\leq\beta^2(\omega)\leq b$, a.e. $\omega\in\Omega$, for certain $a,b\in\mathbb{R}$. Then the random series that defines (\ref{sol_homo_random})--(\ref{An}) converges in $\leb^2(\Omega)$ for all $x\in [0,1]$ and $t>0$. Moreover, 
\[ v_t(x,t)(\omega)=\beta^2(\omega)\,v_{xx}(x,t)(\omega) \]
a.s. for $x\in (0,1)$ and $t>0$, where the derivatives are understood in the mean square sense (see Subsection \ref{preL}); $v(0,t)(\omega)=v(1,t)(\omega)=0$ a.s. for $t\geq0$; and $v(x,0)(\omega)=\psi(x)(\omega)$ a.s. for a.e. $x\in [0,1]$.
\end{itemize}
\end{theorem}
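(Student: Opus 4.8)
The plan is to reduce the randomized problem to the deterministic Theorem~\ref{c1} applied pathwise, and then upgrade the convergence mode in part~(ii) using the uniform ellipticity bound $0<a\le\beta^2(\omega)\le b$. For part~(i), the first step is to observe that $\psi\in\leb^2([0,1]\times\Omega)$ means $\psi(\cdot)(\omega)\in\leb^2([0,1])$ for a.e.\ $\omega$, so for each such $\omega$ the random Fourier coefficients $A_n(\omega)$ in \eqref{An} are the ordinary Fourier sine coefficients of an $\leb^2$ function; in particular $\sum_n A_n(\omega)^2<\infty$ by Bessel/Parseval. The exponential factor $\e^{-n^2\pi^2\beta^2(\omega)t}$ decays superexponentially in $n$ for every fixed $t>0$ and a.e.\ $\omega$ (since $\beta^2(\omega)>0$ a.s.), so the series \eqref{sol_homo_random} and all its formally differentiated series converge absolutely and uniformly on $[0,1]$ for $t$ in compact subsets of $(0,\infty)$, for a.e.\ fixed $\omega$. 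This yields that $v(\cdot,\cdot)(\omega)$ solves the heat equation classically on $(0,1)\times(0,\infty)$ and is continuous up to $t=0$ whenever the path is continuous and piecewise $C^1$ — but since we only assume $\psi\in\leb^2$, the boundary values at $t\ge0$ and the initial condition are recovered only in the $\leb^2([0,1])$ / a.e.\ $x$ sense, which is exactly what the statement claims.

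For part~(ii), the additional hypothesis $0<a\le\beta^2(\omega)\le b$ gives a deterministic bound on the tail of the series that is uniform in $\omega$: for $t>0$,
\[
\Bigl\|\sum_{n=N}^\infty A_n(\omega)\,\e^{-n^2\pi^2\beta^2(\omega)t}\sin(n\pi x)\Bigr\|_{\leb^2(\Omega)}^2
\le \mathbb{E}\Bigl[\sum_{n=N}^\infty A_n(\omega)^2\,\e^{-2n^2\pi^2 a t}\Bigr]
\le \e^{-2N^2\pi^2 a t}\,\mathbb{E}\Bigl[\sum_{n=1}^\infty A_n(\omega)^2\Bigr],
\]
and $\mathbb{E}[\sum_n A_n(\omega)^2]=\mathbb{E}[\|\psi(\cdot)(\omega)\|_{\leb^2([0,1])}^2]=\|\psi\|_{\leb^2([0,1]\times\Omega)}^2<\infty$ by Parseval and Tonelli. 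Hence the partial sums form a Cauchy sequence in $\leb^2(\Omega)$, uniformly for $x\in[0,1]$ and for $t$ bounded away from $0$, so the series converges in $\leb^2(\Omega)$. The same estimate applied to the termwise $t$- and $xx$-derivative series (which carry extra polynomial factors $n^2$, harmless against the exponential) shows those converge in $\leb^2(\Omega)$ as well, and a standard argument identifies the $\leb^2(\Omega)$-limits of the difference quotients with these series, giving mean-square differentiability and the mean-square heat equation. The boundary and initial conditions carry over from the a.s.\ statement in part~(i), or can be read directly from the series (each term vanishes at $x\in\{0,1\}$).

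The main obstacle is bookkeeping rather than conceptual: one must be careful that the mean-square derivatives $v_t$ and $v_{xx}$ in part~(ii) genuinely coincide with the classical/a.s.\ derivatives from part~(i), so that the equation $v_t=\beta^2 v_{xx}$ holds in both senses simultaneously. This requires showing that termwise mean-square differentiation of the series is legitimate — i.e.\ that the difference quotient $(v(x,t+h)-v(x,t))/h$ converges in $\leb^2(\Omega)$ to the termwise-differentiated series as $h\to0$ — for which the uniform-in-$\omega$ exponential tail bound above, together with dominated convergence (the dominating function being the a.s.-convergent majorant from part~(i) times an $\leb^2(\Omega)$ envelope), does the job. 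Everything else is a direct transcription of Theorem~\ref{c1} to a.e.\ path $\omega$ plus Tonelli's theorem to pass expectations inside the Parseval sums.
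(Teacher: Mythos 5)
This theorem is not proved in the paper itself: it is imported verbatim from \cite{jjcm} (Th.~1.3 there), so there is no in-paper argument to compare against line by line; the only internal hint about the original proof is the later remark that ``it was shown that $\|A_n\|_{\leb^2(\Omega)}\leq C$ for all $n$'', i.e.\ a triangle-inequality-in-$\leb^2(\Omega)$ route. Your overall strategy (pathwise reduction to the deterministic series for part (i), uniform ellipticity $a\leq\beta^2\leq b$ to upgrade to mean-square convergence and termwise mean-square differentiation for part (ii)) is the natural one and is consistent with that.

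There is, however, one step that fails as written: the first inequality of your part (ii),
\[
\Bigl\|\sum_{n=N}^\infty A_n\,\e^{-n^2\pi^2\beta^2 t}\sin(n\pi x)\Bigr\|_{\leb^2(\Omega)}^2
\leq \mathbb{E}\Bigl[\sum_{n=N}^\infty A_n^2\,\e^{-2n^2\pi^2 a t}\Bigr],
\]
silently discards the cross terms $\mathbb{E}\bigl[A_nA_m\e^{-(n^2+m^2)\pi^2\beta^2 t}\sin(n\pi x)\sin(m\pi x)\bigr]$, which need not vanish: the $A_n$ are in general correlated (they come from the same path of $\psi$) and they share the common factor $\beta^2$, so there is no orthogonality in $\leb^2(\Omega)$ to invoke. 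The conclusion survives with either of two standard patches: (a) Cauchy--Schwarz in $n$, which yields the extra factor $\sum_{n\geq N}\e^{-n^2\pi^2 a t}$ (finite and tending to $0$), or (b) the route the paper alludes to, namely $\|A_n\|_{\leb^2(\Omega)}\leq 2\|\psi\|_{\leb^2([0,1]\times\Omega)}$ by Cauchy--Schwarz in $y$, and then
\[
\sum_{n\geq N}\bigl\|A_n\,\e^{-n^2\pi^2\beta^2 t}\sin(n\pi x)\bigr\|_{\leb^2(\Omega)}\leq C\sum_{n\geq N}\e^{-n^2\pi^2 a t}\longrightarrow 0 ,
\]
so the partial sums are Cauchy in $\leb^2(\Omega)$; the same normwise bound (with harmless $n^2$ factors) handles the differentiated series and the difference-quotient comparison. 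A second, minor gap: in part (i) the identity $v(x,0)(\omega)=\psi(x)(\omega)$ for a.e.\ $x$ is the a.e.\ pointwise convergence of the sine Fourier series of the $\leb^2([0,1])$ path, which requires either Carleson--Hunt or an explicit $\leb^2([0,1])$-sense reading of the initial condition; ``recovered in the $\leb^2$/a.e.\ sense'' should be pinned down to one of these. (Also, with your normalization $A_n=2\int_0^1\psi\sin(n\pi y)\,\dif y$, Parseval gives $\sum_n A_n^2=2\|\psi(\cdot)(\omega)\|_{\leb^2([0,1])}^2$, not equality without the factor $2$; this is cosmetic.) With these repairs the argument is sound.
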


The main goal of \cite{jjcm} consists of approximating the probability density function of the stochastic process $v(x,t)(\omega)$ given in (\ref{sol_homo_random})--(\ref{An}), for $0<x<1$ and $t>0$. For that purpose, the truncation
\begin{equation}
 v_N(x,t)(\omega)=\sum_{n=1}^N A_n(\omega)\,\e^{-n^2\pi^2\beta^2(\omega) t}\sin(n\pi x) 
 \label{vN}
\end{equation}
is used. Using the Random Variable Transformation technique, see Lemma \ref{lema_abscont}, the density of the truncation $v_N(x,t)(\omega)$ is computed and one proves the following result \cite[Th.~2.8]{jjcm}, which provides conditions under which the density function of the solution stochastic process $v(x,t)(\omega)$ from (\ref{sol_homo_random})--(\ref{An}) can be approximated. The notation $f_X$ stands for the probability density function of the random variable $X$.

\begin{theorem} \label{teor2}
Let $\{\psi(x):\,0\leq x\leq 1\}$ be a process in $\leb^2([0,1]\times \Omega)$. Suppose that $\beta^2$, $A_1$ and $(A_2,\ldots,A_N)$ are independent and absolutely continuous, for $N\geq2$. Suppose that the probability density function $f_{A_1}$ is Lipschitz on $\mathbb{R}$. Assume that $\sum_{n=m}^\infty \|\e^{-(n^2-2)\pi^2\beta^2 t}\|_{\leb^1(\Omega)}<\infty$, for certain $m\in\mathbb{N}$. Then the density of $v_N(x,t)(\omega)$,
\small
\begin{align}
 f_{v_N(x,t)}(v)= {} & \int_{\mathbb{R}^N} f_{A_1} \left(\frac{\e^{\pi^2\beta^2t}}{\sin(\pi x)}\left\{v-\sum_{n=2}^N a_n \e^{-n^2\pi^2\beta^2 t}\sin(n \pi x)\right\}\right)f_{(A_2,\ldots,A_N)} (a_2,\ldots,a_N) \nonumber \\
\cdot & f_{\beta^2}(\beta^2)\frac{\e^{\pi^2\beta^2 t}}{\sin(\pi x)}\,\dif a_2\cdots \dif a_N\,\dif \beta^2, \label{fr}
\end{align}
\normalsize 
converges in $\leb^\infty(\mathbb{R})$ to a density of the random variable $v(x,t)(\omega)$ given in (\ref{sol_homo_random})--(\ref{An}), for $0<x<1$ and $t>0$.
\end{theorem}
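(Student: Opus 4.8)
The plan is to combine the Random Variable Transformation technique with a uniform tail estimate. First I would establish that (\ref{fr}) is indeed a density of $v_N(x,t)(\omega)$. Fixing $0<x<1$ and $t>0$, write $v_N(x,t)=A_1\,\e^{-\pi^2\beta^2 t}\sin(\pi x)+\sum_{n=2}^N A_n\,\e^{-n^2\pi^2\beta^2 t}\sin(n\pi x)$. By the independence and absolute continuity hypotheses, $(\beta^2,A_1,A_2,\ldots,A_N)$ has joint density $f_{\beta^2}(\beta^2)f_{A_1}(a_1)f_{(A_2,\ldots,A_N)}(a_2,\ldots,a_N)$, and since $\e^{-\pi^2\beta^2 t}\sin(\pi x)>0$ on the support of $\beta^2$, the change of variables $(\beta^2,a_1,\ldots,a_N)\mapsto(\beta^2,v,a_2,\ldots,a_N)$, with $v$ the corresponding value of $v_N(x,t)$, is a $C^1$ bijection whose inverse solves for $a_1$ and has Jacobian $\e^{\pi^2\beta^2 t}/\sin(\pi x)$ with respect to $v$. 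Lemma \ref{lema_abscont}, followed by integration over $\beta^2,a_2,\ldots,a_N$, yields (\ref{fr}); in particular $v_N(x,t)(\omega)$ is absolutely continuous, and $f_{v_N(x,t)}$ is essentially bounded because $f_{A_1}$, being a Lipschitz density, is bounded. Abbreviating $W_N:=\sum_{n=2}^N A_n\,\e^{-n^2\pi^2\beta^2 t}\sin(n\pi x)$, formula (\ref{fr}) reads $f_{v_N(x,t)}(v)=\mathbb{E}[\frac{\e^{\pi^2\beta^2 t}}{\sin(\pi x)}f_{A_1}(\frac{\e^{\pi^2\beta^2 t}}{\sin(\pi x)}(v-W_N))]$.

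Next -- and this is the heart of the matter -- I would show that $\{f_{v_N(x,t)}\}_N$ is Cauchy in $\leb^\infty(\mathbb{R})$. For $N>M\ge m$, rewrite $f_{v_M(x,t)}$ as an $N$-fold integral using the marginalization identity for $f_{(A_2,\ldots,A_M)}$, so that $f_{v_N(x,t)}-f_{v_M(x,t)}$ becomes the expectation of $\frac{\e^{\pi^2\beta^2 t}}{\sin(\pi x)}$ times a difference of two values of $f_{A_1}$ whose arguments differ by $\frac{\e^{\pi^2\beta^2 t}}{\sin(\pi x)}(W_N-W_M)$. The Lipschitz property of $f_{A_1}$, with constant $K$, then gives the $v$-free bound
\[ \|f_{v_N(x,t)}-f_{v_M(x,t)}\|_{\leb^\infty(\mathbb{R})}\le \frac{K}{\sin^2(\pi x)}\,\mathbb{E}\big[\e^{2\pi^2\beta^2 t}|W_N-W_M|\big]\le \frac{K}{\sin^2(\pi x)}\sum_{n=M+1}^N \mathbb{E}\big[|A_n|\,\e^{-(n^2-2)\pi^2\beta^2 t}\big], \]
where in the last step I used $|\sin(n\pi x)|\le1$ and $\e^{2\pi^2\beta^2 t}\e^{-n^2\pi^2\beta^2 t}=\e^{-(n^2-2)\pi^2\beta^2 t}$. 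Since $|A_n(\omega)|\le2\int_0^1|\psi(y)(\omega)|\,\dif y\le2\|\psi(\cdot)(\omega)\|_{\leb^2([0,1])}$, Jensen's inequality gives $\mathbb{E}[|A_n|]\le2\|\psi\|_{\leb^2([0,1]\times\Omega)}=:M_0<\infty$; using also the independence of $\beta^2$ from the $A_n$, the right-hand side is at most $\frac{KM_0}{\sin^2(\pi x)}\sum_{n>M}\|\e^{-(n^2-2)\pi^2\beta^2 t}\|_{\leb^1(\Omega)}$, which tends to $0$ as $M\to\infty$ by the summability hypothesis. Hence $\{f_{v_N(x,t)}\}_{N\ge m}$ converges in the complete space $\leb^\infty(\mathbb{R})$ to some $g$, with $g\ge0$ as a uniform limit of nonnegative functions.

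Finally I would identify $g$ with a density of $v(x,t)(\omega)$. By Theorem \ref{st_sol}, $v_N(x,t)(\omega)\to v(x,t)(\omega)$ almost surely, hence in distribution, so $\int_{\mathbb{R}}\phi\,f_{v_N(x,t)}\,\dif v\to\mathbb{E}[\phi(v(x,t))]$ for every continuous $\phi$ of compact support. On the other hand $|\int_{\mathbb{R}}\phi(f_{v_N(x,t)}-g)\,\dif v|\le\|\phi\|_{\leb^1(\mathbb{R})}\|f_{v_N(x,t)}-g\|_{\leb^\infty(\mathbb{R})}\to0$, so $\int_{\mathbb{R}}\phi\,g\,\dif v=\mathbb{E}[\phi(v(x,t))]$ for all such $\phi$; approximating indicators of bounded intervals shows that $g(v)\,\dif v$ agrees with the law of $v(x,t)(\omega)$ on bounded Borel sets, and letting these intervals exhaust $\mathbb{R}$ gives $\int_{\mathbb{R}}g=1$. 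Thus $g$ is a density of $v(x,t)(\omega)$, which is the claim.

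The hard part will be the uniform Cauchy estimate of the second step: one needs a bound on $\|f_{v_N(x,t)}-f_{v_M(x,t)}\|_{\leb^\infty(\mathbb{R})}$ that is simultaneously free of $v$ and summable in $n$. The Lipschitz hypothesis on $f_{A_1}$ is precisely what converts the pointwise discrepancy of the two densities into the linear factor $|W_N-W_M|$, while the combination of the a priori bound $|A_n|\le2\|\psi(\cdot)(\omega)\|_{\leb^2([0,1])}$ coming from $\psi\in\leb^2([0,1]\times\Omega)$, the independence of $\beta^2$ from the Fourier coefficients, and the series assumption on $\|\e^{-(n^2-2)\pi^2\beta^2 t}\|_{\leb^1(\Omega)}$ is what makes the tail vanish. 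The extra factor $\e^{2\pi^2\beta^2 t}$, rather than $\e^{\pi^2\beta^2 t}$, produced by the Jacobian together with the Lipschitz bound is exactly why the exponent $n^2-2$ -- and not $n^2-1$ -- appears in the hypothesis.
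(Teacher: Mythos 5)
Your proof is correct and is essentially the argument behind the quoted result \cite[Th.~2.8]{jjcm} that the paper relies on: the Random Variable Transformation representation of $f_{v_N(x,t)}$, a telescoping Lipschitz estimate whose constant $K M_0/\sin^2(\pi x)=2L\|\psi\|_{\leb^2([0,1]\times\Omega)}/\sin^2(\pi x)$ reproduces exactly the rate (\ref{rate}), and identification of the uniform limit as a density of $v(x,t)$ through the a.s.\ (hence in-law) convergence $v_N(x,t)\to v(x,t)$, which is the same closing step the paper carries out in Theorem \ref{teorllei}. One side remark should be dropped or repaired: essential boundedness of $f_{v_N(x,t)}$ does not follow from boundedness of $f_{A_1}$ alone, since that bound costs a factor $\mathbb{E}[\e^{\pi^2\beta^2 t}]$ which is a hypothesis of Theorem \ref{teorllei} but not of Theorem \ref{teor2}; fortunately it is not needed, because applying completeness of $\leb^\infty(\mathbb{R})$ to the uniformly bounded differences $f_{v_N(x,t)}-f_{v_m(x,t)}$ already yields the claimed $\leb^\infty$ convergence.
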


Moreover, from the proofs in \cite[Th.~2.7, Th.~2.8]{jjcm}, one has the following rate of convergence for $\{f_{v_N(x,t)}(v)\}_{N=1}^\infty$ under the assumptions of Theorem \ref{teor2}:
\begin{equation}
 |f_{v_N(x,t)}(v)-f_{v(x,t)}(v)|\leq \frac{2\|\psi\|_{\leb^2([0,1]\times\Omega)}L}{\sin^2(\pi x)} \sum_{n=N+1}^\infty \|\e^{-(n^2-2)\pi^2\beta^2 t}\|_{\leb^1(\Omega)},
 \label{rate}
\end{equation}
where $L$ is the Lipschitz constant of $f_{A_1}$.

Another result that could have been added to \cite{jjcm} is presented in what follows. It substitutes the Lipschitz hypothesis by the weaker assumption of a.e. continuity and essential boundedness. The hypothesis $\sum_{n=m}^\infty \|\e^{-(n^2-2)\pi^2\beta^2 t}\|_{\leb^1(\Omega)}<\infty$ is substituted by $\mathbb{E}[\e^{\pi^2\beta^2t}]<\infty$. Then one proves pointwise convergence of the sequence (\ref{fr}), so the uniform convergence on $\mathbb{R}$ and the rate of convergence (\ref{rate}) are lost.

\begin{remark} \label{xyabs} \normalfont
Let $X$ and $Y$ be two independent random variables. If $X$ is absolutely continuous, then $X+Y$ is absolutely continuous. Indeed, for any Borel set $A$, by the convolution formula \cite[p.266]{bil} we have $\mathbb{P}(X+Y\in A)=\int_{\mathbb{R}} \mathbb{P}(X\in A-y)\mathbb{P}_Y(\dif y)$, where $\mathbb{P}_Y=\mathbb{P}\circ Y^{-1}$ is the law of $Y$. If $A$ is null, then $A-y$ is null, so $\mathbb{P}(X\in A-y)=0$. Thus, if $A$ is null, then $\mathbb{P}(X+Y\in A)=0$. By the Radon-Nikodym Theorem \cite[Ch.14]{Radon}, $X+Y$ has a density.
\end{remark}

\begin{theorem} \label{teorllei}
Let $\{\psi(x):\,0\leq x\leq 1\}$ be a process in $\leb^2([0,1]\times \Omega)$. Suppose that $\beta^2$, $A_1$ and $(A_2,\ldots,A_N)$ are independent and absolutely continuous, for $N\geq2$. Suppose that the probability density function $f_{A_1}$ is a.e. continuous on $\mathbb{R}$ and $\|f_{A_1}\|_{\leb^\infty(\mathbb{R})}<\infty$. Assume that $\mathbb{E}[\e^{\pi^2\beta^2t}]<\infty$. Then the density of $v_N(x,t)(\omega)$ given by (\ref{fr}),
\small
\begin{align*}
 f_{v_N(x,t)}(v)= {} & \int_{\mathbb{R}^N} f_{A_1} \left(\frac{\e^{\pi^2\beta^2t}}{\sin(\pi x)}\left\{v-\sum_{n=2}^N a_n \e^{-n^2\pi^2\beta^2 t}\sin(n \pi x)\right\}\right)f_{(A_2,\ldots,A_N)} (a_2,\ldots,a_N)  \\
\cdot & f_{\beta^2}(\beta^2)\frac{\e^{\pi^2\beta^2 t}}{\sin(\pi x)}\,\dif a_2\cdots \dif a_N\,\dif \beta^2,
\end{align*}
\normalsize 
converges pointwise to a density of the random variable $v(x,t)(\omega)$ given in (\ref{sol_homo_random})--(\ref{An}), for all $0<x<1$ and $t>0$.
\end{theorem}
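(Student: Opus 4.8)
The plan is to reduce matters, exactly as in the derivation of (\ref{fr}), to the fact that $v_N(x,t)$ and $v(x,t)$ become affine functions of $A_1$ once the rest of the randomness is frozen. Put $C:=\e^{-\pi^2\beta^2 t}\sin(\pi x)$ (which is positive a.s., since $0<x<1$ and $\beta^2<\infty$ a.s.), $Z_N:=\sum_{n=2}^N A_n\,\e^{-n^2\pi^2\beta^2 t}\sin(n\pi x)$ and $Z:=\sum_{n=2}^\infty A_n\,\e^{-n^2\pi^2\beta^2 t}\sin(n\pi x)$, so that $v_N(x,t)=CA_1+Z_N$ and $v(x,t)=CA_1+Z$, with $A_1$ independent of $(\beta^2,A_2,A_3,\dots)$. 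First I would fix, once and for all, a version of $f_{A_1}$ that is continuous off a Lebesgue-null set $D$ and satisfies $f_{A_1}\le\|f_{A_1}\|_{\leb^\infty(\mathbb{R})}$ everywhere on $\mathbb{R}$ (at a continuity point the essential bound forces the pointwise bound; on $D$ one simply truncates). Conditioning on $(\beta^2,A_2,\dots,A_N)$, performing the change of variable $a_1\mapsto v=Ca_1+Z_N$ in the inner $A_1$-integral, and invoking Tonelli's theorem, one checks that the right-hand side of (\ref{fr}) equals $g_N(v):=\mathbb{E}[C^{-1}f_{A_1}(C^{-1}(v-Z_N))]$, which is a density of $v_N(x,t)$; conditioning instead on $(\beta^2,A_2,A_3,\dots)$ the same computation shows that $g(v):=\mathbb{E}[C^{-1}f_{A_1}(C^{-1}(v-Z))]$ is a genuine probability density of $v(x,t)$. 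Thus it remains to prove $g_N(v)\to g(v)$ for every $v\in\mathbb{R}$.

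For this I would apply the dominated convergence theorem to $h_N:=C^{-1}f_{A_1}(C^{-1}(v-Z_N))$. Domination is where the two weakened hypotheses enter: with our version of $f_{A_1}$ one has $|h_N|\le\frac{\|f_{A_1}\|_{\leb^\infty(\mathbb{R})}}{\sin(\pi x)}\,\e^{\pi^2\beta^2 t}$ for every $\omega$, and this bound is integrable precisely because $\mathbb{E}[\e^{\pi^2\beta^2 t}]<\infty$. For the a.s.\ convergence $h_N\to h:=C^{-1}f_{A_1}(C^{-1}(v-Z))$, Theorem \ref{st_sol}(i) guarantees that the random series (\ref{sol_homo_random}) converges a.s., hence $Z_N\to Z$ a.s.\ and $C^{-1}(v-Z_N)\to C^{-1}(v-Z)$ a.s.; since $f_{A_1}$ is continuous off $D$, the convergence $h_N\to h$ follows once we know that $C^{-1}(v-Z)$ falls a.s.\ outside $D$, i.e.\ $\mathbb{P}\big(C^{-1}(v-Z)\in D\big)=0$.

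I expect this last probability-zero statement to be the main obstacle, because here $f_{A_1}$ is only a.e.\ continuous rather than Lipschitz as in Theorem \ref{teor2}, so the explicit rate (\ref{rate}) is no longer available and one genuinely has to control the discontinuity set. The idea I would use is to condition on $(A_n)_{n\ge2}$ and exploit analyticity in the diffusion variable: for a fixed realization $(a_n)_{n\ge2}$ (a.s.\ one has $\sum_n a_n^2<\infty$, because $\psi\in\leb^2([0,1]\times\Omega)$ forces $\sum_n A_n^2<\infty$ a.s.), the map
\[
b\;\longmapsto\;\Phi(b):=\frac{1}{\sin(\pi x)}\Big(v\,\e^{\pi^2 b t}-\sum_{n=2}^\infty a_n\,\e^{-(n^2-1)\pi^2 b t}\sin(n\pi x)\Big)
\]
is real-analytic on $(0,\infty)$ (the series converges uniformly on $\{\operatorname{Re}b\ge\delta\}$ for each $\delta>0$ by the Weierstrass $M$-test). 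A non-constant real-analytic function has an isolated critical set and hence pulls back Lebesgue-null sets to Lebesgue-null sets; since $\beta^2$ is absolutely continuous and independent of $(A_n)_{n\ge2}$, this gives $\mathbb{P}\big(\Phi(\beta^2)\in D\mid(A_n)_{n\ge2}=(a_n)\big)=0$ whenever $\Phi$ is not constant. By linear independence of the functions $1$, $\e^{\pi^2 bt}$ and $\e^{-(n^2-1)\pi^2 bt}$ ($n\ge2$), $\Phi$ can be constant only if $v=0$ and $a_n\sin(n\pi x)=0$ for all $n\ge2$; but then $Z=Z_N=0$, so $h_N$ is the constant sequence $C^{-1}f_{A_1}(0)$ and $h_N\to h$ holds trivially. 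Integrating over $(a_n)_{n\ge2}$ yields $h_N\to h$ a.s., dominated convergence gives $g_N(v)\to g(v)$, and since $g$ is a density of $v(x,t)$ the theorem follows; Scheffé's lemma would moreover upgrade this to convergence in $\leb^1(\mathbb{R})$.
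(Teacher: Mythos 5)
Your argument is correct, and its skeleton --- writing $f_{v_N(x,t)}(v)$ as an expectation of $f_{A_1}$ evaluated along the truncated series and passing to the limit by dominated convergence with dominating function $\|f_{A_1}\|_{\leb^\infty(\mathbb{R})}\,\e^{\pi^2\beta^2 t}/\sin(\pi x)$, integrable because $\mathbb{E}[\e^{\pi^2\beta^2 t}]<\infty$ --- coincides with the paper's. You handle the two substantive sub-steps differently, though. (i) To show that the argument of $f_{A_1}$ a.s.\ avoids the null discontinuity set, the paper invokes Remark \ref{xyabs} to claim that the limiting random variable $Z$ is absolutely continuous; you instead condition on $(A_n)_{n\ge 2}$ and exploit real-analyticity in $\beta^2$, using that a non-constant real-analytic map pulls back Lebesgue-null sets to null sets, with the degenerate constant case ($v=0$ and $a_n\sin(n\pi x)=0$ for all $n$) treated separately. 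This costs more work but is more self-contained: the paper's one-line appeal to Remark \ref{xyabs} implicitly requires splitting off from $Z$ an absolutely continuous summand independent of the rest, which is delicate since $A_2,A_3,\dots$ are not assumed mutually independent and $\sin(n\pi x)$ may vanish, whereas your analyticity argument uses only the stated hypotheses ($\beta^2$ absolutely continuous and independent of the coefficients). (ii) To identify the limit as a density of $v(x,t)$, the paper argues through distribution functions (a.s.\ convergence implies convergence in law, the identity $F_{v_N(x,t)}(v)=F_{v_N(x,t)}(v_0)+\int_{v_0}^{v}f_{v_N(x,t)}(w)\,\dif w$, the bound (\ref{dct}), and right-continuity plus countability of the discontinuities of $F_{v(x,t)}$); you prove directly, by conditioning on $(\beta^2,A_2,A_3,\dots)$ and Tonelli, that $\mathbb{E}[C^{-1}f_{A_1}(C^{-1}(v-Z))]$ is a density of $v(x,t)$, which is shorter, identifies the limit before any limiting argument, and, as you note, upgrades the pointwise conclusion to $\leb^1(\mathbb{R})$ convergence via Scheff\'e. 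Your preliminary choice of an everywhere-bounded, a.e.-continuous version of $f_{A_1}$ is a harmless refinement that the paper omits.
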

\begin{proof}
Fix $0<x<1$, $t>0$ and $v\in\mathbb{R}$. From (\ref{fr}), notice that
\[ f_{v_N(x,t)}(v)=\mathbb{E}\left[f_{A_1} \left(\frac{\e^{\pi^2\beta^2t}}{\sin(\pi x)}\left\{v-\sum_{n=2}^N A_n \e^{-n^2\pi^2\beta^2 t}\sin(n \pi x)\right\}\right)\frac{\e^{\pi^2\beta^2 t}}{\sin(\pi x)}\right]. \]
Define the random variables
\[ Z_N(\omega):=\frac{\e^{\pi^2\beta^2(\omega)t}}{\sin(\pi x)}\left\{v-\sum_{n=2}^N A_n(\omega) \e^{-n^2\pi^2\beta^2(\omega) t}\sin(n \pi x)\right\},\quad Y(\omega):=\frac{\e^{\pi^2\beta^2(\omega) t}}{\sin(\pi x)}. \]
By Theorem \ref{st_sol} i), we know that 
\[ \lim_{N\rightarrow\infty} Z_N(\omega)=\frac{\e^{\pi^2\beta^2(\omega)t}}{\sin(\pi x)}\left\{v-\sum_{n=2}^\infty A_n(\omega) \e^{-n^2\pi^2\beta^2(\omega) t}\sin(n \pi x)\right\}=:Z(\omega), \]
for a.e. $\omega\in\Omega$. 

By Remark \ref{xyabs}, $Z$ is absolutely continuous. Thus, since $f$ is a.e. continuous, the probability that $Z$ belongs to the discontinuity set of $f$ is $0$. By the Continuous Mapping Theorem \cite[p.7, Th. 2.3]{vaart},
\[ \lim_{N\rightarrow\infty} f_{A_1}(Z_N(\omega))Y(\omega)=f_{A_1}(Z(\omega))Y(\omega), \]
for a.e. $\omega\in\Omega$. Moreover, $|f_{A_1}(Z_N(\omega))Y(\omega)|\leq \|f_{A_1}\|_{\leb^\infty(\Omega)} |Y(\omega)|$, being $Y\in\leb^1(\Omega)$ by the assumption $\mathbb{E}[\e^{\pi^2\beta^2t}]<\infty$. By the Dominated Convergence Theorem \cite[result 11.32, p.321]{rudin}, 
\[ \lim_{N\rightarrow\infty} f_{v_N(x,t)}(v)=\mathbb{E}[f_{A_1}(Z)Y]=:g_{x,t}(v). \]

To conclude, we need to show that $g_{x,t}$ is a density of the random variable $v(x,t)(\omega)$ given by (\ref{sol_homo_random})--(\ref{An}). This is done in a similar way to the last part of the proof of Theorem~2.4 in \cite{jjcm}. We know that, for each $0<x<1$ and $t>0$, $v_N(x,t)(\omega)\rightarrow v(x,t)(\omega)$ as $N\rightarrow\infty$ a.s., which implies convergence in law: $F_{v_N(x,t)}(v)\rightarrow F_{v(x,t)}(v)$ as $N\rightarrow\infty$, for all $v\in \mathbb{R}$ which is a point of continuity of $F_{v(x,t)}$. Here, $F$ refers to the distribution function. Since $f_{v_N(x,t)}$ is the density of $v_N(x,t)$,
\[ F_{v_N(x,t)}(v)=F_{v_N(x,t)}(v_0)+\int_{v_0}^v f_{v_N(x,t)}(w)\,\dif w. \]
If $v$ and $v_0$ are points of continuity of $F_{v(x,t)}$, taking limits when $N\rightarrow\infty$ we get
\[ F_{v(x,t)}(v)=F_{v(x,t)}(v_0)+\int_{v_0}^v g_{x,t}(w)\,\dif w. \]
This is justified by the Dominated Convergence Theorem, as 
\begin{equation}
 |f_{v_N(x,t)}(w)|\leq \mathbb{E}[|f_{A_1}(Z_N)||Y|]\leq \|f_{A_1}\|_{\leb^\infty(\mathbb{R})}\mathbb{E}[Y]\in\leb^1([v_0,v],\dif w). 
 \label{dct}
\end{equation}
As the points of discontinuity of $F_{v(x,t)}$ are countable and $F_{v(x,t)}$ is right-continuous, we obtain 
\[ F_{v(x,t)}(v)=F_{v(x,t)}(v_0)+\int_{v_0}^v g_{x,t}(w)\,\dif w, \]
for all $v_0$ and $v$ in $\mathbb{R}$. Thus, $g_{x,t}=f_{u(x,t)}$ is a density for $v(x,t)$, as wanted. 

\end{proof}

Our main objective will be to extend both Theorem \ref{teor2} and the new Theorem \ref{teorllei} to the solution of the randomized heat equation on an interval $[L_1,L_2]$ with random boundary conditions. 

Below, we state several key results that will be used in the subsequent development. The first result is the Random Variable Transformation technique, that will permit obtaining the density function in the forthcoming changes of variable (\ref{sol_u_w})--(\ref{psi_w}). Afterwards, we particularize the Random Variable Transformation technique to the case of the sum of two random variables, leading to the so-called density convolution formula (this is also a consequence of Remark \ref{xyabs}). Finally, we recall the Karhunen-Loève expansion for square integrable stochastic processes. The numerical examples will be applied to PDE problems (\ref{edp_determinista_comp}) in which we know the explicit Karhunen-Loève expansion of the initial condition stochastic process.

\begin{lemma}[Random Variable Transformation technique] \label{lema_abscont} \cite[Lemma 4.12]{llibre_powell} Let $X$ be an absolutely continuous random vector with density $f_X$ and with support $D_X$ contained in an open set $D\subseteq \mathbb{R}^n$. Let $g:D\rightarrow\mathbb{R}^n$ be a $C^1(D)$ function, injective on $D$ such that $Jg(x)\neq0$ for all $x\in D$ ($J$ stands for Jacobian). Let $h=g^{-1}:g(D)\rightarrow\mathbb{R}^n$. Let $Y=g(X)$ be a random vector. Then $Y$ is absolutely continuous with density
\[
 f_Y(y)=\begin{cases} f_X(h(y))|Jh(y)|, & \;y\in g(D), \\ 0, & \; y\notin g(D). \end{cases}  
\]
\end{lemma}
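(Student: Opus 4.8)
The plan is to prove Lemma~\ref{lema_abscont} by reducing it to the classical change-of-variables formula for Lebesgue integrals under a $C^1$ diffeomorphism.

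First I would set up the diffeomorphism structure. Since $g$ is $C^1$ on the open set $D$, injective, and satisfies $Jg(x)\neq 0$ for every $x\in D$, the Inverse Function Theorem (applied at each point of $D$, together with the injectivity of $g$) gives that $g(D)$ is open and that $h=g^{-1}\colon g(D)\to D$ is of class $C^1$, with $Jh(y)=1/Jg(h(y))\neq 0$ for all $y\in g(D)$. Because the support $D_X$ of $X$ lies in $D$, we have $\mathbb{P}(X\in D)=1$, so $Y=g(X)$ is well defined up to a $\mathbb{P}$-null set.

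Next I would compute the law of $Y$ on an arbitrary Borel set $B\subseteq\mathbb{R}^n$. Writing $g^{-1}(B)=\{x\in D\colon g(x)\in B\}$, which is Borel since $g$ is continuous, and using that $\{Y\in B\}=\{X\in g^{-1}(B)\}$ up to a null set together with $\mathbb{P}(X\notin D)=0$, I obtain
\[
 \mathbb{P}(Y\in B)=\mathbb{P}\bigl(X\in g^{-1}(B)\bigr)=\int_{g^{-1}(B)} f_X(x)\,\dif x=\int_{g^{-1}(B\cap g(D))} f_X(x)\,\dif x ,
\]
the last equality because $g$ maps $D$ bijectively onto $g(D)$. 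Applying the change-of-variables theorem to the $C^1$ diffeomorphism $h$ (see, e.g., \cite{rudin}), with $x=h(y)$ and $\dif x=|Jh(y)|\,\dif y$, this becomes
\[
 \int_{B\cap g(D)} f_X\bigl(h(y)\bigr)\,|Jh(y)|\,\dif y=\int_B f_X\bigl(h(y)\bigr)\,|Jh(y)|\,\mathbf{1}_{g(D)}(y)\,\dif y .
\]

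Finally, since this identity holds for every Borel set $B$, the non-negative function $y\mapsto f_X(h(y))\,|Jh(y)|\,\mathbf{1}_{g(D)}(y)$ is a density of the law of $Y$ with respect to Lebesgue measure by the Radon--Nikodym Theorem \cite{Radon}, and this is exactly the piecewise expression in the statement. The only genuinely non-routine ingredient is the change-of-variables theorem for Lebesgue integrals under a $C^1$ diffeomorphism; everything else is bookkeeping, the points worth a moment's care being that $g$ is defined only on $D$ rather than on all of $\mathbb{R}^n$ (harmless thanks to $D_X\subseteq D$) and the measurability of the preimages $g^{-1}(B)$, which is immediate from continuity of $g$.
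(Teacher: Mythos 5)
Your proof is correct. Note that the paper itself does not prove this lemma at all: it is quoted verbatim as a known result, with a citation to Lemma 4.12 of the reference \cite{llibre_powell}, so there is no in-paper argument to compare against; your change-of-variables derivation (Inverse Function Theorem plus injectivity to get that $g(D)$ is open and $h$ is a $C^1$ diffeomorphism, then computing $\mathbb{P}(Y\in B)$ for arbitrary Borel $B$) is the standard proof and is essentially what the cited source does. One minor remark: the final appeal to the Radon--Nikodym Theorem is superfluous, since the identity $\mathbb{P}(Y\in B)=\int_B f_X(h(y))\,|Jh(y)|\,\mathbf{1}_{g(D)}(y)\,\dif y$ for every Borel $B$ already exhibits the nonnegative integrand as a density of $Y$ by the very definition of absolute continuity with a given density.
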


\begin{corollary}[Density Convolution Formula] \label{conv_dens} \cite[p.267]{bil} \cite[p.372]{pitman}
Let $X$ and $Y$ be independent and absolutely continuous random variables, with densities $f_X$ and $f_Y$, respectively. Then $X+Y$ is absolutely continuous, with density
\[ f_{X+Y}(z)=(f_X\ast f_Y)(z)=\int_\mathbb{R} f_X(z-w)f_Y(w)\,\dif w \]
(the symbol $\ast$ stands for the convolution).
\end{corollary}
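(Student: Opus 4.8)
The plan is to obtain the convolution formula directly from the definition of the distribution function of $X+Y$ together with Fubini--Tonelli, reading off the density only at the end from an absolutely continuous primitive; alternatively, the absolute continuity of $X+Y$ may be quoted from Remark \ref{xyabs}, and only the \emph{form} of the density derived by the computation below.

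First I would use independence to write the law of the vector $(X,Y)$ on $\mathbb{R}^2$ as the measure with density $(x,y)\mapsto f_X(x)f_Y(y)$ with respect to two-dimensional Lebesgue measure. Then, for fixed $z\in\mathbb{R}$,
\[
 F_{X+Y}(z)=\mathbb{P}(X+Y\le z)=\iint_{\{x+y\le z\}} f_X(x)f_Y(y)\,\dif x\,\dif y .
\]
Since the integrand is nonnegative, Tonelli's theorem permits iterating the integral; doing the $x$-integration first over $(-\infty,z-y]$ gives $F_{X+Y}(z)=\int_{\mathbb{R}} F_X(z-y)f_Y(y)\,\dif y$, where $F_X$ denotes the distribution function of $X$.

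The one genuine manipulation is a translation in the inner variable: writing $F_X(z-y)=\int_{-\infty}^{z-y} f_X(x)\,\dif x$ and substituting $x=w-y$ turns this into $\int_{-\infty}^{z} f_X(w-y)\,\dif w$. Substituting back and swapping the order of integration once more by Tonelli yields
\[
 F_{X+Y}(z)=\int_{-\infty}^{z}\left(\int_{\mathbb{R}} f_X(w-y)f_Y(y)\,\dif y\right)\dif w,
\]
so that $F_{X+Y}$ is the indefinite Lebesgue integral of $h(w):=\int_{\mathbb{R}} f_X(w-y)f_Y(y)\,\dif y=(f_X\ast f_Y)(w)$. Hence $X+Y$ is absolutely continuous with density $h$, which is indeed a density: it is nonnegative and, letting $z\to\infty$ above, $\int_{\mathbb{R}} h=1$.

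I do not expect a real obstacle here. The only points deserving a word are the measurability and the a.e.\ finiteness of $w\mapsto(f_X\ast f_Y)(w)$ — both of which fall out of the same computation, since it shows $\int_{\mathbb{R}}(f_X\ast f_Y)(w)\,\dif w=1$ — and the two appeals to Tonelli, which require no integrability hypothesis because every integrand in sight is nonnegative. So the argument is entirely routine, the bookkeeping with Fubini--Tonelli being the closest thing to a hard part.
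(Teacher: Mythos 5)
Your argument is correct: independence gives the joint density $f_X(x)f_Y(y)$, Tonelli justifies both interchanges, the translation $x=w-y$ is legitimate, and exhibiting $F_{X+Y}$ as the indefinite integral of $h=f_X\ast f_Y$ does yield absolute continuity with that density (with measurability of $h$ and its a.e.\ finiteness coming from the same Tonelli computation and $\int_{\mathbb{R}}h=1$). The paper itself does not prove the corollary: it quotes it from the cited textbooks and frames it as a particularization of the Random Variable Transformation technique (Lemma \ref{lema_abscont}) and as a companion to Remark \ref{xyabs}, whose convolution-of-laws argument gives only the absolute continuity of $X+Y$, not the explicit formula. The RVT route would apply the lemma to the bijection $(x,y)\mapsto(x+y,y)$ to get the joint density $f_X(z-w)f_Y(w)$ of $(X+Y,Y)$ and then integrate out $w$, which requires the additional (easy but usually unstated) step that marginalizing a joint density produces a density of the marginal. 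Your distribution-function/Tonelli derivation is the more elementary and self-contained of the two — it needs no change-of-variables lemma and no marginalization step — and it is essentially the proof in the cited reference; what the RVT route buys instead is uniformity with the paper's general machinery, since the same lemma is reused throughout for the other changes of variable.
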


\begin{lemma}[Karhunen-Lo\`{e}ve Theorem] \label{KLlemma} \cite[Th. 5.28]{llibre_powell} Consider a process $\{X(t):\,t\in\mathcal{T}\}$ in $\leb^2(\mathcal{T}\times\Omega)$. Then
\[ X(t,\omega)=\mu(t)+\sum_{j=1}^\infty \sqrt{\nu_j}\,\phi_j(t)\xi_j(\omega), \]
where the sum converges in $\leb^2(\mathcal{T}\times\Omega)$, $\mu(t)=\mathbb{E}[X(t)]$, $\{\phi_j\}_{j=1}^\infty$ is an orthonormal basis of $\leb^2(\mathcal{T})$, $\{(\nu_j,\phi_j)\}_{j=1}^\infty$ is the set of pairs of (nonnegative) eigenvalues and eigenfunctions of the operator
\begin{equation}
 \mathcal{C}:\leb^2(\mathcal{T})\rightarrow \leb^2(\mathcal{T}),\; \mathcal{C}f(t)=\int_{\mathcal{T}} \Cov[X(t),X(s)]f(s)\,\dif s, 
 \label{karhC}
\end{equation}
where $\Cov[\cdot,\cdot]$ is the covariance operator and $\{\xi_j\}_{j=1}^\infty$ is a sequence of random variables with zero expectation, unit variance and pairwise uncorrelated. Moreover, if $\{X(t):\,t\in\mathcal{T}\}$ is a Gaussian process, then $\{\xi_j\}_{j=1}^\infty$ are independent and Gaussian.
\end{lemma}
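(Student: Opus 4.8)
The plan is to realise the expansion as the image of $X$ under the spectral decomposition of its covariance operator, and then to verify in turn the stated properties of the coefficients, the $\leb^2(\mathcal{T}\times\Omega)$ convergence of the series, and the Gaussian refinement. First I would check that the covariance kernel $K(t,s):=\Cov[X(t),X(s)]$ lies in $\leb^2(\mathcal{T}\times\mathcal{T})$: since $X\in\leb^2(\mathcal{T}\times\Omega)$, the mean $\mu$ belongs to $\leb^2(\mathcal{T})$, and writing $\bar X:=X-\mu$, Cauchy--Schwarz and Tonelli bound $\int_{\mathcal{T}}\int_{\mathcal{T}}K(t,s)^2\,\dif t\,\dif s$ by $\|\bar X\|_{\leb^2(\mathcal{T}\times\Omega)}^4<\infty$. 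Hence $\mathcal{C}$ in (\ref{karhC}) is Hilbert--Schmidt on $\leb^2(\mathcal{T})$; it is self-adjoint because $K(t,s)=K(s,t)$, and positive semidefinite because $\langle\mathcal{C}f,f\rangle=\mathbb{E}\big[\big(\int_{\mathcal{T}}\bar X(t)f(t)\,\dif t\big)^2\big]\ge 0$. The spectral theorem for compact self-adjoint operators then furnishes a countable orthonormal basis $\{\phi_j\}$ of $\leb^2(\mathcal{T})$ consisting of eigenfunctions of $\mathcal{C}$, with eigenvalues $\nu_j\ge 0$ (augmenting, if necessary, the eigenfunctions of the nonzero eigenvalues by an orthonormal basis of $\ker\mathcal{C}$).

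Next I would introduce $c_j(\omega):=\int_{\mathcal{T}}\bar X(t,\omega)\phi_j(t)\,\dif t$ and put $\xi_j:=\nu_j^{-1/2}c_j$ when $\nu_j>0$. A direct computation with Tonelli gives $\mathbb{E}[c_j]=0$ and $\mathbb{E}[c_jc_k]=\langle\mathcal{C}\phi_k,\phi_j\rangle=\nu_k\delta_{jk}$, so the $\xi_j$ have zero mean, unit variance and are pairwise uncorrelated; moreover for $\nu_j=0$ the same identity forces $c_j=0$ a.s., so those indices contribute nothing to the series.

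For the convergence, put $X_M(t,\omega):=\mu(t)+\sum_{j=1}^M\sqrt{\nu_j}\,\phi_j(t)\xi_j(\omega)$ and expand the square, using the identities above and the orthonormality of $\{\phi_j\}$ in $\leb^2(\mathcal{T})$, to get
\[
\|X-X_M\|_{\leb^2(\mathcal{T}\times\Omega)}^2=\|\bar X\|_{\leb^2(\mathcal{T}\times\Omega)}^2-\sum_{j=1}^M\nu_j\ge 0 .
\]
From nonnegativity, $\sum_{j=1}^M\nu_j\le\|\bar X\|_{\leb^2(\mathcal{T}\times\Omega)}^2$ for all $M$, so $\mathcal{C}$ is trace class, the sequence $\{X_M\}$ is Cauchy in $\leb^2(\mathcal{T}\times\Omega)$ and converges to some $\tilde X$. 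To identify $\tilde X$ with $X$, I would fix $\omega$ outside a null set so that $\bar X(\cdot,\omega)\in\leb^2(\mathcal{T})$; since $\{\phi_j\}$ is an orthonormal basis of $\leb^2(\mathcal{T})$, the partial sums $\sum_{j=1}^M c_j(\omega)\phi_j$ converge to $\bar X(\cdot,\omega)$ in $\leb^2(\mathcal{T})$, i.e. $X_M(\cdot,\omega)\to X(\cdot,\omega)$, and passing to a subsequence along which $X_{M_k}\to\tilde X$ for a.e.\ $(t,\omega)$ forces $\tilde X=X$.

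Finally, in the Gaussian case I would note that $c_j$ is an $\leb^2(\Omega)$-limit of Gaussian random variables --- finite linear combinations of values of $X$, obtained by approximating $\phi_j$ in $\leb^2(\mathcal{T})$ by step functions --- hence Gaussian, and the same reasoning applied to arbitrary finite linear combinations shows that each finite family $(\xi_{j_1},\dots,\xi_{j_m})$ is jointly Gaussian; a jointly Gaussian, pairwise uncorrelated family is independent. I expect the convergence step to be the main obstacle: obtaining the telescoping identity for $\|X-X_M\|^2$ requires careful bookkeeping of the cross terms and justified interchanges of $\mathbb{E}$ and $\int_{\mathcal{T}}$ (legitimate since everything sits in $\leb^2$), and one has to realise that $\sum_j\nu_j<\infty$ comes for free from that very identity rather than from Mercer's theorem, which is why no continuity hypothesis on $K$ is required.
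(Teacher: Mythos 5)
The paper itself offers no proof of this lemma --- it is quoted directly from Th.~5.28 of \cite{llibre_powell} --- so there is no internal argument to compare against. Your proposal is the standard proof of the $\leb^2$, Mercer-free Karhunen--Lo\`{e}ve theorem, and it is essentially correct: the Hilbert--Schmidt and spectral step for $\mathcal{C}$, the computation $\mathbb{E}[c_jc_k]=\langle\mathcal{C}\phi_k,\phi_j\rangle=\nu_k\delta_{jk}$, the telescoping identity $\|X-X_M\|_{\leb^2(\mathcal{T}\times\Omega)}^2=\|\bar X\|_{\leb^2(\mathcal{T}\times\Omega)}^2-\sum_{j\le M}\nu_j$ (which indeed yields $\sum_j\nu_j<\infty$ with no continuity assumption on the covariance), and the identification of the $\leb^2(\mathcal{T}\times\Omega)$ limit with $X$ via pathwise Parseval are all sound. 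You could even shortcut the identification step: pathwise Parseval gives $\|\bar X(\cdot,\omega)\|_{\leb^2(\mathcal{T})}^2=\sum_j c_j(\omega)^2$, and taking expectations with monotone convergence shows directly that the right-hand side of your identity tends to $0$, so no subsequence extraction is needed.

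The one step you should tighten is the Gaussian refinement. Approximating $\phi_j$ by step functions writes $c_j$ as an $\leb^2(\Omega)$-limit of linear combinations of local integrals $\int_{E_i}\bar X(t)\,\dif t$, \emph{not} of values $\bar X(t_i)$; passing from such integrals to Riemann-type sums of values requires mean-square continuity of the process, which is not among the hypotheses. The standard repair: let $H\subset\leb^2(\Omega)$ be the closed linear span of $\{\bar X(t):\,t\in\mathcal{T}\}$, which is a Gaussian space; writing $c_j=Y_H+Y_\perp$ with $Y_H\in H$, $Y_\perp\perp H$, one gets $\mathbb{E}[Y_\perp^2]=\mathbb{E}\big[Y_\perp\int_{\mathcal{T}}\bar X(t)\phi_j(t)\,\dif t\big]=\int_{\mathcal{T}}\mathbb{E}[Y_\perp\bar X(t)]\phi_j(t)\,\dif t=0$, so $c_j\in H$. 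Hence every finite family $(c_{j_1},\dots,c_{j_m})$ is jointly Gaussian, and being uncorrelated it is independent, which is exactly the last assertion of the lemma. With that adjustment your argument fully establishes the cited result.
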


\subsection{Preliminaries on the $\leb^1(\Omega)$ and $\leb^2(\Omega)$ calculus} \label{preL} \ \\

In this section, we summarize the main results related to the so-called $\leb^p(\Omega)$
random calculus that will be required throughout our subsequent development. To read an exposition on $\leb^2(\Omega)$ calculus, see \cite[Ch.4]{Soong} and \cite[Ch.5]{llibre_powell}. In \cite{l2} authors combine $\leb^2(\Omega)$ and $\leb^4(\Omega)$ calculus, usually termed mean square and mean fourth random calculus, to solve random differential equations.

Let $X=\{X(t)(\omega):\,t\in I\subseteq\mathbb{R},\,\omega\in\Omega\}$ be a stochastic process, where $I$ is an open interval. Fix $1\leq p\leq\infty$. We say that $X$ is in $\leb^p(\Omega)$ if the random variable $X(t)$ belongs to $\leb^p(\Omega)$ for all $t\in I$. For such processes, we say that $X$ is continuous in the $\leb^p(\Omega)$ sense at $t_0\in I$ if 
\[ \lim_{h\rightarrow0} \|X(t_0+h)-X(t_0)\|_{\leb^p(\Omega)}=0. \]
We say that $X$ is continuous on $I$ in the $\leb^p(\Omega)$ sense if it is continuous in the $\leb^p(\Omega)$ sense at every $t_0\in I$.

It is said that $X$ is differentiable in the $\leb^p(\Omega)$ sense at $t_0\in I$ if there exists a random variable $X'(t_0)$ in $\leb^p(\Omega)$ such that
\[ \lim_{h\rightarrow0}\left\|\frac{X(t_0+h)-X(t_0)}{h}-X'(t_0)\right\|_{\leb^p(\Omega)}=0. \]
The random variable $X'(t_0)$ is called the $\leb^p(\Omega)$ derivative of $X(t)$ at $t_0$. We say that $X$ is differentiable on $I$ in the $\leb^p(\Omega)$ sense if it is differentiable in the $\leb^p(\Omega)$ sense at every $t_0\in I$. 

Differentiability in the $\leb^p(\Omega)$ sense at a point $t_0$ implies continuity in the $\leb^p(\Omega)$ sense at $t_0$, see \cite[p.95 (1)]{Soong}.

Recall Cauchy-Schwarz inequality: if $X$ and $Y$ are two random variables, then $\|XY\|_{\leb^1(\Omega)}\leq \|X\|_{\leb^2(\Omega)}\|Y\|_{\leb^2(\Omega)}$. As a consequence of this inequality, and with a similar prove to that of \cite[Lemma 3.14]{l2}, we have the following two results:

\begin{proposition} \label{prodcont}
Let $X=\{X(t)(\omega):\,t\in I,\,\omega\in\Omega\}$ and $Y=\{Y(t)(\omega):\,t\in I,\,\omega\in\Omega\}$ be two stochastic processes. Suppose that they are continuous at $t_0\in I$ in the $\leb^2(\Omega)$ sense. Then $XY=\{X(t)(\omega)Y(t)(\omega):\,t\in I,\,\omega\in\Omega\}$ is continuous at $t_0$ in the $\leb^1(\Omega)$ sense.
\end{proposition}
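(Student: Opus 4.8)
The plan is to reduce the $\leb^1(\Omega)$-continuity of the product to the $\leb^2(\Omega)$-continuity of the two factors by means of the Cauchy--Schwarz inequality recalled just above the statement. First I would write, for every $h$ with $t_0+h\in I$, the pointwise (in $\omega$) identity
\[ X(t_0+h)Y(t_0+h)-X(t_0)Y(t_0)=X(t_0+h)\bigl(Y(t_0+h)-Y(t_0)\bigr)+\bigl(X(t_0+h)-X(t_0)\bigr)Y(t_0). \]
Taking $\leb^1(\Omega)$ norms, using the triangle inequality and then Cauchy--Schwarz on each of the two summands, gives
\[ \|X(t_0+h)Y(t_0+h)-X(t_0)Y(t_0)\|_{\leb^1(\Omega)}\leq \|X(t_0+h)\|_{\leb^2(\Omega)}\|Y(t_0+h)-Y(t_0)\|_{\leb^2(\Omega)}+\|X(t_0+h)-X(t_0)\|_{\leb^2(\Omega)}\|Y(t_0)\|_{\leb^2(\Omega)}. \]
Since $X$ and $Y$ are $\leb^2(\Omega)$ processes, every norm on the right-hand side is finite; in particular this same estimate shows $X(t_0+h)Y(t_0+h)\in\leb^1(\Omega)$, so the left-hand side is meaningful.

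Next I would let $h\rightarrow0$ in this inequality. By the $\leb^2(\Omega)$-continuity of $Y$ at $t_0$ one has $\|Y(t_0+h)-Y(t_0)\|_{\leb^2(\Omega)}\rightarrow0$, and similarly $\|X(t_0+h)-X(t_0)\|_{\leb^2(\Omega)}\rightarrow0$. The quantity $\|Y(t_0)\|_{\leb^2(\Omega)}$ is a fixed finite constant, so the second summand tends to $0$. For the first summand it suffices to observe that $\|X(t_0+h)\|_{\leb^2(\Omega)}$ remains bounded for $h$ in a neighbourhood of $0$: indeed $\|X(t_0+h)\|_{\leb^2(\Omega)}\leq\|X(t_0)\|_{\leb^2(\Omega)}+\|X(t_0+h)-X(t_0)\|_{\leb^2(\Omega)}$ and the last term converges to $0$, hence is bounded near $h=0$. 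Therefore the first summand is a product of a bounded factor and a factor tending to $0$, so it also tends to $0$. Altogether $\|X(t_0+h)Y(t_0+h)-X(t_0)Y(t_0)\|_{\leb^1(\Omega)}\rightarrow0$ as $h\rightarrow0$, which is exactly the $\leb^1(\Omega)$-continuity of $XY$ at $t_0$.

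There is no genuine difficulty here: this is the standard ``product of two convergent sequences'' argument transported to the $\leb^p(\Omega)$ setting, following verbatim the scheme of \cite[Lemma 3.14]{l2} with the roles of $\leb^2$ and $\leb^4$ there played by $\leb^2$ and $\leb^1$ here via Cauchy--Schwarz. The only step that needs a line of justification rather than being immediate is the local boundedness of $h\mapsto\|X(t_0+h)\|_{\leb^2(\Omega)}$, which however is itself an elementary consequence of the $\leb^2(\Omega)$-continuity hypothesis on $X$.
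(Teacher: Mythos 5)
Your proof is correct and follows exactly the route the paper intends: the paper omits the details, merely invoking the Cauchy--Schwarz inequality and the analogy with the proof of Lemma 3.14 in \cite{l2}, and your decomposition plus the local boundedness of $\|X(t_0+h)\|_{\leb^2(\Omega)}$ is precisely that standard argument spelled out.
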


\begin{proposition} \label{producte}
Let $X=\{X(t)(\omega):\,t\in I,\,\omega\in\Omega\}$ and $Y=\{Y(t)(\omega):\,t\in I,\,\omega\in\Omega\}$ be two stochastic processes. Suppose that they are differentiable at $t_0\in I$ in the $\leb^2(\Omega)$ sense. Then $XY=\{X(t)(\omega)Y(t)(\omega):\,t\in I,\,\omega\in\Omega\}$ is differentiable at $t_0$ in the $\leb^1(\Omega)$ sense, with $(XY)'(t_0)=X'(t_0)Y(t_0)+X(t_0)Y'(t_0)$. 
\end{proposition}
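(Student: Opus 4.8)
The plan is to reproduce the classical proof of the product rule, using the Cauchy--Schwarz inequality $\|UV\|_{\leb^1(\Omega)}\le\|U\|_{\leb^2(\Omega)}\|V\|_{\leb^2(\Omega)}$ as the bridge that converts the available $\leb^2(\Omega)$ estimates into $\leb^1(\Omega)$ estimates. First I would observe that the candidate derivative $X'(t_0)Y(t_0)+X(t_0)Y'(t_0)$ does belong to $\leb^1(\Omega)$: all four random variables $X(t_0),X'(t_0),Y(t_0),Y'(t_0)$ lie in $\leb^2(\Omega)$ by hypothesis, so Cauchy--Schwarz shows each of the two products is in $\leb^1(\Omega)$.

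Next, for $h\neq0$ small enough that $t_0+h\in I$, I would use the standard decomposition
\[ \frac{X(t_0+h)Y(t_0+h)-X(t_0)Y(t_0)}{h}=X(t_0+h)\,\frac{Y(t_0+h)-Y(t_0)}{h}+\frac{X(t_0+h)-X(t_0)}{h}\,Y(t_0). \]
The second summand tends to $X'(t_0)Y(t_0)$ in $\leb^1(\Omega)$, since by Cauchy--Schwarz
\[ \Bigl\|\frac{X(t_0+h)-X(t_0)}{h}Y(t_0)-X'(t_0)Y(t_0)\Bigr\|_{\leb^1(\Omega)}\le\Bigl\|\frac{X(t_0+h)-X(t_0)}{h}-X'(t_0)\Bigr\|_{\leb^2(\Omega)}\|Y(t_0)\|_{\leb^2(\Omega)}\longrightarrow0 \]
by the $\leb^2(\Omega)$-differentiability of $X$ at $t_0$. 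For the first summand I would split further,
\[ X(t_0+h)\,\frac{Y(t_0+h)-Y(t_0)}{h}-X(t_0)Y'(t_0)=X(t_0+h)\Bigl(\frac{Y(t_0+h)-Y(t_0)}{h}-Y'(t_0)\Bigr)+\bigl(X(t_0+h)-X(t_0)\bigr)Y'(t_0), \]
and estimate each term in $\leb^1(\Omega)$ via Cauchy--Schwarz: the second term is bounded by $\|X(t_0+h)-X(t_0)\|_{\leb^2(\Omega)}\|Y'(t_0)\|_{\leb^2(\Omega)}$, which goes to $0$ because $\leb^2(\Omega)$-differentiability of $X$ at $t_0$ implies $\leb^2(\Omega)$-continuity of $X$ at $t_0$; the first term is bounded by $\|X(t_0+h)\|_{\leb^2(\Omega)}\,\|\frac{Y(t_0+h)-Y(t_0)}{h}-Y'(t_0)\|_{\leb^2(\Omega)}$.

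The one point that deserves a line of care --- essentially the only subtlety --- is the factor $\|X(t_0+h)\|_{\leb^2(\Omega)}$ in this last bound, which still depends on $h$; but again by $\leb^2(\Omega)$-continuity of $X$ at $t_0$ it converges to $\|X(t_0)\|_{\leb^2(\Omega)}$ and is thus bounded for $h$ in a neighbourhood of $0$, so that term also tends to $0$ by $\leb^2(\Omega)$-differentiability of $Y$ at $t_0$. Combining the three limits, the difference quotient converges in $\leb^1(\Omega)$ to $X'(t_0)Y(t_0)+X(t_0)Y'(t_0)$, which establishes both the $\leb^1(\Omega)$-differentiability of $XY$ at $t_0$ and the stated product formula. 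I do not anticipate any genuine obstacle here; the estimates are routine once the $\leb^2(\Omega)$-differentiability-implies-$\leb^2(\Omega)$-continuity fact is invoked.
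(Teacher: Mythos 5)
Your proof is correct and follows exactly the route the paper intends: the paper gives no written proof but states that the result follows from the Cauchy--Schwarz inequality by an argument analogous to the mean-square product rule it cites, i.e.\ the standard difference-quotient decomposition with Cauchy--Schwarz converting the $\leb^2(\Omega)$ estimates (plus differentiability-implies-continuity) into $\leb^1(\Omega)$ convergence, which is precisely what you wrote.
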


Another useful result is the following, see \cite[p.97]{Soong}:

\begin{proposition} \label{derexp}
Let $\{X_n\}_{n=1}^\infty$ be a sequence of random variables that converges in $\leb^1(\Omega)$ to the random variable $X$. Then $\lim_{n\rightarrow\infty} \mathbb{E}[X_n]=\mathbb{E}[X]$.

As a consequence, if $X=\{X(t)(\omega):\,t\in I,\,\omega\in\omega\}$ is a stochastic process that is differentiable in the $\leb^1(\Omega)$ sense at $t_0\in I$, then there exists
\[ \frac{\dif}{\dif t}\mathbb{E}[X(t_0)]=\mathbb{E}[X'(t_0)]. \]
\end{proposition}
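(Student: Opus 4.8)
The plan is to prove the two assertions in turn, the second following from the first once the relevant difference quotient has been identified.

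First I would dispose of the statement about sequences. The essential point is that the expectation acts as a bounded linear functional on $\leb^1(\Omega)$: combining linearity with the elementary inequality $|\mathbb{E}[Z]|\leq\mathbb{E}[|Z|]$ gives, for every $n$,
\[ |\mathbb{E}[X_n]-\mathbb{E}[X]| = |\mathbb{E}[X_n-X]| \leq \mathbb{E}[|X_n-X|] = \|X_n-X\|_{\leb^1(\Omega)}. \]
Since the right-hand side converges to $0$ by hypothesis, so does the left-hand side, which is exactly the claim $\lim_{n\to\infty}\mathbb{E}[X_n]=\mathbb{E}[X]$.

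For the consequence, I would fix $t_0\in I$ and, for $h\neq0$ small enough that $t_0+h\in I$, set $Q_h:=\bigl(X(t_0+h)-X(t_0)\bigr)/h$. By the definition of $\leb^1(\Omega)$-differentiability at $t_0$, one has $\|Q_h-X'(t_0)\|_{\leb^1(\Omega)}\to0$ as $h\to0$. The very same inequality as above, now applied with $Q_h$ in place of $X_n$, yields
\[ \left|\mathbb{E}[Q_h]-\mathbb{E}[X'(t_0)]\right| \leq \|Q_h-X'(t_0)\|_{\leb^1(\Omega)}, \]
so $\mathbb{E}[Q_h]\to\mathbb{E}[X'(t_0)]$ as $h\to0$. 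On the other hand, linearity of the expectation gives $\mathbb{E}[Q_h]=\bigl(\mathbb{E}[X(t_0+h)]-\mathbb{E}[X(t_0)]\bigr)/h$, which is precisely the difference quotient at $t_0$ of the real-valued function $t\mapsto\mathbb{E}[X(t)]$. Hence that difference quotient converges as $h\to0$, and its limit $\mathbb{E}[X'(t_0)]$ is by definition the derivative $\frac{\dif}{\dif t}\mathbb{E}[X(t_0)]$; in particular the derivative exists.

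I do not expect a genuine obstacle here. The only subtlety worth a word is that the first assertion is phrased for sequences while the consequence involves the continuous limit $h\to0$; this is harmless, since the displayed bound $|\mathbb{E}[Q_h]-\mathbb{E}[X'(t_0)]|\leq\|Q_h-X'(t_0)\|_{\leb^1(\Omega)}$ holds for each fixed $h$ and transfers $\leb^1(\Omega)$-convergence to convergence of expectations directly — or, if one insists on using the sequential statement verbatim, one simply runs it along an arbitrary sequence $h_k\to0$ and invokes the characterization of limits by sequences.
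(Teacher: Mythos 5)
Your proof is correct: the bound $|\mathbb{E}[X_n]-\mathbb{E}[X]|\leq\|X_n-X\|_{\leb^1(\Omega)}$ and its application to the difference quotient is exactly the standard argument, and your remark on passing from the sequential statement to the limit $h\to0$ handles the only point of care. The paper itself gives no proof of this proposition (it simply cites Soong, p.~97), and your argument matches that standard reference approach.
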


A similar result but in terms of continuity holds:

\begin{proposition} \label{contexp}
If $X=\{X(t)(\omega):\,t\in I,\,\omega\in\omega\}$ is a stochastic process that is continuous in the $\leb^1(\Omega)$ sense at $t_0\in I$, then $\mathbb{E}[X(t)]$ is continuous at $t_0$.

If $X$ is continuous in the $\leb^2(\Omega)$ sense at $t_0\in I$, then $\|X(t)\|_{\leb^2(\Omega)}$ is continuous at $t_0$.
\end{proposition}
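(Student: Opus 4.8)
The plan is to derive both assertions directly from elementary inequalities, either via the sequential characterisation of continuity combined with Proposition \ref{derexp}, or equivalently through a direct $\varepsilon$-estimate; no passage to the limit beyond what is already packaged in the hypotheses is needed.

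For the first statement, I would fix $h$ with $t_0+h\in I$ and estimate, using $|\mathbb{E}[\cdot]|\le\mathbb{E}[|\cdot|]$ (Jensen's inequality),
\[ |\mathbb{E}[X(t_0+h)]-\mathbb{E}[X(t_0)]|=|\mathbb{E}[X(t_0+h)-X(t_0)]|\le\mathbb{E}[|X(t_0+h)-X(t_0)|]=\|X(t_0+h)-X(t_0)\|_{\leb^1(\Omega)}, \]
and observe that the right-hand side tends to $0$ as $h\to0$ precisely by the assumed $\leb^1(\Omega)$-continuity of $X$ at $t_0$. Alternatively, one may phrase this via sequences: for any $t_n\to t_0$ in $I$ the hypothesis gives $X(t_n)\to X(t_0)$ in $\leb^1(\Omega)$, whence $\mathbb{E}[X(t_n)]\to\mathbb{E}[X(t_0)]$ by Proposition \ref{derexp}; since the sequence is arbitrary, $t\mapsto\mathbb{E}[X(t)]$ is continuous at $t_0$.

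For the second statement, I would invoke the reverse triangle inequality in the normed space $\leb^2(\Omega)$:
\[ \bigl|\,\|X(t_0+h)\|_{\leb^2(\Omega)}-\|X(t_0)\|_{\leb^2(\Omega)}\,\bigr|\le\|X(t_0+h)-X(t_0)\|_{\leb^2(\Omega)}, \]
and again the right-hand side vanishes as $h\to0$ by the $\leb^2(\Omega)$-continuity hypothesis, so $t\mapsto\|X(t)\|_{\leb^2(\Omega)}$ is continuous at $t_0$.

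There is essentially no obstacle in this proof: each part collapses to a single inequality. The only point worth keeping straight is which inequality is used in which case — $|\mathbb{E}[Z]|\le\|Z\|_{\leb^1(\Omega)}$ for the first, the $\leb^2(\Omega)$ triangle inequality for the second. It is also worth noting, for use elsewhere in the paper, that on a probability space $\leb^2(\Omega)\subseteq\leb^1(\Omega)$ with $\|\cdot\|_{\leb^1(\Omega)}\le\|\cdot\|_{\leb^2(\Omega)}$, so $\leb^2(\Omega)$-continuity at $t_0$ automatically implies $\leb^1(\Omega)$-continuity at $t_0$, and hence the continuity of $\mathbb{E}[X(t)]$ at $t_0$ as well.
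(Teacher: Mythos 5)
Your proof is correct and is exactly the standard argument the paper has in mind (the paper states this proposition without an explicit proof, as an analogue of Proposition \ref{derexp}): the bound $|\mathbb{E}[X(t_0+h)]-\mathbb{E}[X(t_0)]|\le\|X(t_0+h)-X(t_0)\|_{\leb^1(\Omega)}$ for the first part and the reverse triangle inequality in $\leb^2(\Omega)$ for the second. Your closing remark that $\leb^2(\Omega)$-continuity implies $\leb^1(\Omega)$-continuity on a probability space is also correct and consistent with how the proposition is used later in the paper.
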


\section{Solution stochastic process to the randomized heat equation with non-homogeneous boundary conditions}

Consider the general form of the deterministic heat equation,
\begin{equation} \begin{cases} u_t=\alpha^2 u_{xx},\quad L_1<x<L_2,\;t>0, \\ u(L_1,t)=A,\;u(L_2,t)=B,\quad t\geq0, \\ u(x,0)=\phi(x),\quad L_1\leq x\leq L_2. \end{cases} 
\label{edp_determinista_comp}
\end{equation}
This is a generalization of the PDE problem (\ref{edp_determinista_homo}) studied in \cite{jjcm}. The diffusion coefficient is $\alpha^2>0$, the boundary conditions are constants $A$ and $B$ and the initial condition is given by the function $\phi(x)$.

First of all, we relate the solution $u(x,t)$ of (\ref{edp_determinista_comp}) and the solution $v(x,t)$ of (\ref{edp_determinista_homo}). We need to transform the domain from $[0,1]$ to $[L_1,L_2]$ and we need to translate $v$ so that the new boundary conditions hold. The relation of both solutions is
\begin{equation}
 u(x,t)=v\left(\frac{x-L_1}{L_2-L_1},t\right)+\frac{x-L_1}{L_2-L_1}B+\frac{L_2-x}{L_2-L_1}A, 
 \label{sol_u}
\end{equation}
where $x\in [L_1,L_2]$ and $t\geq0$, and $v(y,t)$ is the solution of (\ref{edp_determinista_homo}) with diffusion coefficient
\begin{equation} \beta^2=\frac{\alpha^2}{(L_2-L_1)^2} 
 \label{beta}
\end{equation}
and initial condition
\begin{equation} \psi(y)=\varphi(L_1+y(L_2-L_1)),\;\;\varphi(x)=\phi(x)-\frac{x-L_1}{L_2-L_1}B-\frac{L_2-x}{L_2-L_1}A, 
 \label{psi}
\end{equation}
for $y\in [0,1]$. Theorem \ref{c1} generalizes to problem (\ref{edp_determinista_comp}).

\begin{theorem}
If $\phi$ is continuous on $[L_1,L_2]$, piecewise $C^1$ on $[L_1,L_2]$, $\phi(L_1)=A$ and $\phi(L_2)=B$, then $u(x,t)$ is continuous on $[L_1,L_2]\times [0,\infty)$, is of class $C^{2,1}$ on $(L_1,L_2)\times(0,\infty)$ and is a classical solution of (\ref{edp_determinista_comp}).
\end{theorem}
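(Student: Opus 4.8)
The plan is to reduce the assertion for the non-homogeneous problem \eqref{edp_determinista_comp} to the already-established Theorem \ref{c1} for the homogeneous problem \eqref{edp_determinista_homo} via the explicit change of variables \eqref{sol_u}--\eqref{psi}. First I would verify that the hypotheses on $\phi$ transfer to $\psi$: the affine ``lifting'' term $\frac{x-L_1}{L_2-L_1}B+\frac{L_2-x}{L_2-L_1}A$ is a polynomial in $x$, hence $C^\infty$, so $\varphi=\phi-(\text{affine})$ inherits continuity and piecewise-$C^1$ regularity on $[L_1,L_2]$ from $\phi$; composing with the affine diffeomorphism $y\mapsto L_1+y(L_2-L_1)$ from $[0,1]$ onto $[L_1,L_2]$ preserves these properties, so $\psi$ is continuous and piecewise $C^1$ on $[0,1]$. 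Evaluating the affine term at the endpoints gives $\varphi(L_1)=\phi(L_1)-A=0$ and $\varphi(L_2)=\phi(L_2)-B=0$, whence $\psi(0)=\varphi(L_1)=0$ and $\psi(1)=\varphi(L_2)=0$. Thus Theorem \ref{c1} applies to $v(y,t)$ with diffusion coefficient $\beta^2$ as in \eqref{beta}.

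Next I would push the conclusions of Theorem \ref{c1} back through \eqref{sol_u}. Since $u(x,t)=v\!\left(\frac{x-L_1}{L_2-L_1},t\right)+(\text{affine in }x)$ and the map $(x,t)\mapsto\left(\frac{x-L_1}{L_2-L_1},t\right)$ is a $C^\infty$ diffeomorphism carrying $[L_1,L_2]\times[0,\infty)$ onto $[0,1]\times[0,\infty)$ and $(L_1,L_2)\times(0,\infty)$ onto $(0,1)\times(0,\infty)$, the continuity of $v$ on $[0,1]\times[0,\infty)$ yields continuity of $u$ on $[L_1,L_2]\times[0,\infty)$, and the $C^{2,1}$ regularity of $v$ on $(0,1)\times(0,\infty)$ yields $C^{2,1}$ regularity of $u$ on $(L_1,L_2)\times(0,\infty)$ (the added affine term is smooth and does not affect this). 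For the PDE itself, write $y=\frac{x-L_1}{L_2-L_1}$; then by the chain rule $u_{xx}(x,t)=\frac{1}{(L_2-L_1)^2}v_{yy}(y,t)$ and $u_t(x,t)=v_t(y,t)$, and since $v_t=\beta^2 v_{yy}$ with $\beta^2=\frac{\alpha^2}{(L_2-L_1)^2}$, we get $u_t=\alpha^2 u_{xx}$ on $(L_1,L_2)\times(0,\infty)$. The second $x$-derivative of the affine term vanishes, so it contributes nothing.

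Finally I would check the boundary and initial conditions. At $x=L_1$ we have $y=0$, so $u(L_1,t)=v(0,t)+\frac{L_2-L_1}{L_2-L_1}A=0+A=A$; at $x=L_2$ we have $y=1$, so $u(L_2,t)=v(1,t)+B=B$, for all $t\ge0$. At $t=0$, $u(x,0)=v(y,0)+(\text{affine})=\psi(y)+\frac{x-L_1}{L_2-L_1}B+\frac{L_2-x}{L_2-L_1}A=\varphi(x)+\frac{x-L_1}{L_2-L_1}B+\frac{L_2-x}{L_2-L_1}A=\phi(x)$ by the definition \eqref{psi} of $\varphi$. This completes all three requirements of Definition of classical solution for \eqref{edp_determinista_comp}. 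There is no serious obstacle here: the only point requiring a little care is confirming that the ``piecewise $C^1$'' property is genuinely preserved under both subtraction of a smooth function and composition with an affine diffeomorphism (the finitely many breakpoints of $\phi$ map to finitely many breakpoints of $\psi$), and that the chain-rule manipulation is legitimate precisely on the open region $(L_1,L_2)\times(0,\infty)$ where Theorem \ref{c1} guarantees $v\in C^{2,1}$.
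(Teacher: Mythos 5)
Your proof is correct and follows essentially the same route as the paper's: transfer the hypotheses on $\phi$ to $\psi$ via (\ref{psi}), apply Theorem \ref{c1} to obtain the regularity and solution properties of $v$, and then push these back through the change of variables (\ref{sol_u}) using the chain rule to verify the PDE, boundary and initial conditions exactly as in (\ref{utuxx})--(\ref{ux0}). The extra care you take with the preservation of piecewise $C^1$ regularity under subtraction of the affine lift and affine reparametrization is a welcome, though minor, elaboration of the paper's argument.
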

\begin{proof}
From (\ref{psi}), we have that $\psi$ is continuous on $[0,1]$ and piecewise $C^1$ on $[0,1]$. We also have $\psi(0)=\varphi(L_1)=\phi(L_1)-A=0$ and $\psi(1)=\varphi(L_2)=\phi(L_2)-B=0$. By Theorem \ref{c1}, $v$ is continuous on $[0,1]\times [0,\infty)$, is of class $C^{2,1}$ on $(0,1)\times (0,\infty)$ and is a classical solution of (\ref{edp_determinista_homo}). From (\ref{sol_u}), we deduce that $u$ is continuous on $[L_1,L_2]\times [0,\infty)$ and of class $C^{2,1}$ on $(L_1,L_2)\times(0,\infty)$. We have
\begin{align}
 u_t(x,t)= {} & v_t\left(\frac{x-L_1}{L_2-L_1},t\right)=\beta^2v_{xx}\left(\frac{x-L_1}{L_2-L_1},t\right) \nonumber \\
= {} & \frac{\alpha^2}{(L_2-L_1)^2}v_{xx}\left(\frac{x-L_1}{L_2-L_1},t\right)=\alpha^2 u_{xx}(x,t), \label{utuxx}
\end{align}
\begin{equation}
 u(L_1,t)=v(0,t)+A=A,\quad u(L_2,t)=v(1,t)+B=B
 \label{cb}
\end{equation}
and
\begin{align}
 u(x,0)= {} & v\left(\frac{x-L_1}{L_2-L_1},0\right)+\frac{x-L_1}{L_2-L_1}B+\frac{L_2-x}{L_2-L_1}A \nonumber \\
= {} & \psi\left(\frac{x-L_1}{L_2-L_1}\right)+\frac{x-L_1}{L_2-L_1}B+\frac{L_2-x}{L_2-L_1}A \nonumber \\
= {} & \varphi(x)+\frac{x-L_1}{L_2-L_1}B+\frac{L_2-x}{L_2-L_1}A=\phi(x).  \label{ux0}
\end{align}

\end{proof}

We reformulate the heat equation (\ref{edp_determinista_comp}) in a random setting. Given a complete probability space $(\Omega,\mathcal{F},\mathbb{P})$, we will consider the diffusion coefficient $\alpha^2(\omega)$ and the boundary conditions $A(\omega)$ and $B(\omega)$ as random variables, and the initial condition as a stochastic process
\[ \phi=\{\phi(x)(\omega):\,x\in [L_1,L_2],\,\omega\in\Omega\} \]
in the underlying probability space. In this random setting, the change of variable (\ref{sol_u})--(\ref{psi}) is understood pointwise in $\omega\in\Omega$:
\begin{equation}
 u(x,t)(\omega)=v\left(\frac{x-L_1}{L_2-L_1},t\right)(\omega)+\frac{x-L_1}{L_2-L_1}B(\omega)+\frac{L_2-x}{L_2-L_1}A(\omega), 
 \label{sol_u_w}
\end{equation}
where $x\in [L_1,L_2]$ and $t\geq0$, and $v(y,t)(\omega)$ is the solution stochastic process of (\ref{edp_determinista_homo}) given by (\ref{sol_homo_random})--(\ref{An}) with random diffusion coefficient
\begin{equation} 
 \beta^2(\omega)=\frac{\alpha^2(\omega)}{(L_2-L_1)^2} 
 \label{beta_w}
\end{equation}
and random initial condition
\begin{equation} \psi(y)(\omega)=\varphi(L_1+y(L_2-L_1))(\omega),\;\;\varphi(x)(\omega)=\phi(x)(\omega)-\frac{x-L_1}{L_2-L_1}B(\omega)-\frac{L_2-x}{L_2-L_1}A(\omega), 
 \label{psi_w}
\end{equation}
for $y\in [0,1]$.

We want to study in which sense the stochastic process $u(x,t)(\omega)$ given by (\ref{sol_u_w}) is a rigorous solution to the randomized heat equation (\ref{edp_determinista_comp}). Next Theorem \ref{st_sol_2} generalizes Theorem \ref{st_sol}. Moreover, uniqueness is proved, which is a novelty compared with \cite{jjcm}.

\begin{theorem} \label{st_sol_2}
The following statements hold:\\
\vspace{-0.5cm}
\begin{itemize}
\item[i)] a.s. solution: Suppose that $\phi\in \leb^2([L_1,L_2]\times\Omega)$ and $A,B\in\leb^2(\Omega)$. Then
\[ u_t(x,t)(\omega)=\alpha^2(\omega)\,u_{xx}(x,t)(\omega) \]
a.s. for $x\in (L_1,L_2)$ and $t>0$, where the derivatives are understood in the classical sense; $u(L_1,t)(\omega)=A(\omega)$ and $u(L_2,t)(\omega)=B(\omega)$ a.s. for $t\geq0$; and $u(x,0)(\omega)=\phi(x)(\omega)$ a.s. for a.e. $x\in [L_1,L_2]$. Moreover, the process $u(x,t)(\omega)$ satisfying these conditions is unique.
\item[ii)] $\leb^2$ solution: Suppose that $\phi\in \leb^2([L_1,L_2]\times\Omega)$, $A,B\in\leb^2(\Omega)$ and $0<a\leq\alpha^2(\omega)\leq b$, a.e. $\omega\in\Omega$, for certain $a,b\in\mathbb{R}$. Then
\[ u_t(x,t)(\omega)=\alpha^2(\omega)\,u_{xx}(x,t)(\omega) \]
a.s. for $x\in (L_1,L_2)$ and $t>0$, where the derivatives are understood in the mean square sense; $u(L_1,t)(\omega)=A(\omega)$ and $u(L_2,t)(\omega)=B(\omega)$ a.s. for $t\geq0$; and $u(x,0)(\omega)=\phi(x)(\omega)$ a.s. for a.e. $x\in [L_1,L_2]$. Moreover, the process $u(x,t)(\omega)$ satisfying these conditions is unique.
\end{itemize}
\end{theorem}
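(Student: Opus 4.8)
The plan is to derive all three assertions of each item from Theorem \ref{st_sol} by transporting them through the change of variables (\ref{sol_u_w})--(\ref{psi_w}), mirroring the pathwise computation (\ref{utuxx})--(\ref{ux0}), and then to establish uniqueness by an energy argument.

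First I would verify that the auxiliary data $\beta^2$ and $\psi$ satisfy the hypotheses of Theorem \ref{st_sol}. Since $\phi\in\leb^2([L_1,L_2]\times\Omega)$ and $A,B\in\leb^2(\Omega)$, and the $x$-coefficients in (\ref{psi_w}) are bounded on $[L_1,L_2]$, the corrected process $\varphi$ lies in $\leb^2([L_1,L_2]\times\Omega)$; composing with the affine bijection $y\mapsto L_1+y(L_2-L_1)$, whose (constant) derivative is $L_2-L_1$, gives $\psi\in\leb^2([0,1]\times\Omega)$. For item ii), $0<a\le\alpha^2\le b$ a.e.\ translates via (\ref{beta_w}) into $0<a/(L_2-L_1)^2\le\beta^2\le b/(L_2-L_1)^2$ a.e. Hence Theorem \ref{st_sol} i) (resp.\ ii)) applies to the process $v$ appearing in (\ref{sol_u_w}).

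Next, fixing $\omega$ outside a suitable null set, I would differentiate (\ref{sol_u_w}) pathwise: the affine term is independent of $t$, so $u_t(x,t)(\omega)=v_t\big(\frac{x-L_1}{L_2-L_1},t\big)(\omega)$, and it is linear in $x$, so $u_{xx}(x,t)(\omega)=(L_2-L_1)^{-2}v_{xx}\big(\frac{x-L_1}{L_2-L_1},t\big)(\omega)$; combining with $v_t=\beta^2v_{xx}$ and (\ref{beta_w}) yields $u_t=\alpha^2u_{xx}$ in the classical (resp.\ mean square) sense, exactly as in (\ref{utuxx}). Evaluating (\ref{sol_u_w}) at $x=L_1$ and $x=L_2$ and using $v(0,t)=v(1,t)=0$ gives the boundary values $A$ and $B$ as in (\ref{cb}); evaluating at $t=0$ and using $v(y,0)(\omega)=\psi(y)(\omega)$ for a.e.\ $y$ together with (\ref{psi_w}) gives $u(x,0)(\omega)=\phi(x)(\omega)$ for a.e.\ $x$ as in (\ref{ux0}), the change of variable preserving Lebesgue-null sets. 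For item ii) the mean square product and chain rules needed here are Propositions \ref{prodcont}--\ref{producte}, applied with the (deterministic, hence $\leb^2(\Omega)$-smooth) affine coefficients.

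Finally, for uniqueness let $u_1,u_2$ satisfy the listed conditions and put $w=u_1-u_2$, a solution of the homogeneous problem with zero lateral and zero initial data. In item i), fixing a sample path and setting $E(t)=\frac12\int_{L_1}^{L_2}w(x,t)^2\,\dif x$, integration by parts in $x$ (the boundary terms vanishing because $w(L_1,t)=w(L_2,t)=0$) gives $E'(t)=-\alpha^2\int_{L_1}^{L_2}w_x(x,t)^2\,\dif x\le0$, so $E$ is non-increasing on $(0,\infty)$; since $w(\cdot,t)\to w(\cdot,0)=0$ in $\leb^2([L_1,L_2])$ as $t\to0^+$, we conclude $E\equiv0$, hence $w\equiv0$ and $u_1=u_2$ a.s. In item ii) the same computation is carried out for $e(t)=\frac12\|w(\cdot,t)\|_{\leb^2([L_1,L_2]\times\Omega)}^2$, differentiating under the expectation by Propositions \ref{producte} and \ref{derexp} and interchanging $\mathbb{E}$ and $\int_{L_1}^{L_2}\dif x$ by Fubini. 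I expect the only real obstacle to lie in this last step: the energy identity needs a solution regular enough to integrate by parts up to the lateral boundary and to satisfy $E(0^+)=0$, so uniqueness must be understood within the class of processes whose sample paths are classical solutions in the usual sense (continuous on $[L_1,L_2]\times[0,\infty)$ and $C^{2,1}$ inside) — the class in which existence was just proved — or else deduced from the classical deterministic uniqueness theorem for the heat equation on a bounded interval (via the maximum principle), applied pathwise.
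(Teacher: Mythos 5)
Your proposal is correct and follows essentially the same route as the paper: transfer the statements of Theorem \ref{st_sol} through the change of variables (\ref{sol_u_w})--(\ref{psi_w}) after checking $\psi\in\leb^2([0,1]\times\Omega)$ as in (\ref{psiL2}), verify the affine chain rule for the mean square derivative, and prove uniqueness by the energy method — pathwise for i) and via $I(t)=\int_{L_1}^{L_2}\mathbb{E}[u(x,t)^2]\,\dif x$ (your $e(t)$) with the $\leb^1/\leb^2$ calculus propositions for ii). Your closing caveat that uniqueness must be read within the regularity class in which existence was established matches the paper's own formulation of the energy argument.
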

\begin{proof} \ \\
\begin{itemize}
\item[i)] a.s. solution: By (\ref{psi_w}) and the triangular inequality,
\small
\begin{align}
 \|\psi {} & \|_{\leb^2([0,1]\times \Omega)}=\left(\mathbb{E}\left[\int_0^1 \psi(y)^2\,\dif y\right]\right)^{\frac12}=\left(\mathbb{E}\left[\int_0^1 \varphi(L_1+y(L_2-L_1))^2\,\dif y\right]\right)^{\frac12} \nonumber \\
= {} & \frac{1}{\sqrt{L_2-L_1}} \left(\mathbb{E}\left[\int_{L_1}^{L_2} \varphi(x)^2\,\dif x\right]\right)^{\frac12} = \frac{1}{\sqrt{L_2-L_1}} \|\varphi\|_{\leb^2([L_1,L_2]\times \Omega)} \nonumber \\
\leq {} & \frac{1}{\sqrt{L_2-L_1}}\left(\|\phi\|_{\leb^2([L_1,L_2]\times \Omega)}+\frac{x-L_1}{L_2-L_1}\|B\|_{\leb^2([L_1,L_2]\times \Omega)}+\frac{L_2-x}{L_2-L_1}\|A\|_{\leb^2([L_1,L_2]\times \Omega)}\right) \nonumber \\
= {} & \frac{1}{\sqrt{L_2-L_1}}\|\phi\|_{\leb^2([L_1,L_2]\times \Omega)}+\frac{x-L_1}{L_2-L_1}\|B\|_{\leb^2(\Omega)}+\frac{L_2-x}{L_2-L_1}\|A\|_{\leb^2(\Omega)}<\infty. \label{psiL2}
\end{align}
\normalsize
Then $\psi\in\leb^2([0,1]\times \Omega)$. By Theorem \ref{st_sol}, $v_t(x,t)(\omega)=\beta^2(\omega)\,v_{xx}(x,t)(\omega)$ a.s. for $x\in (0,1)$ and $t>0$, where the derivatives are understood in the classical sense; $v(0,t)(\omega)=v(1,t)(\omega)=0$ a.s. for $t\geq0$; and $v(x,0)(\omega)=\psi(x)(\omega)$ a.s. for a.e. $x\in [0,1]$. As we did in (\ref{utuxx})--(\ref{ux0}), we derive that: $u_t(x,t)(\omega)=\alpha^2(\omega)\,u_{xx}(x,t)(\omega)$ a.s. for $x\in (L_1,L_2)$ and $t>0$, where the derivatives are understood in the classical sense; $u(L_1,t)(\omega)=A(\omega)$ and $u(L_2,t)(\omega)=B(\omega)$ a.s. for $t\geq0$; and $u(x,0)(\omega)=\phi(x)(\omega)$ a.s. for a.e. $x\in [L_1,L_2]$.

Uniqueness follows from the so-called energy method \cite[p.30--31]{salsa}.

\item[ii)] $\leb^2$ solution: By (\ref{psiL2}), $\psi\in\leb^2([0,1]\times \Omega)$. By Theorem \ref{st_sol}, $v_t(x,t)(\omega)=\beta^2(\omega)\,v_{xx}(x,t)(\omega)$ a.s. for $x\in (0,1)$ and $t>0$, where the derivatives are understood in the mean square sense; $v(0,t)(\omega)=v(1,t)(\omega)=0$ a.s. for $t\geq0$; and $v(x,0)(\omega)=\psi(x)(\omega)$ a.s. for a.e. $x\in [0,1]$. Equalities (\ref{cb}) and (\ref{ux0}) hold in this setting as well. We need to check that the chain rule applied in (\ref{utuxx}) holds for the $\leb^2$ derivative too. Notice that, by the structure of relation (\ref{sol_u_w}), it is enough to check that: if $X(x)(\omega)=Y(cx+d)(\omega)+E(\omega)x+F(\omega)$, where $c,d\in\mathbb{R}$ and $Y$ differentiable in the $\leb^2$ sense, then $X$ is differentiable in the $\leb^2$ sense and $X'(x)(\omega)=cY'(cx+d)(\omega)+E(\omega)$. This is a consequence of the following two limits:
\small
\begin{align*} {} & \lim_{h\rightarrow0} \left\|\frac{Y(c(x+h)+d)-Y(cx+d)}{h}-cY'(cx+d)\right\|_{\leb^2(\Omega)} \\
= {} & |c|\lim_{h\rightarrow0} \left\|\frac{Y(cx+d+ch)-Y(cx+d)}{ch}-Y'(cx+d)\right\|_{\leb^2(\Omega)}=0, 
\end{align*}
\[ \lim_{h\rightarrow0} \left\|\frac{(E(\omega)(x+h)+F(\omega))-(E(\omega)x+F(\omega))}{h}-E(\omega)\right\|_{\leb^2(\Omega)}=0. \]
\normalsize
Thus, the differentiations in (\ref{utuxx}) are justified in the $\leb^2(\Omega)$ sense, and the conclusion of the theorem follows: $u_t(x,t)(\omega)=\alpha^2(\omega)\,u_{xx}(x,t)(\omega)$ a.s. for $x\in (L_1,L_2)$ and $t>0$, where the derivatives are understood in the mean square sense; $u(L_1,t)(\omega)=A(\omega)$ and $u(L_2,t)(\omega)=B(\omega)$ a.s. for $t\geq0$; and $u(x,0)(\omega)=\phi(x)(\omega)$ a.s. for a.e. $x\in [L_1,L_2]$.

To show uniqueness, we try to adapt the energy method \cite[p.30--31]{salsa} to this setting. We prove the following: if $u$ is $C^{2,1}((L_1,L_2)\times (0,\infty))$ in the sense of $\leb^2(\Omega)$, with continuous partial derivatives on $[L_1,L_2]\times [0,\infty)$ in the sense of $\leb^2(\Omega)$, $u_t=\alpha^2u_{xx}$ on $(L_1,L_2)\times (0,\infty)$, $u(L_1,t)=u(L_2,t)=0$ a.s. on $[0,\infty)$ and $u(x,0)=0$ a.s. at a.e. $x\in [L_1,L_2]$, then $u(x,t)=0$ a.s. for all $x\in [L_1,L_2]$ and $t\geq0$. From this fact, uniqueness will follow.

Let $I(t)=\int_{L_1}^{L_2} \mathbb{E}[u(x,t)^2]\,\dif x$. Fixed $t\geq0$, as a consequence of the continuity of $u(\cdot,t)$ in the $\leb^2(\Omega)$ sense and Proposition \ref{contexp}, the real map $x\in [L_1,L_2]\mapsto \mathbb{E}[u(x,t)^2]$ is continuous and $I(t)$ is well-defined. Fixed $x$, as $u(x,\cdot)$ is differentiable in the $\leb^2(\Omega)$ sense, by Proposition \ref{producte} and Proposition \ref{derexp} we have:
\[ \frac{\partial}{\partial t}\mathbb{E}[u(x,t)^2]=\mathbb{E}\bigg[\frac{\partial}{\partial t}\left(u(x,t)^2\right)\bigg]=2\mathbb{E}[u(x,t)u_t(x,t)], \]
where the partial derivative $\frac{\partial}{\partial t}$ inside the expectation operator must be understood on the $\leb^1(\Omega)$ sense.
Then, using Cauchy-Schwarz inequality,
$$
\left|\frac{\partial}{\partial t}\mathbb{E}\left[u(x,t)^2\right]\right|\leq 2\mathbb{E}[|u(x,t)||u_t(x,t)|]\leq 2\|u(x,t)\|_{\leb^2(\Omega)}\|u_t(x,t)\|_{\leb^2(\Omega)}. 
$$
As both $u(x,t)$ and $u_t(x,t)$ are continuous on $[L_1,L_2]\times [0,\infty)$ in the $\leb^2(\Omega)$ sense, by Proposition \ref{contexp} both $\|u(x,t)\|_{\leb^2(\Omega)}$ and $\|u_t(x,t)\|_{\leb^2(\Omega)}$ are continuous on $[L_1,L_2]\times [0,\infty)$ in the classical sense. Fix $t_0>0$ and $\delta>0$ small. Then there is a constant $C>0$ such that $\|u(x,t)\|_{\leb^2(\Omega)}\leq C$ and $\|u_t(x,t)\|_{\leb^2(\Omega)}\leq C$ for all $x\in [L_1,L_2]$ and $t\in [t_0-\delta,t_0+\delta]$, by continuity in the classical sense. Thus,
\[ \left|\frac{\partial}{\partial t}\mathbb{E}\left[u(x,t)^2\right]\right|\leq 2C^2\in\leb^1([L_1,L_2],\dif x), \]
for all $x\in [L_1,L_2]$ and $t\in [t_0-\delta,t_0+\delta]$. This permits differentiating under the Lebesgue integral sign at $t_0$ \cite[Th. 10.39]{apostol}:
\begin{align*}
 I'(t_0)= {} & \frac{\partial}{\partial t}\left(\int_{L_1}^{L_2} \mathbb{E}\left[u(x,t)^2\right]\,\dif x\right)\bigg|_{t=t_0}=\int_{L_1}^{L_2} \frac{\partial}{\partial t}\left(\mathbb{E}\left[u(x,t)^2\right]\right)\big|_{t=t_0}\,\dif x \\
= {} & 2\int_{L_1}^{L_2} \mathbb{E}[u(x,t_0)u_t(x,t_0)]\,\dif x. 
\end{align*}
Now we use the arbitrariness of $t_0$ and the fact that $u$ solves the heat equation:
\[ I'(t)=2\int_{L_1}^{L_2} \mathbb{E}[u(x,t)u_t(x,t)]\,\dif x=2\int_{L_1}^{L_2} \mathbb{E}[\alpha^2u(x,t)u_{xx}(x,t)]\,\dif x. \]
As $u(\cdot,t)$ and $u_x(\cdot,t)$ are differentiable in the $\leb^2(\Omega)$ sense, by Proposition \ref{producte} the product $u(\cdot,t)u_x(\cdot,t)$ is differentiable in the $\leb^1(\Omega)$ sense, with derivative $(u(x,t)u_x(x,t))_x=u(x,t)u_{xx}(x,t)+u_x(x,t)^2$. Since $\alpha^2$ is bounded above, $\alpha^2u(\cdot,t)u_x(\cdot,t)$ is differentiable in the $\leb^1(\Omega)$ sense, having derivative $(\alpha^2u(x,t)u_x(x,t))_x=\alpha^2u(x,t)u_{xx}(x,t)+\alpha^2u_x(x,t)^2$. Thereby,
\[ I'(t)=2\int_{L_1}^{L_2} \mathbb{E}[(\alpha^2u(x,t)u_{x}(x,t))_x]\,\dif x-2\int_{L_1}^{L_2} \mathbb{E}[\alpha^2u_{x}(x,t)^2]\,\dif x. \]
By Proposition \ref{derexp}, Barrow's rule and the boundary conditions, the first integral is $0$: 
\begin{align*}
 \int_{L_1}^{L_2} \mathbb{E}[(\alpha^2u(x,t)u_{x}(x,t))_x]{} &\,\dif x=\int_{L_1}^{L_2}\frac{\partial}{\partial x}\mathbb{E}[\alpha^2u(x,t)u_{x}(x,t)]\,\dif x \\
= {} & \mathbb{E}[\alpha^2u(L_2,t)u_x(L_2,t)]-\mathbb{E}[\alpha^2u(L_1,t)u_x(L_1,t)]=0. 
\end{align*}
Barrow's rule is justified as follows: we have, by previous computations,
\begin{align*}
\partial_x \mathbb{E}[\alpha^2u(x,t)u_{x}(x,t)] ={} &\mathbb{E}[(\alpha^2 u(x,t)u_x(x,t))_x]\\
={} &\mathbb{E}[\alpha^2u(x,t)u_{xx}(x,t)]+\mathbb{E}[\alpha^2u_x(x,t)^2]. 
\end{align*}
By Proposition \ref{prodcont} and the boundedness of $\alpha^2$, both $\alpha^2u(\cdot,t)u_{xx}(\cdot,t)$ and $\alpha^2u_x(\cdot,t)^2$ are continuous in the $\leb^1(\Omega)$ sense. So by Proposition \ref{contexp}, both $\mathbb{E}[\alpha^2u(\cdot,t)u_{xx}(\cdot,t)]$ and $\mathbb{E}[\alpha^2u_x(\cdot,t)^2]$ are continuous. Then $\partial_x \mathbb{E}[\alpha^2u(\cdot,t)u_{x}(\cdot,t)]$ is continuous on $[L_1,L_2]$ and Barrow's rule is applicable.

It follows $I'(t)=-2\int_{L_1}^{L_2} \mathbb{E}[\alpha^2u_{x}(x,t)^2]\,\dif x\leq 0$. This tells us that $I(t)$ is decreasing on $[0,\infty)$, which implies $I(t)\leq I(0)=\int_{L_1}^{L_2} \mathbb{E}[u(x,0)^2]\,\dif x=0$. Hence, $I(t)=0$. As $\mathbb{E}[u(\cdot,t)^2]$ is continuous, because $u(\cdot,t)$ is continuous in the $\leb^2(\Omega)$ sense and Proposition \ref{contexp}, we derive that $E[u(x,t)^2]=0$ for all $x\in [L_1,L_2]$ and $t\geq0$. Then $u(x,t)=0$ a.s., for every $x\in [L_1,L_2]$ and $t\geq0$. This concludes the proof.

\end{itemize}

\end{proof}

\section{Approximation of the probability density function of the solution stochastic process}

The main goal of this paper is to approximate the probability density function of the solution stochastic process $u(x,t)(\omega)$ given by (\ref{sol_u_w}), which solves the random heat equation (\ref{edp_determinista_comp}). We will use Theorem \ref{teor2}, Theorem \ref{teorllei} and Lemma \ref{lema_abscont}.

Assume that $v(y,t)(\omega)$, $A(\omega)$ and $B(\omega)$ are absolutely continuous and independent random variables. Applying Lemma \ref{lema_abscont},
\[ f_{\frac{x-L_1}{L_2-L_1}B}(b)=f_B\left(\frac{L_2-L_1}{x-L_1}b\right)\frac{L_2-L_1}{x-L_1} \]
and 
\[ f_{\frac{L_2-x}{L_2-L_1}A}(a)=f_A\left(\frac{L_2-L_1}{L_2-x}a\right)\frac{L_2-L_1}{L_2-x}. \]
By Corollary \ref{conv_dens}, the probability density function of a sum of two independent and absolutely continuous random variables is given by the convolution of their probability density functions. Thereby, from (\ref{sol_u_w}),
\small
\begin{align*}
{} & f_{u(x,t)}(u)=\int_{\mathbb{R}}\int_{\mathbb{R}} f_{v\left(\frac{x-L_1}{L_2-L_1},t\right)}(u-b-a)f_{\frac{x-L_1}{L_2-L_1}B}(b)f_{\frac{L_2-x}{L_2-L_1}A}(a)\,\dif a\, \dif b \\
= {} & \int_{\mathbb{R}}\int_{\mathbb{R}} f_{v\left(\frac{x-L_1}{L_2-L_1},t\right)}(u-b-a)f_B\left(\frac{L_2-L_1}{x-L_1}b\right)\frac{L_2-L_1}{x-L_1}f_A\left(\frac{L_2-L_1}{L_2-x}a\right)\frac{L_2-L_1}{L_2-x}\,\dif a\,\dif b. 
\end{align*}
\normalsize

Define a new truncation
\begin{equation}
 u_N(x,t)(\omega)=v_N\left(\frac{x-L_1}{L_2-L_1},t\right)(\omega)+\frac{x-L_1}{L_2-L_1}B(\omega)+\frac{L_2-x}{L_2-L_1}A(\omega), 
 \label{sol_u_w_N}
\end{equation}
where $x\in [L_1,L_2]$ and $t\geq0$ and $v_N$ is the truncation (\ref{vN}). If $v_N(y,t)(\omega)$, $A(\omega)$ and $B(\omega)$ are absolutely continuous and independent random variables, by Corollary \ref{conv_dens} again, 
\small
\begin{align}
{} & f_{u_N(x,t)}(u)=\int_{\mathbb{R}}\int_{\mathbb{R}} f_{v_N\left(\frac{x-L_1}{L_2-L_1},t\right)}(u-b-a)f_{\frac{x-L_1}{L_2-L_1}B}(b)f_{\frac{L_2-x}{L_2-L_1}A}(a)\,\dif a\, \dif b \nonumber \\
= {} & \int_{\mathbb{R}}\int_{\mathbb{R}} f_{v_N\left(\frac{x-L_1}{L_2-L_1},t\right)}(u-b-a)f_B\left(\frac{L_2-L_1}{x-L_1}b\right)\frac{L_2-L_1}{x-L_1}f_A\left(\frac{L_2-L_1}{L_2-x}a\right)\frac{L_2-L_1}{L_2-x}\,\dif a\,\dif b.  \label{fN}
\end{align}
\normalsize

Intuitively, we should be able to set conditions under which 
\[ \lim_{N\rightarrow\infty} f_{u_N(x,t)}(u)=f_{u(x,t)}(u), \]
as an application of Theorem \ref{teor2} or of Theorem \ref{teorllei}. This fact is formalized in the following two theorems.

\begin{theorem} \label{super}
Let the random initial condition $\{\phi(x):\,L_1\leq x\leq L_2\}$ be a process in $\leb^2([L_1,L_2]\times \Omega)$. Let the random boundary conditions $A$ and $B$ belong to $\leb^2(\Omega)$. Suppose that $\alpha^2$, $A_1$, $(A_2,\ldots,A_N)$, $A$ and $B$ are independent and absolutely continuous, for $N\geq2$ (recall that $A_n$ is defined in (\ref{An}) as the random Fourier coefficient of $\psi$, where $\psi$ is defined from $\phi$ in relation (\ref{psi_w})). Suppose that the probability density function $f_{A_1}$ is Lipschitz on $\mathbb{R}$. Assume that 
\[ \sum_{n=m}^\infty \|\e^{-(n^2-2)\pi^2\alpha^2 t/(L_2-L_1)^2}\|_{\leb^1(\Omega)}<\infty, \] 
for certain $m\in\mathbb{N}$. Then the sequence
\small
\[ f_{u_N(x,t)}(u)=\int_{\mathbb{R}}\int_{\mathbb{R}} f_{v_N\left(\frac{x-L_1}{L_2-L_1},t\right)}(u-b-a)f_B\left(\frac{L_2-L_1}{x-L_1}b\right)\frac{L_2-L_1}{x-L_1}f_A\left(\frac{L_2-L_1}{L_2-x}a\right)\frac{L_2-L_1}{L_2-x}\,\dif a\,\dif b, \]
\normalsize
where $f_{v_N}$ is the density defined by (\ref{fr}), converges in $\leb^\infty(\mathbb{R})$ to the density $f_{u(x,t)}(u)$ of the solution stochastic process $u(x,t)(\omega)$ to the randomized heat equation (\ref{edp_determinista_comp}), for $L_1<x<L_2$ and $t>0$.
\end{theorem}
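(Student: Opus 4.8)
The plan is to reduce everything to Theorem \ref{teor2} for the homogeneous problem on $[0,1]$ via the change of variables (\ref{sol_u_w})--(\ref{psi_w}), and then to propagate the $\leb^\infty(\mathbb{R})$ convergence of the densities through the convolutions with the (scaled) densities of $A$ and $B$. The point is that convolving against a fixed probability density is an $\leb^\infty$-contraction, so no accuracy is lost.

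First I would check that Theorem \ref{teor2} applies to the homogeneous process $v(y,t)(\omega)$ at the interior point $y=\frac{x-L_1}{L_2-L_1}\in(0,1)$. The bound (\ref{psiL2}), already established in the proof of Theorem \ref{st_sol_2}, gives $\psi\in\leb^2([0,1]\times\Omega)$. Since $\beta^2=\alpha^2/(L_2-L_1)^2$ is a deterministic positive multiple of $\alpha^2$, the hypothesised independence and absolute continuity of $\alpha^2$, $A_1$, $(A_2,\ldots,A_N)$ transfer verbatim to $\beta^2$, $A_1$, $(A_2,\ldots,A_N)$; the Lipschitz hypothesis on $f_{A_1}$ is unchanged; and, after substituting $\beta^2$, the summability hypothesis of Theorem \ref{super} is literally $\sum_{n\geq m}\|\e^{-(n^2-2)\pi^2\beta^2 t}\|_{\leb^1(\Omega)}<\infty$. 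Hence Theorem \ref{teor2} yields that the sequence $f_{v_N(y,t)}$ defined by (\ref{fr}) converges in $\leb^\infty(\mathbb{R})$ to a density $f_{v(y,t)}$ of $v(y,t)(\omega)$, for $t>0$.

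Next I would record the probabilistic facts needed to write the densities of $u_N(x,t)$ and $u(x,t)$ as convolutions. By Lemma \ref{lema_abscont}, $\frac{x-L_1}{L_2-L_1}B$ and $\frac{L_2-x}{L_2-L_1}A$ are absolutely continuous with the densities displayed just before (\ref{sol_u_w_N}); $v_N(y,t)$ is absolutely continuous by Theorem \ref{teor2} (its density is (\ref{fr})), and $v(y,t)$ is absolutely continuous by the previous step. Moreover $v_N(y,t)$ and $v(y,t)$ are measurable functions of $(\alpha^2,A_1,A_2,\ldots)$, so by the independence hypothesis (valid for every $N\geq2$, hence for the whole countable family) each is independent of the pair $(A,B)$; consequently the triple $\{v_N(y,t),\ \frac{x-L_1}{L_2-L_1}B,\ \frac{L_2-x}{L_2-L_1}A\}$ is mutually independent, and so is $\{v(y,t),\ \frac{x-L_1}{L_2-L_1}B,\ \frac{L_2-x}{L_2-L_1}A\}$. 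Applying Corollary \ref{conv_dens}, the function $G:=f_{\frac{x-L_1}{L_2-L_1}B}\ast f_{\frac{L_2-x}{L_2-L_1}A}$ is a density of $\frac{x-L_1}{L_2-L_1}B+\frac{L_2-x}{L_2-L_1}A$, so $G\geq0$ and $\|G\|_{\leb^1(\mathbb{R})}=1$; and, using (\ref{sol_u_w_N}), (\ref{sol_u_w}) together with Corollary \ref{conv_dens} once more, $f_{u_N(x,t)}=f_{v_N(y,t)}\ast G$ is a density of $u_N(x,t)$ and $f_{u(x,t)}:=f_{v(y,t)}\ast G$ is a density of the solution $u(x,t)$.

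The convergence is then immediate from Young's inequality $\|f\ast g\|_{\leb^\infty(\mathbb{R})}\leq\|f\|_{\leb^\infty(\mathbb{R})}\|g\|_{\leb^1(\mathbb{R})}$:
\begin{align*}
 \|f_{u_N(x,t)}-f_{u(x,t)}\|_{\leb^\infty(\mathbb{R})}
 &=\|(f_{v_N(y,t)}-f_{v(y,t)})\ast G\|_{\leb^\infty(\mathbb{R})} \\
 &\leq\|f_{v_N(y,t)}-f_{v(y,t)}\|_{\leb^\infty(\mathbb{R})}\,\|G\|_{\leb^1(\mathbb{R})},
\end{align*}
and the right-hand side tends to $0$ as $N\to\infty$ by the first step, since $\|G\|_{\leb^1(\mathbb{R})}=1$. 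I do not expect a genuine obstacle: the only care needed is the bookkeeping of independence and absolute continuity that legitimises the repeated use of Corollary \ref{conv_dens}, plus the simple observation that convolution with the fixed probability density $G$ preserves the $\leb^\infty$ rate; in fact this also shows that the rate (\ref{rate}) carries over unchanged (the factor $\|G\|_{\leb^1(\mathbb{R})}=1$) to $f_{u_N(x,t)}$.
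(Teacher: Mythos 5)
Your proposal is correct and follows essentially the same route as the paper: reduce to Theorem \ref{teor2} via (\ref{psiL2}) and the scaling $\beta^2=\alpha^2/(L_2-L_1)^2$, establish independence of $v_N\left(\frac{x-L_1}{L_2-L_1},t\right)$ and of its a.s.\ limit $v\left(\frac{x-L_1}{L_2-L_1},t\right)$ from $(A,B)$, write $f_{u_N(x,t)}$ and $f_{u(x,t)}$ as convolutions, and pass the uniform convergence through. The only cosmetic differences are that the paper proves independence of the limit from $A$ and $B$ via characteristic functions and L\'{e}vy's continuity theorem rather than your $\sigma$-algebra/measurable-limit argument, and it carries out the $\leb^\infty$ estimate by hand inside the double integral, arriving (since $\|f_A\|_{\leb^1(\mathbb{R})}=\|f_B\|_{\leb^1(\mathbb{R})}=1$) at exactly the unchanged rate (\ref{rate}) that your Young's-inequality step yields.
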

\begin{proof}
Since $\phi\in \leb^2([L_1,L_2]\times \Omega)$ and by (\ref{psiL2}), $\psi\in \leb^2([0,1]\times \Omega)$. By hypothesis, we also have that $\beta^2=\alpha^2/(L_2-L_1)^2$, $A_1$ and $(A_2,\ldots,A_N)$ are independent and absolutely continuous, for $N\geq2$, and $\sum_{n=m}^\infty \|\e^{-(n^2-2)\pi^2\beta^2 t}\|_{\leb^1(\Omega)}<\infty$. Thus, the hypotheses of Theorem \ref{teor2} hold.

Since $\alpha^2$, $A_1$, $(A_2,\ldots,A_N)$, $A$ and $B$ are independent, from (\ref{sol_homo_random}) we derive that $v(y,t)$, $A$ and $B$ are independent. Indeed, from the independence of $\beta^2=\alpha^2/(L_2-L_1)^2$, $A_1$, $(A_2,\ldots,A_N)$, $A$ and $B$, one has independence of $(\beta^2,A_1,\ldots,A_N)$, $A$ and $B$. Fixed $0<y<1$ and $t>0$, the random variable $v_N(y,t)(\omega)$ can be written as $g(\beta^2(\omega),A_1(\omega),\ldots,A_N(\omega))$, for a Borel measurable map $g:\mathbb{R}^{n+1}\rightarrow\mathbb{R}$. Then $v_N(y,t)$, $A$ and $B$ are independent. By Theorem \ref{st_sol} i), $v_N(y,t)\rightarrow v(y,t)$ a.s. as $N\rightarrow \infty$. Then $(v_N(y,t),A,B)\rightarrow (v(y,t),A,B)$ a.s. as $N\rightarrow \infty$. Denote by $\varphi$ the characteristic function. By L\'{e}vy's continuity theorem \cite[Ch.18]{martingales} and the independence, for $v,a,b \in \mathbb{R}$, 
\begin{align*}
\varphi_{(v(y,t),A,B)}(v,a,b)= {} & \lim_{N\rightarrow \infty}\varphi_{(v_N(y,t),A,B)}(v,a,b)=\lim_{N\rightarrow \infty} \varphi_{v_N(y,t)}(v)\varphi_A(a)\varphi_B(b) \\
= {} & \varphi_{v(y,t)}(v)\varphi_A(a)\varphi_B(b).
\end{align*}
By \cite[Th. 2.1]{llarg}, $v(y,t)$, $A$ and $B$ are independent.

As a consequence,
\small
\[ f_{u_N(x,t)}(u)=\int_{\mathbb{R}}\int_{\mathbb{R}} f_{v_N\left(\frac{x-L_1}{L_2-L_1},t\right)}(u-b-a)f_B\left(\frac{L_2-L_1}{x-L_1}b\right)\frac{L_2-L_1}{x-L_1}f_A\left(\frac{L_2-L_1}{L_2-x}a\right)\frac{L_2-L_1}{L_2-x}\,\dif a\,\dif b \]
\normalsize
and
\small
\[ f_{u(x,t)}(u)=\int_{\mathbb{R}}\int_{\mathbb{R}} f_{v\left(\frac{x-L_1}{L_2-L_1},t\right)}(u-b-a)f_B\left(\frac{L_2-L_1}{x-L_1}b\right)\frac{L_2-L_1}{x-L_1}f_A\left(\frac{L_2-L_1}{L_2-x}a\right)\frac{L_2-L_1}{L_2-x}\,\dif a\,\dif b. \]
\normalsize

We have the following estimates:
\footnotesize
\begin{align*}
{} & |f_{u(x,t)}(u)-f_{u_N(x,t)}(u)| \\
={} &  \bigg| \int_{\mathbb{R}}\int_{\mathbb{R}} f_{v\left(\frac{x-L_1}{L_2-L_1},t\right)}(u-b-a)f_B\left(\frac{L_2-L_1}{x-L_1}b\right)\frac{L_2-L_1}{x-L_1}f_A\left(\frac{L_2-L_1}{L_2-x}a\right)\frac{L_2-L_1}{L_2-x}\,\dif a\,\dif b \\
- {} & \int_{\mathbb{R}}\int_{\mathbb{R}} f_{v_N\left(\frac{x-L_1}{L_2-L_1},t\right)}(u-b-a)f_B\left(\frac{L_2-L_1}{x-L_1}b\right)\frac{L_2-L_1}{x-L_1}f_A\left(\frac{L_2-L_1}{L_2-x}a\right)\frac{L_2-L_1}{L_2-x}\,\dif a\,\dif b \bigg| \\
\leq {} & \frac{(L_2-L_1)^2}{(x-L_1)(L_2-x)}\int_\mathbb{R}\int_\mathbb{R}\bigg\{ f_B\left(\frac{L_2-L_1}{x-L_1}b\right)f_A\left(\frac{L_2-L_1}{L_2-x}a\right) \\
\cdot {} & \left|f_{v\left(\frac{x-L_1}{L_2-L_1},t\right)}(u-b-a)-f_{v_N\left(\frac{x-L_1}{L_2-L_1},t\right)}(u-b-a)\right|\bigg\}\,\dif a\,\dif b.
\end{align*}
\normalsize
By (\ref{rate}),
\begin{align*}
{} & \left|f_{v\left(\frac{x-L_1}{L_2-L_1},t\right)}(u-b-a)-f_{v_N\left(\frac{x-L_1}{L_2-L_1},t\right)}(u-b-a)\right| \\
\leq {} & \frac{2\|\psi\|_{\leb^2([0,1]\times\Omega)}L}{\sin^2\left(\pi \frac{x-L_1}{L_2-L_1}\right)} \sum_{n=N+1}^\infty \|\e^{-(n^2-2)\pi^2\alpha^2 t/(L_2-L_1)^2}\|_{\leb^1(\Omega)}, 
\end{align*}
where $L$ is the Lipschitz constant of $f_{A_1}$.
Then,
\small
\begin{align}
{} & |f_{u_N(x,t)}(u)-f_{u(x,t)}(u)| \nonumber \\
\leq {} & \frac{(L_2-L_1)^2}{(x-L_1)(L_2-x)}\frac{2\|\psi\|_{\leb^2([0,1]\times\Omega)}L}{\sin^2\left(\pi \frac{x-L_1}{L_2-L_1}\right)} \left(\sum_{n=N+1}^\infty \|\e^{-(n^2-2)\pi^2\alpha^2 t/(L_2-L_1)^2}\|_{\leb^1(\Omega)}\right) \nonumber \\
\cdot {} & \int_\mathbb{R}\int_\mathbb{R}f_B\left(\frac{L_2-L_1}{x-L_1}b\right)f_A\left(\frac{L_2-L_1}{L_2-x}a\right)\,\dif a\,\dif b \nonumber \\
= {} & \frac{(L_2-L_1)^2}{(x-L_1)(L_2-x)}\frac{2\|\psi\|_{\leb^2([0,1]\times\Omega)}L}{\sin^2\left(\pi \frac{x-L_1}{L_2-L_1}\right)} \left(\sum_{n=N+1}^\infty \|\e^{-(n^2-2)\pi^2\alpha^2 t/(L_2-L_1)^2}\|_{\leb^1(\Omega)}\right) \nonumber \\
\cdot {} & \left(\int_\mathbb{R}f_B\left(\frac{L_2-L_1}{x-L_1}b\right)\,\dif b\right)\left(\int_\mathbb{R}f_A\left(\frac{L_2-L_1}{L_2-x}a\right)\,\dif a\right) \nonumber \\
= {} & \|f_A\|_{\leb^1(\mathbb{R})}\|f_B\|_{\leb^1(\mathbb{R})}\frac{2\|\psi\|_{\leb^2([0,1]\times\Omega)}L}{\sin^2\left(\pi \frac{x-L_1}{L_2-L_1}\right)} \sum_{n=N+1}^\infty \|\e^{-(n^2-2)\pi^2\alpha^2 t/(L_2-L_1)^2}\|_{\leb^1(\Omega)}. \label{anem}
\end{align}
\normalsize
As $\sum_{n=m}^\infty \|\e^{-(n^2-2)\pi^2\alpha^2 t/(L_2-L_1)^2}\|_{\leb^1(\Omega)}<\infty$, we conclude that 
\[ \lim_{N\rightarrow\infty} f_{u_N(x,t)}(u)=f_{u(x,t)}(u) \]
in $\leb^\infty(\mathbb{R})$, with convergence rate given by (\ref{anem}).

\end{proof}

\begin{theorem} \label{superllei}
Let the random initial condition $\{\phi(x):\,L_1\leq x\leq L_2\}$ be a process in $\leb^2([L_1,L_2]\times \Omega)$. Let the random boundary conditions $A$ and $B$ belong to $\leb^2(\Omega)$. Suppose that $\alpha^2$, $A_1$, $(A_2,\ldots,A_N)$, $A$ and $B$ are independent and absolutely continuous, for $N\geq2$ (recall that $A_n$ is defined in (\ref{An}) as the random Fourier coefficient of $\psi$, where $\psi$ is defined from $\phi$ in relation (\ref{psi_w})). Suppose that the probability density function $f_{A_1}$ is a.e continuous on $\mathbb{R}$ and $\|f_{A_1}\|_{\leb^\infty(\mathbb{R})}<\infty$. Assume that $\mathbb{E}[\e^{\pi^2\alpha^2 t/(L_2-L_1)^2}]<\infty$. Then the sequence
\small
\[ f_{u_N(x,t)}(u)=\int_{\mathbb{R}}\int_{\mathbb{R}} f_{v_N\left(\frac{x-L_1}{L_2-L_1},t\right)}(u-b-a)f_B\left(\frac{L_2-L_1}{x-L_1}b\right)\frac{L_2-L_1}{x-L_1}f_A\left(\frac{L_2-L_1}{L_2-x}a\right)\frac{L_2-L_1}{L_2-x}\,\dif a\,\dif b, \]
\normalsize
where $f_{v_N}$ is the density defined by (\ref{fr}), converges pointwise to the density $f_{u(x,t)}(u)$ of the solution stochastic process $u(x,t)(\omega)$ to the randomized heat equation (\ref{edp_determinista_comp}), for $L_1<x<L_2$ and $t>0$.
\end{theorem}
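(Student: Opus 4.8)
The plan is to follow the proof of Theorem~\ref{super} line by line, but to replace the quantitative bound (\ref{rate}) by the qualitative pointwise convergence provided by Theorem~\ref{teorllei}, supplemented by a Dominated Convergence argument in the $(a,b)$ variables.

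First I would check that the hypotheses of Theorem~\ref{teorllei} hold for the auxiliary homogeneous process $v$ on $[0,1]$: by the estimate (\ref{psiL2}) the transformed initial condition $\psi$ lies in $\leb^2([0,1]\times\Omega)$; by hypothesis $\beta^2=\alpha^2/(L_2-L_1)^2$, $A_1$ and $(A_2,\ldots,A_N)$ are independent and absolutely continuous; $f_{A_1}$ is a.e.\ continuous with $\|f_{A_1}\|_{\leb^\infty(\mathbb{R})}<\infty$; and $\mathbb{E}[\e^{\pi^2\beta^2 t}]=\mathbb{E}[\e^{\pi^2\alpha^2 t/(L_2-L_1)^2}]<\infty$. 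Thus Theorem~\ref{teorllei} gives, for fixed $0<y<1$ and $t>0$, that $f_{v_N(y,t)}(w)\to f_{v(y,t)}(w)$ for every $w\in\mathbb{R}$, where $f_{v(y,t)}$ is the limiting density of $v(y,t)$. Then, exactly as in the proof of Theorem~\ref{super} — using L\'{e}vy's continuity theorem applied to the triples $(v_N(y,t),A,B)$, the a.s.\ convergence $v_N(y,t)\to v(y,t)$ from Theorem~\ref{st_sol}~i), and the independence criterion \cite[Th.~2.1]{llarg} — I would deduce that $v_N(y,t),A,B$ are independent for each $N$ and that $v(y,t),A,B$ are independent. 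Combined with Corollary~\ref{conv_dens} (applied twice to the scaled boundary terms, as in the display preceding (\ref{sol_u_w_N}) and in (\ref{fN})), this yields the stated convolution representations of $f_{u_N(x,t)}$ and of $f_{u(x,t)}$, with $y=(x-L_1)/(L_2-L_1)$; in particular $u(x,t)$ is absolutely continuous, so $f_{u(x,t)}$ is well-defined.

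The final step is to pass to the limit inside the double integral. The crucial ingredient is the $N$-uniform, $w$-uniform bound contained in (\ref{dct}): for the fixed $y=(x-L_1)/(L_2-L_1)$ and $t>0$,
\[ |f_{v_N(y,t)}(w)|\le \|f_{A_1}\|_{\leb^\infty(\mathbb{R})}\,\frac{\mathbb{E}[\e^{\pi^2\alpha^2 t/(L_2-L_1)^2}]}{\sin\!\left(\pi\frac{x-L_1}{L_2-L_1}\right)}=:M<\infty, \]
for all $w\in\mathbb{R}$ and all $N$. Hence the integrand defining $f_{u_N(x,t)}(u)$ is dominated, uniformly in $N$, by
\[ M\, f_B\!\left(\frac{L_2-L_1}{x-L_1}b\right)\frac{L_2-L_1}{x-L_1}\, f_A\!\left(\frac{L_2-L_1}{L_2-x}a\right)\frac{L_2-L_1}{L_2-x}, \]
which is integrable on $\mathbb{R}^2$ (it integrates to $M$ because $f_A$ and $f_B$ are densities). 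Since $f_{v_N(y,t)}(u-b-a)\to f_{v(y,t)}(u-b-a)$ pointwise in $(a,b)\in\mathbb{R}^2$, the Dominated Convergence Theorem gives $\lim_{N\to\infty}f_{u_N(x,t)}(u)=f_{u(x,t)}(u)$ for every $u\in\mathbb{R}$, i.e.\ pointwise convergence.

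The step I expect to be the main obstacle is precisely this uniform domination: one must make sure that the bound on $|f_{v_N(y,t)}|$ extracted from the proof of Theorem~\ref{teorllei} depends neither on $N$ nor on the evaluation point, so that integrating it against $f_A$ and $f_B$ does not reintroduce an $N$-dependence — this is exactly what the essential boundedness of $f_{A_1}$ and the hypothesis $\mathbb{E}[\e^{\pi^2\alpha^2 t/(L_2-L_1)^2}]<\infty$ secure. Everything else (verifying the hypotheses, the independence argument, and the convolution formulas) is a transcription of material already developed for Theorem~\ref{teorllei} and Theorem~\ref{super}.
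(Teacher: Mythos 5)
Your proposal is correct and matches the paper's own argument essentially step for step: verification of the hypotheses of Theorem \ref{teorllei} via (\ref{psiL2}), the independence of $v_N(y,t)$, $A$, $B$ and of $v(y,t)$, $A$, $B$ exactly as in Theorem \ref{super}, the two convolution representations, and then the Dominated Convergence Theorem in $(a,b)$ using the $N$-uniform bound $\|f_{A_1}\|_{\leb^\infty(\mathbb{R})}\,\mathbb{E}[\e^{\pi^2\alpha^2 t/(L_2-L_1)^2}]/\sin\bigl(\pi\tfrac{x-L_1}{L_2-L_1}\bigr)$ coming from (\ref{dct}). No gaps; this is the same proof as in the paper.
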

\begin{proof}
From $\phi\in \leb^2([L_1,L_2]\times \Omega)$ and (\ref{psiL2}), it follows $\psi\in \leb^2([0,1]\times \Omega)$. By hypothesis, we also have that $\beta^2=\alpha^2/(L_2-L_1)^2$, $A_1$ and $(A_2,\ldots,A_N)$ are independent and absolutely continuous, for $N\geq2$, and $\mathbb{E}[\e^{\pi^2\beta^2 t}]<\infty$. Thereby, the hypotheses of Theorem \ref{teorllei} are fulfilled.

Since $\alpha^2$, $A_1$, $(A_2,\ldots,A_N)$, $A$ and $B$ are independent, as we did in the proof of Theorem \ref{super} we deduce that $v_N(y,t)$, $A$ and $B$ are independent, and that $v(y,t)$, $A$ and $B$ are independent. Hence,
\small
\[ f_{u_N(x,t)}(u)=\int_{\mathbb{R}}\int_{\mathbb{R}} f_{v_N\left(\frac{x-L_1}{L_2-L_1},t\right)}(u-b-a)f_B\left(\frac{L_2-L_1}{x-L_1}b\right)\frac{L_2-L_1}{x-L_1}f_A\left(\frac{L_2-L_1}{L_2-x}a\right)\frac{L_2-L_1}{L_2-x}\,\dif a\,\dif b \]
\normalsize
and
\small
\[ f_{u(x,t)}(u)=\int_{\mathbb{R}}\int_{\mathbb{R}} f_{v\left(\frac{x-L_1}{L_2-L_1},t\right)}(u-b-a)f_B\left(\frac{L_2-L_1}{x-L_1}b\right)\frac{L_2-L_1}{x-L_1}f_A\left(\frac{L_2-L_1}{L_2-x}a\right)\frac{L_2-L_1}{L_2-x}\,\dif a\,\dif b. \]
\normalsize

By Theorem \ref{teorllei},
\begin{align*}
 \lim_{N\rightarrow\infty} {} & f_{v_N\left(\frac{x-L_1}{L_2-L_1},t\right)}(u-b-a)f_B\left(\frac{L_2-L_1}{x-L_1}b\right)\frac{L_2-L_1}{x-L_1}f_A\left(\frac{L_2-L_1}{L_2-x}a\right)\frac{L_2-L_1}{L_2-x} \\
= {} & f_{v\left(\frac{x-L_1}{L_2-L_1},t\right)}(u-b-a)f_B\left(\frac{L_2-L_1}{x-L_1}b\right)\frac{L_2-L_1}{x-L_1}f_A\left(\frac{L_2-L_1}{L_2-x}a\right)\frac{L_2-L_1}{L_2-x},
\end{align*}
for every $u,a,b\in\mathbb{R}$, $L_1<x<L_2$ and $t>0$. By (\ref{dct}),
\small
\begin{align*}
{} & \left|f_{v_N\left(\frac{x-L_1}{L_2-L_1},t\right)}(u-b-a)f_B\left(\frac{L_2-L_1}{x-L_1}b\right)\frac{L_2-L_1}{x-L_1}f_A\left(\frac{L_2-L_1}{L_2-x}a\right)\frac{L_2-L_1}{L_2-x}\right| \\
\leq {} & \|f_{A_1}\|_{\leb^\infty(\mathbb{R})}\frac{\mathbb{E}[\e^{\pi^2\beta^2t}]}{\sin\left(\pi\frac{x-L_1}{L_2-L_1}\right)}\frac{(L_2-L_1)^2}{(x-L_1)(L_2-x)}f_B\left(\frac{L_2-L_1}{x-L_1}b\right)f_A\left(\frac{L_2-L_1}{L_2-x}a\right)\in\leb^1(\mathbb{R}^2,\dif a\,\dif b),
\end{align*}
\normalsize
so by the Dominated Convergence Theorem,
\[ \lim_{N\rightarrow \infty} f_{u_N(x,t)}(u)=f_{u(x,t)}(u) \]
follows.
\end{proof}

Theorem \ref{super} and Theorem \ref{superllei} may be adapted to the case in which $A$ and $B$ are deterministic. By applying Lemma \ref{lema_abscont} in (\ref{sol_u_w}) and (\ref{sol_u_w_N}), 
\[ f_{u(x,t)}(u)=f_{v\left(\frac{x-L_1}{L_2-L_1},t\right)}\left(u-\frac{x-L_1}{L_2-L_1}B-\frac{L_2-x}{L_2-L_1}A\right) \]
and
\begin{equation}
 f_{u_N(x,t)}(u)=f_{v_N\left(\frac{x-L_1}{L_2-L_1},t\right)}\left(u-\frac{x-L_1}{L_2-L_1}B-\frac{L_2-x}{L_2-L_1}A\right). 
 \label{fN2}
\end{equation}
One arrives at the following two theorems, which are proved similarly but easier than Theorem \ref{super} and Theorem \ref{superllei}, respectively.

\begin{theorem} \label{super_det}
Let the random initial condition $\{\phi(x):\,L_1\leq x\leq L_2\}$ be a process in $\leb^2([L_1,L_2]\times \Omega)$. Suppose that $\alpha^2$, $A_1$ and $(A_2,\ldots,A_N)$ are independent and absolutely continuous, for $N\geq2$ (recall that $A_n$ is defined in (\ref{An}) as the random Fourier coefficient of $\psi$, where $\psi$ is defined from $\phi$ in relation (\ref{psi_w})). Suppose that the probability density function $f_{A_1}$ is Lipschitz on $\mathbb{R}$ and that 
\[ \sum_{n=m}^\infty \|\e^{-(n^2-2)\pi^2\alpha^2 t/(L_2-L_1)^2}\|_{\leb^1(\Omega)}<\infty, \] 
for certain $m\in\mathbb{N}$. Then the sequence
\small
\[ f_{u_N(x,t)}(u)=f_{v_N\left(\frac{x-L_1}{L_2-L_1},t\right)}\left(u-\frac{x-L_1}{L_2-L_1}B-\frac{L_2-x}{L_2-L_1}A\right), \]
\normalsize
where $f_{v_N}$ is the density defined by (\ref{fr}), converges in $\leb^\infty(\mathbb{R})$ to the density $f_{u(x,t)}(u)$ of the solution stochastic process $u(x,t)(\omega)$ to the randomized heat equation (\ref{edp_determinista_comp}) with deterministic boundary conditions $A$ and $B$, for $L_1<x<L_2$ and $t>0$.
\end{theorem}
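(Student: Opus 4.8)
The plan is to reduce the statement to Theorem~\ref{teor2} through the change of variables (\ref{sol_u_w}), which in the present situation is a deterministic translation. First I would note that, $A$ and $B$ being deterministic constants, they belong to $\leb^2(\Omega)$, so estimate (\ref{psiL2}) applies and gives $\psi\in\leb^2([0,1]\times\Omega)$, where $\psi$ is the initial condition of the homogeneous problem (\ref{edp_determinista_homo}) associated with $\phi$ via (\ref{psi_w}). Combining this with the hypotheses that $\beta^2=\alpha^2/(L_2-L_1)^2$, $A_1$ and $(A_2,\ldots,A_N)$ are independent and absolutely continuous for $N\geq2$, that $f_{A_1}$ is Lipschitz on $\mathbb{R}$, and that $\sum_{n=m}^\infty\|\e^{-(n^2-2)\pi^2\beta^2 t}\|_{\leb^1(\Omega)}<\infty$, all the hypotheses of Theorem~\ref{teor2} are fulfilled. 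Hence, for every $0<y<1$ and $t>0$, the sequence $\{f_{v_N(y,t)}\}_{N}$ converges in $\leb^\infty(\mathbb{R})$ to a density $f_{v(y,t)}$ of $v(y,t)(\omega)$, and the rate (\ref{rate}) holds.

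Next I would fix $L_1<x<L_2$ and $t>0$, set $y=(x-L_1)/(L_2-L_1)\in(0,1)$, and write $c:=\frac{x-L_1}{L_2-L_1}B+\frac{L_2-x}{L_2-L_1}A$, which is a deterministic real number. By (\ref{sol_u_w}) and (\ref{sol_u_w_N}), $u(x,t)(\omega)=v(y,t)(\omega)+c$ and $u_N(x,t)(\omega)=v_N(y,t)(\omega)+c$. Since $v(y,t)$ and $v_N(y,t)$ are absolutely continuous by Theorem~\ref{teor2}, applying Lemma~\ref{lema_abscont} to the affine bijection $g(v)=v+c$ (whose inverse $h(u)=u-c$ satisfies $|Jh|\equiv1$) yields $f_{u(x,t)}(u)=f_{v(y,t)}(u-c)$ and $f_{u_N(x,t)}(u)=f_{v_N(y,t)}(u-c)$, which is precisely the formula (\ref{fN2}) used in the statement. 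No independence between $v$ and $A,B$ is required, because the boundary data are constants.

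Finally I would subtract the two densities: for every $u\in\mathbb{R}$,
\[ |f_{u_N(x,t)}(u)-f_{u(x,t)}(u)|=\left|f_{v_N(y,t)}(u-c)-f_{v(y,t)}(u-c)\right|\leq\left\|f_{v_N(y,t)}-f_{v(y,t)}\right\|_{\leb^\infty(\mathbb{R})}, \]
and then take the supremum over $u\in\mathbb{R}$. Invoking the rate (\ref{rate}) at the spatial point $y$ and with $\beta^2 t=\alpha^2 t/(L_2-L_1)^2$ gives
\[ \left\|f_{u_N(x,t)}-f_{u(x,t)}\right\|_{\leb^\infty(\mathbb{R})}\leq\frac{2\|\psi\|_{\leb^2([0,1]\times\Omega)}L}{\sin^2\left(\pi\frac{x-L_1}{L_2-L_1}\right)}\sum_{n=N+1}^\infty\left\|\e^{-(n^2-2)\pi^2\alpha^2 t/(L_2-L_1)^2}\right\|_{\leb^1(\Omega)}, \]
with $L$ the Lipschitz constant of $f_{A_1}$; since the series converges, its tail tends to $0$ and we obtain convergence in $\leb^\infty(\mathbb{R})$. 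I do not expect any genuine obstacle here: the deterministic translation makes the double convolution of Theorem~\ref{super} collapse to a shift, so the factors $\|f_A\|_{\leb^1(\mathbb{R})}\|f_B\|_{\leb^1(\mathbb{R})}=1$ disappear and the argument is strictly shorter. The only step needing a line of justification is the appeal to Lemma~\ref{lema_abscont} to guarantee that a translate of an absolutely continuous random variable is again absolutely continuous, with the correspondingly shifted density.
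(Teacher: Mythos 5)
Your proposal is correct and follows essentially the same route as the paper, which obtains (\ref{fN2}) by applying Lemma \ref{lema_abscont} to the deterministic shift in (\ref{sol_u_w})--(\ref{sol_u_w_N}) and then states that the result is proved as Theorem \ref{super} but more easily, i.e.\ by verifying the hypotheses of Theorem \ref{teor2} via (\ref{psiL2}) and invoking the rate (\ref{rate}) with $\beta^2=\alpha^2/(L_2-L_1)^2$, exactly as you do. Your explicit error bound, in which the factors $\|f_A\|_{\leb^1(\mathbb{R})}\|f_B\|_{\leb^1(\mathbb{R})}$ of (\ref{anem}) disappear, matches the paper's intended argument.
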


\begin{theorem} \label{super_det_llei}
Let the random initial condition $\{\phi(x):\,L_1\leq x\leq L_2\}$ be a process in $\leb^2([L_1,L_2]\times \Omega)$. Suppose that $\alpha^2$, $A_1$ and $(A_2,\ldots,A_N)$ are independent and absolutely continuous, for $N\geq2$ (recall that $A_n$ is defined in (\ref{An}) as the random Fourier coefficient of $\psi$, where $\psi$ is defined from $\phi$ in relation (\ref{psi_w})). Suppose that the probability density function $f_{A_1}$ is a.e. continuous on $\mathbb{R}$ and $\|f_{A_1}\|_{\leb^\infty(\mathbb{R})}<\infty$. Assume that $\mathbb{E}[\e^{\pi^2\alpha^2 t/(L_2-L_1)^2}]<\infty$. Then the sequence
\small
\[ f_{u_N(x,t)}(u)=f_{v_N\left(\frac{x-L_1}{L_2-L_1},t\right)}\left(u-\frac{x-L_1}{L_2-L_1}B-\frac{L_2-x}{L_2-L_1}A\right), \]
\normalsize
where $f_{v_N}$ is the density defined by (\ref{fr}), converges pointwise to the density $f_{u(x,t)}(u)$ of the solution stochastic process $u(x,t)(\omega)$ to the randomized heat equation (\ref{edp_determinista_comp}) with deterministic boundary conditions $A$ and $B$, for $L_1<x<L_2$ and $t>0$.
\end{theorem}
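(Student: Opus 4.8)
The plan is to reduce the statement, exactly as in the proof of Theorem~\ref{superllei}, to the pointwise convergence already established for the homogeneous problem on $[0,1]$ in Theorem~\ref{teorllei}, exploiting the fact that when $A$ and $B$ are deterministic the relation (\ref{sol_u_w}) (resp. (\ref{sol_u_w_N})) between $u$ and $v$ (resp. $u_N$ and $v_N$) is merely a translation by the constant $c:=\frac{x-L_1}{L_2-L_1}B+\frac{L_2-x}{L_2-L_1}A$.

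First I would verify that the hypotheses of Theorem~\ref{teorllei} hold for the auxiliary homogeneous process $v$. Since $\phi\in\leb^2([L_1,L_2]\times\Omega)$ and $A,B$ are constants (so in particular in $\leb^2(\Omega)$, with $\|A\|_{\leb^2(\Omega)}=|A|$ and $\|B\|_{\leb^2(\Omega)}=|B|$), the estimate (\ref{psiL2}) still yields $\psi\in\leb^2([0,1]\times\Omega)$. By hypothesis, $\beta^2=\alpha^2/(L_2-L_1)^2$, $A_1$ and $(A_2,\ldots,A_N)$ are independent and absolutely continuous for $N\geq 2$, $f_{A_1}$ is a.e.\ continuous on $\mathbb{R}$ with $\|f_{A_1}\|_{\leb^\infty(\mathbb{R})}<\infty$, and $\mathbb{E}[\e^{\pi^2\beta^2 t}]=\mathbb{E}[\e^{\pi^2\alpha^2 t/(L_2-L_1)^2}]<\infty$. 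Hence Theorem~\ref{teorllei} applies and gives, for the fixed spatial point $y=\frac{x-L_1}{L_2-L_1}\in(0,1)$ and $t>0$, that the densities $f_{v_N(y,t)}(w)$ converge pointwise to a density $f_{v(y,t)}(w)$ of $v(y,t)$ for every $w\in\mathbb{R}$; in particular $v_N(y,t)$ and $v(y,t)$ are absolutely continuous.

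Next I would justify formula (\ref{fN2}) and its analogue for $u(x,t)$: since $c$ is a constant, the map $z\mapsto z+c$ is a $C^1$ bijection of $\mathbb{R}$ with Jacobian $1$, so Lemma~\ref{lema_abscont} gives $f_{u_N(x,t)}(u)=f_{v_N(y,t)}(u-c)$ and $f_{u(x,t)}(u)=f_{v(y,t)}(u-c)$, and $u_N(x,t)$, $u(x,t)$ are absolutely continuous as translates of absolutely continuous random variables. Finally, for fixed $u\in\mathbb{R}$ the number $u-c$ is a fixed real, so evaluating the pointwise limit of Theorem~\ref{teorllei} at $w=u-c$ gives $\lim_{N\to\infty}f_{u_N(x,t)}(u)=\lim_{N\to\infty}f_{v_N(y,t)}(u-c)=f_{v(y,t)}(u-c)=f_{u(x,t)}(u)$, which is the assertion.

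There is essentially no analytic obstacle here, in contrast with Theorem~\ref{superllei}: because the boundary data are deterministic we bypass the double convolution against $f_A$ and $f_B$ and, with it, the Dominated Convergence step needed to pass the limit through the integrals. The only points requiring minor care are noting that the translation by $c$ preserves absolute continuity and that the $\leb^2(\Omega)$-norms of $A$ and $B$ in (\ref{psiL2}) should simply be read as $|A|$ and $|B|$; once these are observed, the conclusion follows directly from Theorem~\ref{teorllei}.
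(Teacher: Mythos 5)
Your proposal is correct and follows essentially the route the paper intends: the paper derives (\ref{fN2}) via Lemma \ref{lema_abscont} for the deterministic translation and states that Theorem \ref{super_det_llei} is proved like Theorem \ref{superllei} but more easily, which is exactly your reduction to Theorem \ref{teorllei} evaluated at $w=u-c$. Your verification of the hypotheses of Theorem \ref{teorllei} via (\ref{psiL2}) with $\|A\|_{\leb^2(\Omega)}=|A|$, $\|B\|_{\leb^2(\Omega)}=|B|$ is the same argument, so nothing is missing.
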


\section{Approximation of the expectation and variance of the solution stochastic process}

By Theorem \ref{st_sol} ii), if $\psi\in\leb^2([0,1]\times\Omega)$ (this holds if $\phi\in\leb^2([L_1,L_2]\times\Omega)$ and $A,B\in\leb^2(\Omega)$, by (\ref{psiL2})) and $0<a\leq\beta^2(\omega)\leq b$ a.s., then $v_N(y,t)\rightarrow v(y,t)$ in $\leb^2(\Omega)$ as $N\rightarrow\infty$. 

In fact, looking at the proof of \cite[Th. 1.3]{jjcm}, we can be more precise: in that proof, it was shown that $\|A_n\|_{\leb^2(\Omega)}\leq C$, for all $n$. If we assume that $\beta^2$ and $A_n$ are independent, for each $n$, and that $\sum_{n=1}^\infty \|\e^{-n^2\pi^2\beta^2 t}\|_{\leb^2(\Omega)}<\infty$, then
\begin{align*}
 \sum_{n=1}^\infty \|A_n \e^{-n^2\pi^2\beta^2 t}\sin(n\pi y)\|_{\leb^2(\Omega)}\leq {} & \sum_{n=1}^\infty \|A_n\|_{\leb^2(\Omega)}\|\e^{-n^2\pi^2\beta^2 t}\|_{\leb^2(\Omega)} \\
\leq {} & C\sum_{n=1}^\infty \|\e^{-n^2\pi^2\beta^2 t}\|_{\leb^2(\Omega)}<\infty, 
\end{align*}
which implies that $v_N(y,t)\rightarrow v(y,t)$ in $\leb^2(\Omega)$ as $N\rightarrow\infty$. By (\ref{sol_u_w}) and (\ref{sol_u_w_N}), this is equivalent to $u_N(x,t)\rightarrow u(x,t)$ in $\leb^2(\Omega)$ as $N\rightarrow\infty$.

We already know that, if $v_N(y,t)(\omega)$, $A(\omega)$ and $B(\omega)$ are absolutely continuous and independent random variables, then $u_N(x,t)(\omega)$ has a density function $f_{u_N(x,t)}(u)$ given by (\ref{fN}). On the other hand, if $A$ and $B$ are deterministic, assuming that $v_N(y,t)(\omega)$ is absolutely continuous one has that $u_N(x,t)(\omega)$ has a density function $f_{u_N(x,t)}(u)$ expressed by (\ref{fN2}). Thus,
\begin{equation}\label{media_aprox}
\mathbb{E}[u_N(x,t)]=\int_\mathbb{R} u\,f_{u_N(x,t)}(u)\,\dif u 
\end{equation}
and
\begin{equation}\label{varianza_aprox}
 \mathbb{V}[u_N(x,t)]=\int_\mathbb{R} u^2\,f_{u_N(x,t)}(u)\,\dif u-\left(\mathbb{E}[u_N(x,t)]\right)^2. 
\end{equation}

We summarize these ideas in the following theorem and remark, where the random or deterministic nature of the parameters $A$ and $B$ is distinguished, respectively, for the sake of completeness in the statement of our findings:

\begin{theorem} \label{e1}
If $\phi\in\leb^2([L_1,L_2]\times\Omega)$, $A,B\in\leb^2(\Omega)$, $\alpha^2,(A_1,\ldots,A_N),A,B$ are absolutely continuous and independent, and $\sum_{n=1}^\infty \|\e^{-n^2\pi^2\alpha^2 t/(L_2-L_1)^2}\|_{\leb^2(\Omega)}<\infty$, then $u(x,t)\in\leb^2(\Omega)$,
\[ \mathbb{E}[u_N(x,t)]=\int_\mathbb{R} u\,f_{u_N(x,t)}(u)\,\dif u\stackrel{N\rightarrow\infty}{\longrightarrow} \mathbb{E}[u(x,t)] \]
and
\[ \mathbb{V}[u_N(x,t)]=\int_\mathbb{R} u^2\,f_{u_N(x,t)}(u)\,\dif u-\left(\mathbb{E}[u_N(x,t)]\right)^2 \stackrel{N\rightarrow\infty}{\longrightarrow} \mathbb{V}[u(x,t)], \]
for each $L_1<x<L_2$ and $t>0$.
\end{theorem}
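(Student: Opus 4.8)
The plan is to reduce the statement to the mean-square convergence $u_N(x,t)\to u(x,t)$ in $\leb^2(\Omega)$ that is worked out in the discussion immediately preceding the theorem, and then to transfer that convergence to the first two moments and to the density formulas (\ref{media_aprox})--(\ref{varianza_aprox}).

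First I would establish $u(x,t)\in\leb^2(\Omega)$ together with the mean-square convergence. Writing $\beta^2=\alpha^2/(L_2-L_1)^2$, the hypothesis is exactly $\sum_{n\geq1}\|\e^{-n^2\pi^2\beta^2 t}\|_{\leb^2(\Omega)}<\infty$. Since $\alpha^2$ (hence $\beta^2$) is independent of each $A_n$ and $\|A_n\|_{\leb^2(\Omega)}\leq C$ uniformly in $n$ (from the proof of \cite[Th.~1.3]{jjcm}), the Cauchy--Schwarz inequality gives $\sum_{n\geq1}\|A_n\e^{-n^2\pi^2\beta^2 t}\sin(n\pi y)\|_{\leb^2(\Omega)}<\infty$, so $v_N(y,t)\to v(y,t)$ in $\leb^2(\Omega)$ for each $0<y<1$ and $t>0$. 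By (\ref{sol_u_w}) and (\ref{sol_u_w_N}) the difference $u_N(x,t)-u(x,t)$ equals $v_N(\frac{x-L_1}{L_2-L_1},t)-v(\frac{x-L_1}{L_2-L_1},t)$, hence $u_N(x,t)\to u(x,t)$ in $\leb^2(\Omega)$. Moreover each $u_N(x,t)\in\leb^2(\Omega)$, because $v_N(y,t)$ is a finite sum of products $A_n\e^{-n^2\pi^2\beta^2 t}\sin(n\pi y)$ with $A_n\in\leb^2(\Omega)$ and $0<\e^{-n^2\pi^2\beta^2 t}\leq1$, and $A,B\in\leb^2(\Omega)$; completeness of $\leb^2(\Omega)$ then forces $u(x,t)\in\leb^2(\Omega)$.

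Next I would transfer the mean-square convergence to the moments. From $u_N(x,t)\to u(x,t)$ in $\leb^2(\Omega)$ one gets $|\mathbb{E}[u_N(x,t)]-\mathbb{E}[u(x,t)]|\leq\|u_N(x,t)-u(x,t)\|_{\leb^1(\Omega)}\leq\|u_N(x,t)-u(x,t)\|_{\leb^2(\Omega)}\to0$, and, by Cauchy--Schwarz, $|\mathbb{E}[u_N(x,t)^2]-\mathbb{E}[u(x,t)^2]|=|\mathbb{E}[(u_N-u)(u_N+u)]|\leq\|u_N(x,t)-u(x,t)\|_{\leb^2(\Omega)}\,\|u_N(x,t)+u(x,t)\|_{\leb^2(\Omega)}\to0$, the last factor remaining bounded because $\|u_N(x,t)+u(x,t)\|_{\leb^2(\Omega)}\leq\|u_N(x,t)-u(x,t)\|_{\leb^2(\Omega)}+2\|u(x,t)\|_{\leb^2(\Omega)}$. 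Hence $\mathbb{V}[u_N(x,t)]=\mathbb{E}[u_N(x,t)^2]-(\mathbb{E}[u_N(x,t)])^2\to\mathbb{E}[u(x,t)^2]-(\mathbb{E}[u(x,t)])^2=\mathbb{V}[u(x,t)]$.

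Finally I would identify the moments of $u_N(x,t)$ with the integrals against $f_{u_N(x,t)}$. Arguing as in the proof of Theorem \ref{super}, from the independence and absolute continuity of $\alpha^2,(A_1,\ldots,A_N),A,B$ one gets that $v_N(y,t)$ (conditionally on $\beta^2$ an affine function of $(A_1,\ldots,A_N)$ with nonzero coefficient for $A_1$ when $0<y<1$, hence absolutely continuous), $A$ and $B$ are independent and absolutely continuous, so $u_N(x,t)$ is absolutely continuous with density $f_{u_N(x,t)}$ given by (\ref{fN}); and since $u_N(x,t)\in\leb^2(\Omega)$, the transfer formula for absolutely continuous variables gives $\mathbb{E}[u_N(x,t)]=\int_\mathbb{R} u\,f_{u_N(x,t)}(u)\,\dif u$ and $\mathbb{E}[u_N(x,t)^2]=\int_\mathbb{R} u^2\,f_{u_N(x,t)}(u)\,\dif u$, both finite. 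Plugging these into the previous step yields the two displayed limits. I do not expect a genuine obstacle here: the one ingredient not reproved from scratch is the uniform bound $\|A_n\|_{\leb^2(\Omega)}\leq C$, which is borrowed from \cite{jjcm}, and the only mild care needed is in checking that everything stays in $\leb^2(\Omega)$ so that the density-based moment formulas are legitimate.
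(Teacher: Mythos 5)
Your proposal is correct and follows essentially the same route as the paper: the paper's argument for Theorem \ref{e1} is precisely the discussion preceding it, namely mean-square convergence $u_N(x,t)\to u(x,t)$ obtained from the uniform bound $\|A_n\|_{\leb^2(\Omega)}\leq C$ of \cite{jjcm}, the independence of $\beta^2$ and the $A_n$, and the summability of $\|\e^{-n^2\pi^2\beta^2 t}\|_{\leb^2(\Omega)}$, followed by the density-based moment formulas (\ref{media_aprox})--(\ref{varianza_aprox}) coming from (\ref{fN}). You merely make explicit the routine steps the paper leaves implicit (the passage from $\leb^2(\Omega)$ convergence to convergence of the first two moments, $u(x,t)\in\leb^2(\Omega)$ by completeness, and the transfer formula); note only that the factorization $\|A_n\e^{-n^2\pi^2\beta^2 t}\|_{\leb^2(\Omega)}=\|A_n\|_{\leb^2(\Omega)}\|\e^{-n^2\pi^2\beta^2 t}\|_{\leb^2(\Omega)}$ is a consequence of independence rather than of the Cauchy--Schwarz inequality, a harmless mislabel.
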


\begin{remark} \label{nota_A_B_deterministicos}
Theorem \ref{e1} holds in the case that $A$ and $B$ are deterministic values.
\end{remark}


\section{Applications}

The first question that arises is to which random diffusion coefficients, random boundary conditions and random initial conditions our results can be applied. 

We begin by studying hypothesis
\begin{equation}
 \sum_{n=m}^\infty \|\e^{-(n^2-2)\pi^2\alpha^2 t/(L_2-L_1)^2}\|_{\leb^1(\Omega)}<\infty, 
 \label{hip_a}
\end{equation}
for $t>0$. It is clear that if $\alpha^2$ is bounded below, meaning that $\alpha^2(\omega)\geq a>0$ for a.e. $\omega\in\Omega$, then (\ref{hip_a}) holds. This covers all cases in practice, as we may truncate $\alpha^2$, \cite{truncate}. Notice, however, that the condition $\alpha^2(\omega)\geq a>0$ is not necessary to have (\ref{hip_a}). For example, if $\alpha^2\sim\text{Uniform}(0,b)$, $b>0$, then we know that its moment generating function is given by
\begin{equation}
 \mathbb{E}[\e^{\lambda \alpha^2}]=\frac{\e^{\lambda b}-1}{\lambda b}, 
 \label{momentunif}
\end{equation}
therefore 
\begin{align*}
 \sum_{n=m}^\infty \|\e^{-(n^2-2)\pi^2\alpha^2 t/(L_2-L_1)^2}\|_{\leb^1(\Omega)}= {} & \sum_{n=m}^\infty \frac{\e^{-(n^2-2)\pi^2 b t/(L_2-L_1)^2}-1}{-(n^2-2)\pi^2 b t/(L_2-L_1)^2} \\
\leq {} & \sum_{n=m}^\infty\frac{1}{(n^2-2)\pi^2 b t /(L_2-L_1)^2}<\infty. 
\end{align*}
Another distribution for which (\ref{hip_a}) holds, this time not upper-bounded, is $\alpha^2\sim\text{Gamma}(r,s)$, being $r>1/2$ the shape and $s>0$ the rate. Its moment generating function is given by 
\[ \mathbb{E}[\e^{\lambda \alpha^2}]=\frac{1}{\left(1-\frac{\lambda}{s}\right)^r}, \]
for $\lambda<s$. Then
\begin{equation}
 \sum_{n=m}^\infty \|\e^{-(n^2-2)\pi^2\alpha^2 t/(L_2-L_1)^2}\|_{\leb^1(\Omega)}=\sum_{n=m}^\infty \frac{1}{[(1+(n^2-2)\pi^2 \alpha^2 t/(s(L_2-L_1)^2)]^r}<\infty. 
 \label{eq_gamma}
\end{equation}
Notice that, if $0<r\leq 1/2$, then $\sum_{n=m}^\infty \|\e^{-(n^2-2)\pi^2\alpha^2 t/(L_2-L_1)^2}\|_{\leb^1(\Omega)}=\infty$. This shows that hypothesis (\ref{hip_a}) might not hold.

Concerning hypothesis 
\begin{equation}
 \mathbb{E}[\e^{\pi^2\alpha^2 t/(L_2-L_1)^2}]<\infty,
 \label{hip_a2}
\end{equation}
for $t>0$, just take any distribution with finite moment generating function for $t>0$. For instance, $\text{Uniform}(0,b)$ with moment generating function (\ref{momentunif}), $\text{Normal}(\mu,\sigma^2)$ with moment generating function at $\lambda$ given by $\e^{\mu \lambda+1/2\sigma^2 \lambda^2}$, etc. 

The gamma distribution may be used to highlight the fact that, fixed $t>0$, hypotheses (\ref{hip_a}) and (\ref{hip_a2}) are independent. Suppose that $\alpha^2\sim\text{Gamma}(r,s)$, being $r>0$ the shape and $s>0$ the rate. Then (\ref{hip_a}) is accomplished if and only if $r>1/2$ and $s>0$ (see (\ref{eq_gamma})), whereas (\ref{hip_a2}) fulfills if and only if $r>0$ and $\pi^2 t/(L_2-L_1)^2<s$. 

The most difficult step is to compute $f_{A_1}$ and $f_{(A_2,\ldots,A_N)}$ in (\ref{fr}). We are going to see that the density function of
\[ A_n(\omega)=2\int_0^1 \psi(y)(\omega)\sin(n\pi y)\,\dif y\]
can be computed when the initial condition process $\phi$ has a certain expression concerning the Karhunen-Loève expansion. Take $\psi$ defined in (\ref{psi_w}). As $\psi\in \leb^2([0,1]\times \Omega)$, for each fixed $\omega\in\Omega$ the real function $\psi(\cdot)(\omega)$ belongs to $\leb^2([0,1])$. We can expand $\psi(\cdot)(\omega)$ as a Fourier series on $[0,1]$ with the orthonormal basis $\{\sqrt{2}\sin(j\pi y)\}_{j=1}^\infty$. Hence,
\begin{equation}
 \psi(y)(\omega)=\sum_{j=1}^\infty c_j(\omega)\sqrt{2}\sin(j\pi y), 
 \label{fouri}
\end{equation}
where the series is taken in $\leb^2([0,1])$ for each $\omega\in\Omega$, and where $c_j(\omega)$ are the random variables corresponding to the Fourier coefficients of $\psi(\cdot)(\omega)$. This expression (\ref{fouri}) corresponds to the Karhunen-Loève expansion of the process $\psi$. We will restrict to processes for which the random Fourier coefficients $\{c_j\}_{j=1}^\infty$ are independent and absolutely continuous random variables. Thus, in the notation of Lemma \ref{KLlemma}, we write
\begin{equation}
 \psi(y)(\omega)=\sum_{j=1}^\infty \sqrt{\nu_j}\sqrt{2}\sin(j\pi y)\xi_j(\omega), 
 \label{KLpsi}
\end{equation}
where the series converges in $\leb^2([0,1]\times \Omega)$, $\{\nu_j\}_{j=1}^\infty$ are nonnegative real numbers satisfying $\sum_{j=1}^\infty \nu_j<\infty$ and $\{\xi_j\}_{j=1}^\infty$ are absolutely continuous random variables with zero expectation, unit variance and independent ($c_j(\omega)=\sqrt{\nu_j}\xi_j(\omega)$, so that $\xi_j$ standardizes $c_j(\omega)$). Notice that the sum is well-defined in $\leb^2([0,1]\times\Omega)$, because for two indexes $N>M$ we have, by Pythagoras's Theorem in $\leb^2([0,1]\times\Omega)$,
\begin{align}
 \left\|\sum_{j=M+1}^N \sqrt{\nu_j} \,\sqrt{2}\,\sin(j\pi x)\,\xi_j\right\|_{\leb^2([0,1]\times\Omega)}^2= {} & \sum_{j=M+1}^N \nu_j\, \|\sqrt{2}\,\sin(j\pi x)\|_{\leb^2([0,1])}^2\|\xi_j\|_{\leb^2(\Omega)}^2 \nonumber \\
= {} & \sum_{j=M+1}^N \nu_j\stackrel{N,M\rightarrow\infty}{\longrightarrow}0. \label{pyth}
\end{align}

We can compute explicitly the random Fourier coefficients $A_n$:
\begin{align}
 A_n(\omega)= {} & 2\int_0^1 \psi(y)(\omega)\sin(n\pi y)\,\dif y=2\sum_{j=1}^\infty \sqrt{\nu_j}\,\sqrt{2}\,\int_0^1 \sin(j\pi y)\sin(n\pi y)\,\dif y\,\xi_j(\omega) \nonumber \\
= {} & \sqrt{2}\,\sqrt{\nu_n}\,\xi_n(\omega). \label{A1x1}
\end{align}
The key fact in this computation is that the eigenfunctions of the Sturm-Liouville problem associated to (\ref{edp_determinista_homo}) are precisely $\{\sqrt{2}\sin(j\pi y)\}_{j=1}^\infty$. From (\ref{A1x1}) and our assumptions on $\{\xi_j\}_{j=1}^\infty$, we derive that $A_1,A_2,\ldots$ are independent and absolutely continuous  random variables. Using Lemma \ref{lema_abscont},
\[ f_{A_n}(a)=\frac{1}{\sqrt{2\nu_n}} f_{\xi_n}\left(\frac{a}{\sqrt{2\nu_n}}\right). \]
If $f_{\xi_1}$ is Lipschitz (respectively a.e. continuous and essentially bounded) on $\mathbb{R}$, then $f_{A_1}$ is Lipschitz (respectively a.e. continuous and essentially bounded) on $\mathbb{R}$ too, and all the hypotheses of Theorem \ref{super} (respectively Theorem \ref{superllei}) are fulfilled.

The Lipschitz condition on $\mathbb{R}$ is satisfied by the probability density function of some named distributions:
\begin{itemize}
\item $\text{Normal}(\mu,\sigma^2)$, $\mu\in\mathbb{R}$ and $\sigma^2>0$.
\item $\text{Beta}(a,b)$, $a,b\geq2$.
\item $\text{Gamma}(a,b)$, $a\geq2$ and $b>0$.
\end{itemize}
In general, any density with bounded derivative on $\mathbb{R}$ satisfies the Lipschitz condition on $\mathbb{R}$, by the Mean Value Theorem. By contrast, some non-Lipschitz density functions are the uniform distribution, the exponential distribution, etc. or any other density with a jump discontinuity at some point of $\mathbb{R}$. However, non-Lipschitz density functions may be regularized at the point of discontinuity so that the Lipschitz assumption is fulfilled and, moreover, the probabilistic behavior of the regularized density function is the same in practice as the original non-Lipschitz density.

The a.e. continuity and essential boundedness is satisfied by the probability density function of more distributions:
\begin{itemize}
\item $\text{Normal}(\mu,\sigma^2)$, $\mu\in\mathbb{R}$ and $\sigma^2>0$.
\item $\text{Beta}(a,b)$, $a,b\geq1$. 
\item $\text{Uniform}(a,b)$, $a<b$.
\item $\text{Gamma}(a,b)$, $a\geq1$ and $b>0$. In particular, $\text{Exponential}(\lambda)$, $\lambda>0$.
\item Truncated normal distribution.
\end{itemize}

We will do examples for initial conditions $\phi$ such that the corresponding $\psi$ is written as (\ref{KLpsi}), being $\xi_1,\xi_2,\ldots$ independent and absolutely continuous random variables, with zero expectation and unit variance, $f_{\xi_1}$ Lipschitz on $\mathbb{R}$ and $\sum_{j=1}^\infty \nu_j<\infty$. In the examples, we will combine $A$ and $B$ deterministic and absolutely continuous random variables, with $\psi$ being a Gaussian and non-Gaussian process. Hence, all the examples suppose an improvement of \cite{jjcm}.

The densities $f_{u_N(x,t)}(u)$ that approximate $f_{u(x,t)}(u)$ will be computed numerically in an (almost) exact manner, using the software Mathematica\textsuperscript{\tiny\textregistered}, concretely, its built-in function \verb|NIntegrate|. In this way, we will be able to study the exact difference between two consecutive orders of truncation $N$ and $N+1$. 

\begin{example}[The process $\psi$ is Gaussian, the boundary conditions $A$ and $B$ are deterministic]	\label{ex1} \normalfont

Let 
\[ \psi(y)(\omega)=\sum_{j=1}^\infty \frac{\sqrt{2}}{\pi j}\sin(j\pi y)\xi_j(\omega) \]
be a standard Brownian bridge on $[0,1]$, see \cite[Example 5.30]{llibre_powell}, being $\xi_1,\xi_2,\ldots$ independent and $\text{Normal}(0,1)$ random variables. By (\ref{psi_w}), $\phi$ is a Brownian bridge on $[L_1,L_2]$ that takes on the values $A$ and $B$ at the boundary. We choose $L_1=0$ and $L_2=6$, $A=-3$ and $B=3$. The diffusion coefficient is $\alpha^2\sim\text{Uniform}(1,2)$. Theorem \ref{super_det} applies in this case. 

In Figure \ref{ex1phi}, three plots of the path described by $\phi(x)$ for three different outcomes $\omega$ are shown.

\begin{figure}[hbtp!]
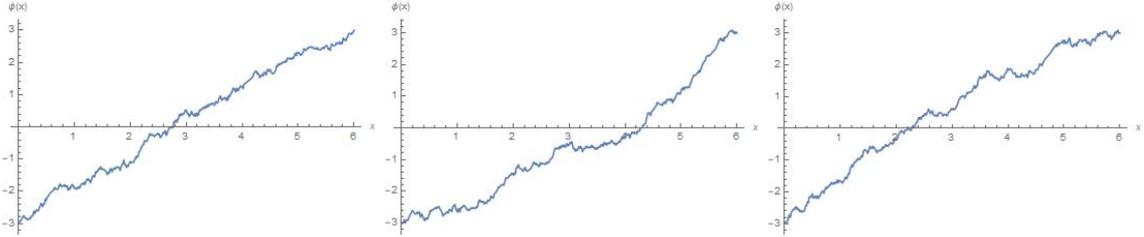

  \begin{center}
    \includegraphics[width=4.9cm]{ex1_phi_1.jpg}
		\includegraphics[width=4.9cm]{ex1_phi_2.jpg}
		\includegraphics[width=4.9cm]{ex1_phi_3.jpg}
    \caption{Paths of the initial condition $\phi(x)$ for three different outcomes $\omega$. Example \ref{ex1}.}
		\label{ex1phi}
    \end{center}
  \end{figure}
	
In Figure \ref{ex1dens}, we approximate the probability density function of the solution stochastic process $u(x,t)(\omega)$ at $x=5$ and $t=0.2$, using (\ref{fN2}), for $N=1,2,3,4$. Convergence seems to be achieved. In Figure \ref{ex1dens3D}, a three-dimensional plot of (\ref{fN2}) for $x=5$, $N=2$ and $t$ varying is shown. In Table \ref{ex1densTable}, the infinity norm of the difference of two consecutive orders of approximation $N$ and $N+1$, for $N=1,2,3$, is computed. We can see that the errors decrease to $0$ as $N$ grows, which agrees with our theoretical findings. In Table \ref{ex1densE}, using Theorem \ref{e1} together with Remark \ref{nota_A_B_deterministicos}, the expectation and variance of $u(x,t)(\omega)$ have been approximated, for different orders of truncation. 

\begin{figure}[hbtp!]
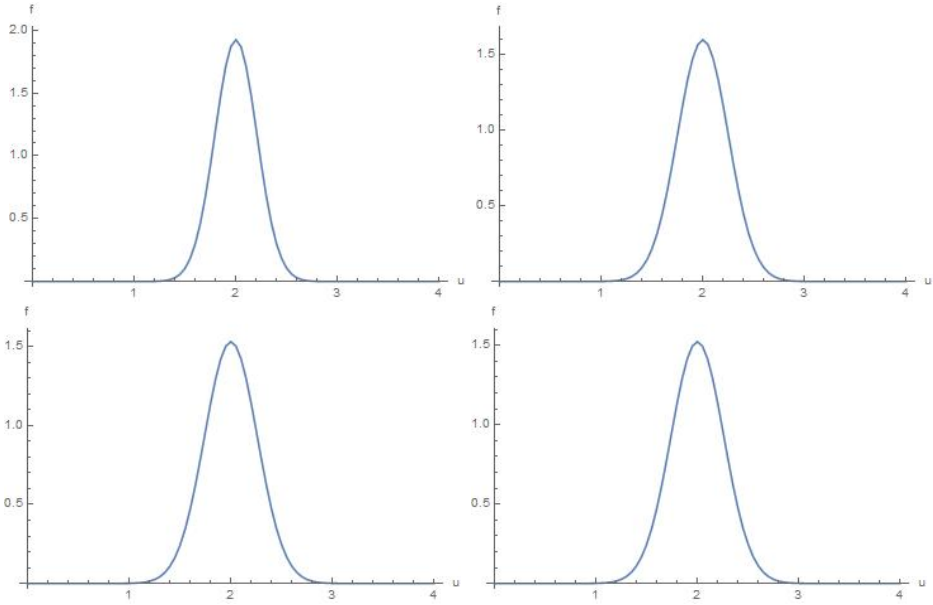

  \begin{center}
    \includegraphics[width=6cm]{ex1_dens_1.jpg}
		\includegraphics[width=6cm]{ex1_dens_2.jpg}
		\includegraphics[width=6cm]{ex1_dens_3.jpg}
		\includegraphics[width=6cm]{ex1_dens_4.jpg}
    \caption{Approximation (\ref{fN2}) for $N=1$ (up left), $N=2$ (up right), $N=3$ (down left) and $N=4$ (down right), at $x=5$ and $t=0.2$. Example \ref{ex1}.}
		\label{ex1dens}
    \end{center}
  \end{figure}
	
\begin{figure}[hbtp!]
  \begin{center}
    \includegraphics[width=5cm]{ex1dens3D.jpg}
    \caption{Approximation (\ref{fN2}) for $N=2$ at $x=5$ and $t$ varying. Example \ref{ex1}.}
		\label{ex1dens3D}
    \end{center}
  \end{figure}
	
\begin{table}[hbtp!]
\begin{center}
\begin{tabular}{|c|c|} \hline
$N$ & $\|f_{u_N(5,0.2)}-f_{u_{N+1}(5,0.2)}\|_{\leb^\infty(\mathbb{R})}$ \\ \hline
$1$ & $0.330855$ \\ \hline
$2$ & $0.0622449$ \\ \hline
$3$ & $0.00820879$ \\ \hline
\end{tabular}
\caption{Infinity norm of the difference of two consecutive orders of approximation $N$ and $N+1$ given by (\ref{fN2}), for $N=1,2,3$. Example \ref{ex1}.}
\label{ex1densTable}
\end{center}
\end{table}

\begin{table}[hbtp!]
\begin{center}
\begin{tabular}{|c|c|c|c|c|} \hline
$N$ & $1$ & $2$ & $3$ & $4$ \\ \hline
$\mathbb{E}[u_N(5,0.2)]$ & $2$ & $2$ & $2$ & $2$  \\ \hline
$\mathbb{V}[u_N(5,0.2)]$ & $0.0429981$ & $0.0628341$ & $0.0681679$ & $0.0689422$ \\ \hline
\end{tabular}
\caption{Approximations of $\mathbb{E}[u(5,0.2)]$ and $\mathbb{V}[u(5,0.2)]$ for $N=1,2,3,4$ constructed by (\ref{media_aprox}) and (\ref{varianza_aprox}), respectively, being $f_{u_N(x,t)}$ given by  (\ref{fN2}). Example \ref{ex1}.}
\label{ex1densE}
\end{center}
\end{table}

\end{example}

\begin{example}[The process $\psi$ is non-Gaussian, the boundary conditions $A$ and $B$ are deterministic]	\label{ex2} \normalfont

Let 
\[ \psi(y)(\omega)=\sum_{j=1}^\infty \frac{\sqrt{2}}{j^{\frac32}\sqrt{1+\log j}}\sin(j\pi y)\xi_j(\omega), \]
where $\xi_1,\xi_2,\ldots$ are independent and identically distributed random variables with density function
\[ f_{\xi_1}(\xi)=\frac{\sqrt{2}}{\pi(1+\xi^4)},\quad -\infty < \xi < \infty. \]
It is easy to check that this is indeed a density function, with zero expectation and unit variance. Thereby, $\psi$ is a non-Gaussian stochastic process on $[0,1]$ (if it were Gaussian, by the last assertion in the statement of Lemma \ref{KLlemma}, $\xi_1,\xi_2,\ldots$ would be normally distributed). The sum defining $\psi$ is well-defined in $\leb^2([0,1]\times\Omega)$, because $\sum_{j=1}^\infty 1/(j^3(1+\log j))<\infty$ (see (\ref{pyth})). By (\ref{psi_w}), we can simulate the sample paths of $\phi(x)$ on $[L_1,L_2]$. The data chosen are $L_1=-8$, $L_2=2\pi+1$, $A=-1$ and $B=2$. The distribution for $\alpha^2$ is $\text{Uniform}(1,2)$. Theorem \ref{super_det} guarantees the convergence of the approximating sequence (\ref{fN2}). 

In Figure \ref{ex2phi}, three plots of the path described by $\phi(x)$ for three different outcomes $\omega$ are presented.

\begin{figure}[hbtp!]
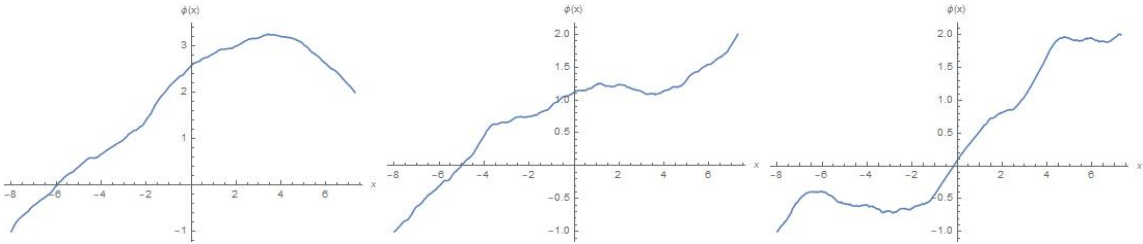

  \begin{center}
    \includegraphics[width=4.9cm]{ex2_phi_1.jpg}
		\includegraphics[width=4.9cm]{ex2_phi_2.jpg}
		\includegraphics[width=4.9cm]{ex2_phi_3.jpg}
    \caption{Paths of the initial condition $\phi(x)$ for three different outcomes $\omega$. Example \ref{ex2}.}
		\label{ex2phi}
    \end{center}
  \end{figure}
	
In Figure \ref{ex2dens}, we approximate the probability density function of the solution stochastic process $u(x,t)(\omega)$ at $x=1$ and $t=0.1$, using (\ref{fN2}), for $N=1,2,3,4$. In Figure \ref{ex2dens3D}, a three-dimensional plot of (\ref{fN2}) for $x=1$, $N=2$ and $t$ varying is shown, to see the evolution of the density as time goes on. In order to assess convergence analytically, in Table \ref{ex2densTable}, the maximum of the difference of two consecutive orders of approximation $N$ and $N+1$ given by (\ref{fN2}), for $N=1,2,3$, is computed. The errors decrease to $0$ as $N$ grows, which goes in the direction of our theoretical results. In Table \ref{ex2densE}, the expectation and variance of $u(x,t)(\omega)$ have been approximated, using Theorem \ref{e1} and Remark \ref{nota_A_B_deterministicos} together with expressions \eqref{media_aprox} and (\ref{varianza_aprox}), respectively.

\begin{figure}[hbtp!]
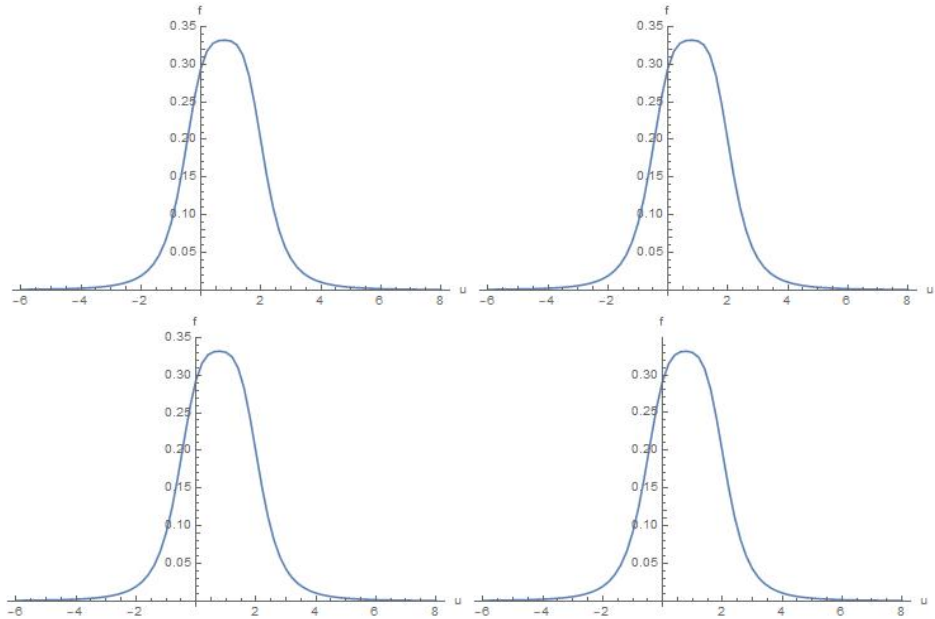

  \begin{center}
    \includegraphics[width=6cm]{ex2_dens_1.jpg}
		\includegraphics[width=6cm]{ex2_dens_2.jpg}
		\includegraphics[width=6cm]{ex2_dens_3.jpg}
		\includegraphics[width=6cm]{ex2_dens_4.jpg}
    \caption{Approximation (\ref{fN2}) for $N=1$ (up left), $N=2$ (up right), $N=3$ (down left) and $N=4$ (down right), at $x=1$ and $t=0.1$. Example \ref{ex2}.}
		\label{ex2dens}
    \end{center}
  \end{figure}
	
\begin{figure}[hbtp!]
  \begin{center}
    \includegraphics[width=5cm]{ex2dens3D.jpg}
    \caption{Approximation (\ref{fN2}) for $N=2$ at $x=1$ and $t$ varying. Example \ref{ex2}.}
		\label{ex2dens3D}
    \end{center}
  \end{figure}
	
\begin{table}[hbtp!]
\begin{center}
\begin{tabular}{|c|c|} \hline
$N$ & $\|f_{u_N(1,0.1)}-f_{u_{N+1}(1,0.1)}\|_{\leb^\infty(\mathbb{R})}$ \\ \hline
$1$ & $0.00631990$ \\ \hline
$2$ & $0.00214919$ \\ \hline
$3$ & $0.00128318$ \\ \hline 
\end{tabular}
\caption{Infinity norm of the difference of two consecutive orders of approximation $N$ and $N+1$ given by (\ref{fN2}), for $N=1,2,3$. Example \ref{ex2}.}
\label{ex2densTable}
\end{center}
\end{table}

\begin{table}[hbtp!]
\begin{center}
\begin{tabular}{|c|c|c|c|c|} \hline
$N$ & $1$ & $2$ & $3$ & $4$ \\ \hline
$\mathbb{E}[u_N(1,0.1)]$ & $1.54545$ & $1.54497$ & $1.54495$ & $1.54494$  \\ \hline
$\mathbb{V}[u_N(1,0.1)]$ & $1.51182$ & $1.55304$ & $1.56661$ & $1.57496$ \\ \hline
\end{tabular}
\caption{Approximation of $\mathbb{E}[u(1,0.1)]$ and $\mathbb{V}[u(1,0.1)]$ for $N=1,2,3,4$ constructed by (\ref{media_aprox}) and (\ref{varianza_aprox}), respectively, being $f_{u_N(x,t)}$ given by  (\ref{fN2}). Example \ref{ex2}.}
\label{ex2densE}
\end{center}
\end{table}

\end{example}

\begin{example}[The process $\psi$ is Gaussian, the boundary conditions $A$ and $B$ are random]	\label{ex3} \normalfont

Let 
\[ \psi(y)(\omega)=\sum_{j=1}^\infty \frac{\sqrt{2}}{\pi j}\sin(j\pi y)\xi_j(\omega) \]
be a standard Brownian bridge on $[0,1]$, as in Example \ref{ex1}. The data chosen are $L_1=0$, $L_2=6$ and $\alpha^2\sim\text{Uniform}(1,2)$, as in Example \ref{ex1}, but now the boundary conditions $A$ and $B$ are random: $A$ follows a triangular distribution with ends $-5$ and $-2$ and mode $-3$, whereas $B$ is an exponentially distributed random variable with mean $2$ and truncated to $[3,5]$. The modes of $A$ and $B$ coincide with the deterministic boundary conditions in Example \ref{ex1}, so similar results for the density function could occur.

In Figure \ref{ex3phi}, three plots of the path described by $\phi(x)$ for three different outcomes $\omega$ are presented.

\begin{figure}[hbtp!]
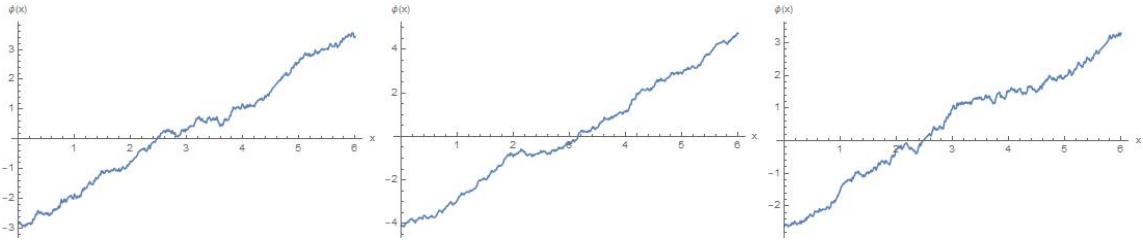

  \begin{center}
    \includegraphics[width=4.9cm]{ex3_phi_1.jpg}
		\includegraphics[width=4.9cm]{ex3_phi_2.jpg}
		\includegraphics[width=4.9cm]{ex3_phi_3.jpg}
    \caption{Paths of the initial condition $\phi(x)$ for three different outcomes $\omega$. Example \ref{ex3}.}
		\label{ex3phi}
    \end{center}
  \end{figure}
	
In Figure \ref{ex3dens}, we approximate the probability density function of the solution stochastic process $u(x,t)(\omega)$ at $x=5$ and $t=0.2$, using (\ref{fN}), for $N=1,2,3,4$. Compare the plots with those of Example \ref{ex1}, where the boundary conditions were deterministic with constant value the mode of $A$ and $B$. In Table \ref{ex3densTable}, the errors are analyzed. In Table \ref{ex3densE}, both $\mathbb{E}[u(x,t)]$ and $\mathbb{V}[u(x,t)]$ are approximated, according to Theorem \ref{e1}. 

\begin{figure}[hbtp!]
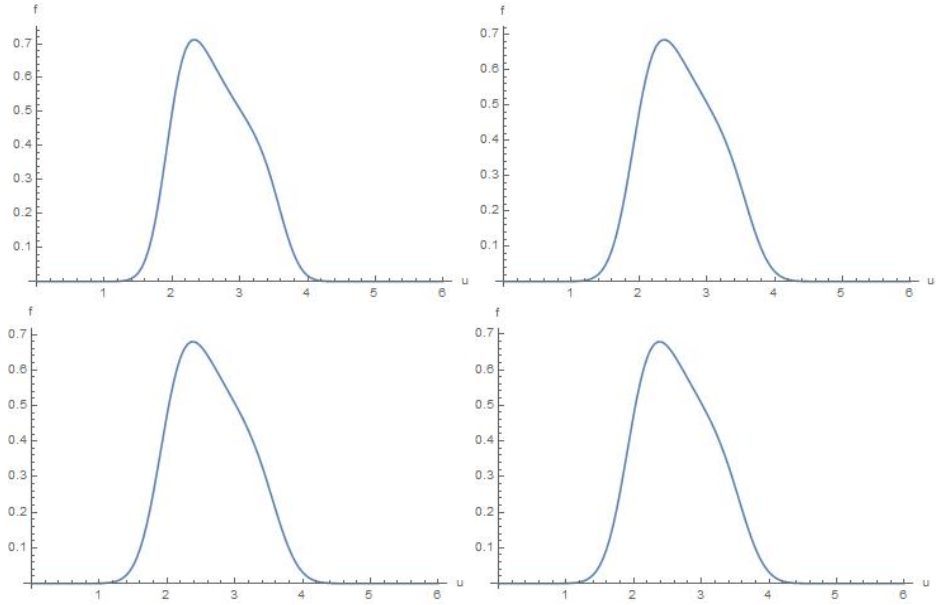

  \begin{center}
    \includegraphics[width=6cm]{ex3_dens_1.jpg}
		\includegraphics[width=6cm]{ex3_dens_2.jpg}
		\includegraphics[width=6cm]{ex3_dens_3.jpg}
		\includegraphics[width=6cm]{ex3_dens_4.jpg}
    \caption{Approximation (\ref{fN}) for $N=1$ (up left), $N=2$ (up right), $N=3$ (down left) and $N=4$ (down right), at $x=5$ and $t=0.2$. Example \ref{ex3}.}
		\label{ex3dens}
    \end{center}
  \end{figure}
	
\begin{table}[hbtp!]
\begin{center}
\begin{tabular}{|c|c|} \hline
$N$ & $\|f_{u_N(5,0.2)}-f_{u_{N+1}(5,0.2)}\|_{\leb^\infty(\mathbb{R})}$ \\ \hline
$1$ & $0.0390501$ \\ \hline
$2$ & $0.00870084$ \\ \hline
$3$ & $0.00119532$ \\ \hline 
\end{tabular}
\caption{Infinity norm of the difference of two consecutive orders of approximation $N$ and $N+1$ given by (\ref{fN}), for $N=1,2,3$. Example \ref{ex3}.}
\label{ex3densTable}
\end{center}
\end{table}

\begin{table}[hbtp!]
\begin{center}
\begin{tabular}{|c|c|c|c|c|} \hline
$N$ & $1$ & $2$ & $3$ & $4$ \\ \hline
$\mathbb{E}[u_N(5,0.2)]$ & $2.64115$ & $2.64115$ & $2.64115$ & $2.64115$  \\ \hline
$\mathbb{V}[u_N(5,0.2)]$ & $0.274152$ & $0.293988$ & $0.299323$ & $0.300101$ \\ \hline
\end{tabular}
\caption{Approximation of $\mathbb{E}[u(5,0.2)]$ and $\mathbb{V}[u(5,0.2)]$ for $N=1,2,3,4$ constructed by (\ref{media_aprox}) and (\ref{varianza_aprox}), respectively, being $f_{u_N(x,t)}$ given by  (\ref{fN}). Example \ref{ex3}.}
\label{ex3densE}
\end{center}
\end{table}

\end{example}

\begin{example}[The process $\psi$ is non-Gaussian, the boundary conditions $A$ and $B$ are random]	\label{ex4} \normalfont

Let 
\[ \psi(y)(\omega)=\sum_{j=1}^\infty \frac{\sqrt{2}}{j^{\frac32}\sqrt{1+\log j}}\sin(j\pi y)\xi_j(\omega), \]
be the same process as in Example \ref{ex2}. The interval where the heat equation is defined has endpoints $L_1=-8$ and $L_2=2\pi+1$, and $\alpha^2\sim\text{Uniform}(1,2)$, as in Example \ref{ex2}. But now the boundary conditions $A$ and $B$ are random: $A\sim\text{Uniform}(-1.5,-0.5)$ and $B\sim \text{Normal}(2,1)$. Notice that $\mathbb{E}[A]$ and $\mathbb{E}[B]$ are the deterministic boundary conditions of Example \ref{ex2}, so the approximated density functions may resemble those from Example \ref{ex2}.

In Figure \ref{ex4phi}, three plots of the path described by $\phi(x)$ for three different outcomes $\omega$ are presented.

\begin{figure}[hbtp!]
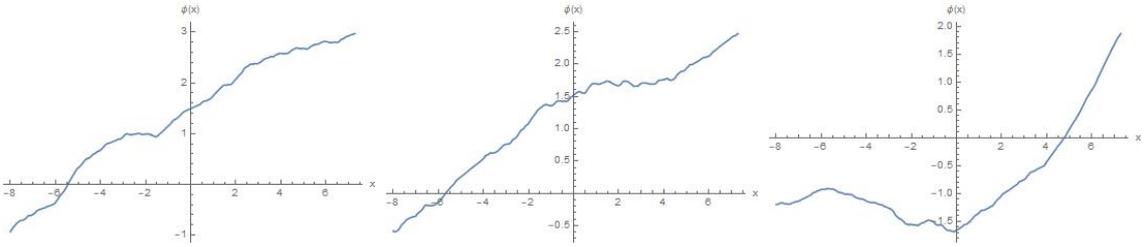

  \begin{center}
    \includegraphics[width=4.9cm]{ex4_phi_1.jpg}
		\includegraphics[width=4.9cm]{ex4_phi_2.jpg}
		\includegraphics[width=4.9cm]{ex4_phi_3.jpg}
    \caption{Paths of the initial condition $\phi(x)$ for three different outcomes $\omega$. Example \ref{ex4}.}
		\label{ex4phi}
    \end{center}
  \end{figure}
	
In Figure \ref{ex4dens}, we approximate the probability density function of the solution stochastic process $u(x,t)(\omega)$ at $x=1$ and $t=0.1$, using (\ref{fN}), for $N=1,2,3,4$. These plots are very similar to those from Example \ref{ex2}. This occurs because the expectation of our random boundary conditions $A$ and $B$ is equal to the deterministic boundary conditions of Example \ref{ex2}. In Table \ref{ex4densTable}, we present the errors between two consecutive orders of approximation. The expectation and variance of the solution process $u(x,t)(\omega)$ have been approximated in Table \ref{ex4densE}, based on Theorem \ref{e1}.

\begin{figure}[hbtp!]
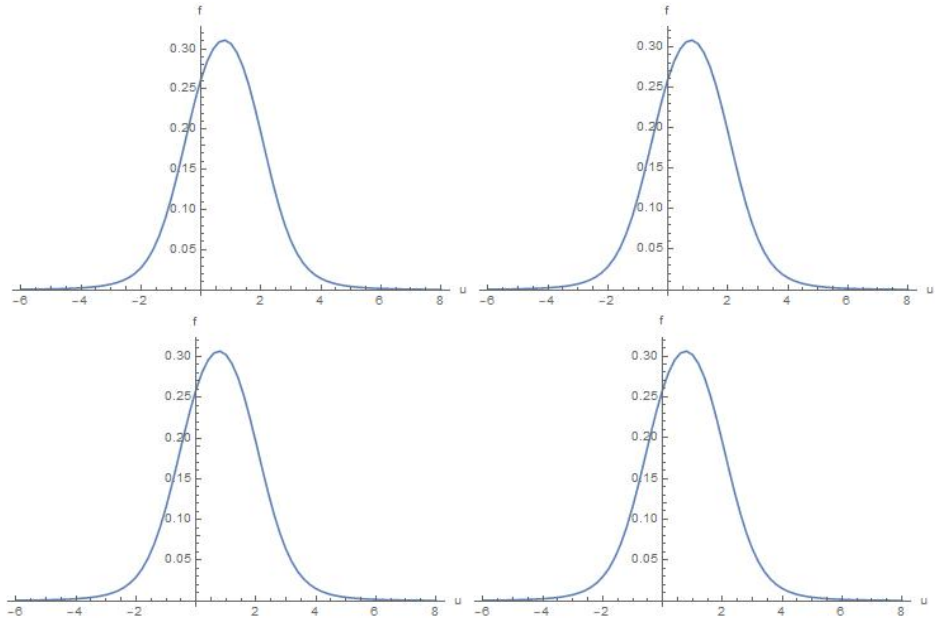

  \begin{center}
    \includegraphics[width=6cm]{ex4_dens_1.jpg}
		\includegraphics[width=6cm]{ex4_dens_2.jpg}
		\includegraphics[width=6cm]{ex4_dens_3.jpg}
		\includegraphics[width=6cm]{ex4_dens_4.jpg}
    \caption{Approximation (\ref{fN}) for $N=1$ (up left), $N=2$ (up right), $N=3$ (down left) and $N=4$ (down right), at $x=1$ and $t=0.1$. Example \ref{ex4}.}
		\label{ex4dens}
    \end{center}
  \end{figure}
	
\begin{table}[hbtp!]
\begin{center}
\begin{tabular}{|c|c|} \hline
$N$ & $\|f_{u_N(1,0.1)}-f_{u_{N+1}(1,0.1)}\|_{\leb^\infty(\mathbb{R})}$ \\ \hline
$1$ & $0.00303628$ \\ \hline
$2$ & $0.00111420$ \\ \hline
$3$ & $0.000685738$ \\ \hline 
\end{tabular}
\caption{Infinity norm of the difference of two consecutive orders of approximation $N$ and $N+1$ given by (\ref{fN}), for $N=1,2,3$. Example \ref{ex4}.}
\label{ex4densTable}
\end{center}
\end{table}

\begin{table}[hbtp!]
\begin{center}
\begin{tabular}{|c|c|c|c|c|} \hline
$N$ & $1$ & $2$ & $3$ & $4$ \\ \hline
$\mathbb{E}[u_N(1,0.1)]$ & $0.766541$ & $0.766546$ & $0.766548$ & $0.766545$  \\ \hline
$\mathbb{V}[u_N(1,0.1)]$ & $1.86628$ & $1.90366$ & $1.91720$ & $1.92552$ \\ \hline
\end{tabular}
\caption{Approximation of $\mathbb{E}[u(1,0.1)]$ and $\mathbb{V}[u(1,0.1)]$ for $N=1,2,3,4$ constructed by (\ref{media_aprox}) and (\ref{varianza_aprox}), respectively, being $f_{u_N(x,t)}$ given by  (\ref{fN}). Example \ref{ex4}.}
\label{ex4densE}
\end{center}
\end{table}

\end{example}

\section{Conclusions}

In this paper we have determined approximations of the probability density function of the solution stochastic process to the randomized heat equation defined on a general bounded interval $[L_1,L_2]$ with non-homogeneous boundary conditions. We have reviewed results in the extant literature that establish conditions under which the probability density of the solution process to the random heat equation defined on $[0,1]$ with homogeneous boundary conditions can be approximated. By relating the solutions of the heat equation with homogeneous and non-homogeneous boundary conditions, and using the Random Variable Transformation technique, we have been able to set hypotheses on the random diffusion coefficient, on the random boundary conditions and on the initial condition process, so that the probability density function of the solution can be approximated uniformly or pointwise (Theorem \ref{super}, Theorem \ref{superllei}, Theorem \ref{super_det} and Theorem \ref{super_det_llei}). We have obtained results on the approximation of the expectation and variance of the solution (Theorem \ref{e1} and Remark \ref{nota_A_B_deterministicos}). 

Our theoretical findings have been applied to particular random heat equation problems on $[L_1,L_2]$ with non-homogeneous boundary conditions. We have dealt with random diffusion coefficients, with deterministic and random boundary conditions, and with initial condition processes having a certain Karhunen-Loève expansion, which may be Gaussian or may not. It has been evinced numerically that the convergence to the density function of the solution is achieved quickly.

\section*{Acknowledgements}
This work has been supported by Spanish Ministerio de Econom\'{i}a y Competitividad grant MTM2017--89664--P. Marc Jornet acknowledges the doctorate scholarship granted by Programa de Ayudas de Investigaci\'on y Desarrollo (PAID), Universitat Polit\`ecnica de Val\`encia.

\section*{Conflict of Interest Statement} 
The authors declare that there is no conflict of interests regarding the publication of this article.

\end{document}